\documentclass[a4paper, 12pt, openright,twoside]{report}

\usepackage{cite}
\usepackage{indentfirst}
\usepackage{fancyhdr}
\usepackage[dvips]{graphicx}
\usepackage[latin1]{inputenc}
\usepackage{newlfont}
\usepackage{amsthm}
\usepackage{amssymb}
\usepackage{mathrsfs}
\usepackage{amsmath}
\usepackage{latexsym}
\usepackage{eucal}
\usepackage{eufrak}
\oddsidemargin=30pt \evensidemargin=20pt
\hyphenation{}
    \theoremstyle{plain}
    \newtheorem{teorema}{Theorem}[section]
    
    \newtheorem{lemma}[teorema]{Lemma}
    \newtheorem{prop}[teorema]{Proposition}
    \theoremstyle{definition}
    
    \theoremstyle{remark}

\pagestyle{fancy}\addtolength{\headwidth}{20pt}

\rhead[\fancyplain{}{\bfseries\leftmark}]{\fancyplain{}{\bfseries\thepage}}
\cfoot{}
\linespread{1.3}

\DeclareMathOperator{\card}{card}
\begin{document}
\begin{titlepage}
\begin{center}
{{\large{\textsc{Alma Mater Studiorum $\cdot$ Universit\`a di
Bologna}}}} \rule[0.1cm]{13.5cm}{0.1mm}
\rule[0.5cm]{13.5cm}{0.6mm}
{\small{\bf DOTTORATO DI RICERCA IN MATEMATICA\\
XXIII Ciclo \\
Settore scientifico disciplinare: MAT/07}}
\end{center}
\vspace{12mm}
\begin{center}
{\large{\bf A MEAN FIELD MODEL FOR}}\\
\vspace{3mm}
{\large{\bf THE COLLECTIVE BEHAVIOUR}}\\ 
\vspace{3mm}
{\large{\bf OF INTERACTING MULTI-SPECIES PARTICLES:}}\\
\vspace{3mm}
{\large{\bf MATHEMATICAL RESULTS AND APPLICATION}}\\ 
\vspace{3mm}
{\large{\bf TO THE INVERSE PROBLEM}}\\
\vspace{19mm} {\large{\bf Presentata da: MICAELA FEDELE}}
\end{center}
\vspace{15mm}
\noindent
\begin{minipage}[t]{0.47\textwidth}
{\large{\bf Coordinatore Dottorato:\\
Chiar.mo Prof.\\
ALBERTO \\ PARMEGGIANI}}
\end{minipage}
\hfill
\begin{minipage}[t]{0.47\textwidth}\raggedleft
{\large{\bf Relatore:\\
Chiar.mo Prof.\\
PIERLUIGI \\CONTUCCI}}
\end{minipage}
\vspace{15mm}
\begin{center}
{\large{\bf Esame finale anno 2011 }}
\end{center}
\vspace{40cm}
\end{titlepage}

\newpage
\pagenumbering{roman} \chapter*{Introduction}
\addcontentsline{toc}{chapter}{Introduction}
\rhead[\fancyplain{}{\bfseries
Introduction}]{\fancyplain{}{\bfseries\thepage}}
\lhead[\fancyplain{}{\bfseries\thepage}]{\fancyplain{}{\bfseries
Introduction}} 

The study of the normalized sum of random variables and its asymptotic behaviour has been
and continues to be a central chapter in probability and statistical mechanics. When those 
variables are independent the central limit theorem ensures that the sum with square-root normalization
converges toward a Gaussian distribution. The main topic of this thesis is the generalization of the central limit theorem for spin random variable whose interaction is described by a multi-species mean-field Hamiltonian. 
Ellis, Newman and Rosen in \cite{ellis1978limit} and \cite{ellis1980limit} describe the distribution of the suitable normalized sums of spins for mean-field model a la Curie-Weiss.
They have found the conditions, in terms of the interaction, that lead in the thermodynamic limit to a Gaussian behaviour and those who lead to a higher order exponential distribution. We prove analogous results for the multi-species mean-field model under the assumption that the Hamiltonian is a convex function of the magnetizations and the first non vanishing partial derivatives of the pressure functional are all the same order (homogeneity hypothesis).
The extension to non convex interactions or the complete classification of
the limiting distribution beyond the homogeneity hypothesis will be object of further
investigations.\\
The thesis handle other two problems concerning the multi-species mean-field model: the computation of the exact solution of the thermodynamic limit of the pressure and the solution of the inverse problem.

The so called Curie-Weiss model \cite{weiss1907hypothese} has been introduced in 1907 by Pierre Weiss in the attempt to describe Pierre Curie's experimental observations of the magnetic behaviour of some metals such as iron and nickel at different temperature (see \cite{curie1895proprietes}). These materials, after having been exposed to an external magnetic field develop a magnetization with the same sign of the field. Curie noted that when the field was then switched off the materials showed two different behaviours depending on the temperature at which the magnetization was induced. If the temperature was below a critical value the materials retained a degree of magnetization, called spontaneous magnetization, whereas they was not capable of doing this when the temperature was greater or equal to the critical value. As temperature approached the critical value from below the spontaneous magnetization vanished abruptly. 

The multi-species mean-field model is a generalization of the Curie-Weiss model in which the particles are partitioned into an arbitrary number of groups. Both the interaction constant and the external field take different values depending on the groups particles belong to. Such models have been introduced since the 50s to reproduce the phase transition of some materials called metamagnets. In particular in \cite{gorter1956transitions, motizuki1959metamagnetism, kincaid1975phase} and \cite{galam1998metamagnets} a bipartite mean-field model is used to approximate a two-sublattice with nearest neighbor and next-nearest neighbor exchanging interactions. The same model has also been used to study the loss of gibbsianness for a system that evolves according to a Glauber dynamics \cite{kulske2007spin}.\\
In recent times the general version of these models have been proposed in the attempt to describe the large scale behaviour of some socio-economic systems \cite{contucci2007modeling}, assuming that individual's decisions depends upon the decisions of others. Multi-species non-interacting spin models are at the basis of the so called discrete choice \cite{mcfadden2001economic} theory used by the Nobel prize McFadden to forecast the choice of some socio-economic agents. The extension of the discrete choice theory to the interacting case was first suggest in \cite{durlauf1999can} and \cite{brock2001discrete}.
The investigation of the model introduced in \cite{contucci2007modeling} has been pursued at a mathematical level in \cite{gallo2008bipartite}. It has been shown the existence of the thermodynamic limit of the pressure exploiting a monotonicity condition on the Gibbs state of the Hamiltonian (see \cite{bianchi2003thermodynamic}). The factorization of the correlation functions has been proved for almost every choice of parameters and the exact solution of the thermodynamic limit is computed whenever the Hamiltonian is a convex function of the magnetizations.
The phenomenological test of the model has been started in \cite{gallo2008parameter} and it is a topic of current investigations.

The thesis is organized as follows. In the first chapter we consider the general version of the mean-field model. At first we compute the exact solution of the thermodynamic limit following Thompson in \cite{thompson1988classical}. Then we present the results of Ellis, Newman and Rosen on the asymptotic behaviour of the distribution of the sum of spins. Finally we discuss in detail the Curie-Weiss model. In particular we show that the critical phases can be evaluated probabilistically analyzing the distribution of the sum of spins in the thermodynamic limit (see \cite{ellis1978statistics}).

Chapter two focus on the multi-species mean-field model. Firstly we compute the exact solution of the thermodynamic limit. We exploit a tails estimation on the number of configurations that share the same value of the vector of the magnetizations. This technique is the same used by Talagrand to compute the thermodynamic limit for the Curie-Weiss model \cite{talagrand2003spin}.\\
Secondly we analyze the limiting thermodynamic behaviour of the probability distribution of the random vector of the sums of spins of each species. In order to use the Ellis, Newman and Rosen method we consider multi-species model whose Hamiltonian is a convex function of the magnetizations. Under this assumption we show that, when the system reaches its thermodynamic limit, the probability distribution of the suitable normalized random vector of the sums of spins converges to a non-trivial random variable. The behaviour of this random variable crucially depends upon the nature of the maxima points of a function $f$ associated to the model. If the function $f$ admits a unique global maximum point and the determinant of its Hessian matrix computed in this point is different from zero, the suitable normalized random vector of the sums of spins converges to a multivariate Gaussian whose covariance matrix can be compute from the mean-field equations. If the function $f$ admits a unique global maximum point in which the determinant of its Hessian matrix vanishes and the order of the first partial derivatives of $f$ different from zero is homogeneous, the suitable normalized random vector of the sums of spins converges to an higher order exponential distribution. The order of this distribution is those of the first partial derivatives of $f$ different from zero. When the function $f$ has more than one maximum point we obtain a similar result whenever the random vector of the magnetizations is close enough to one of the maximum points. The proof of this last result is quite different respect those used to prove an analogous statement in \cite{ellis1980limit}. The reason is that for the multi-species model the Legendre transformation is not a useful tool. We follows the proof used in \cite{ellis1990limit} to compute a limit theorem with conditioning for the Curie-Weiss-Potts model. 

In the last chapter we solve the inverse problem for the Curie-Weiss model as well as for its multi-species generalization. The inverse problem is a procedure to obtain the parameters of a Boltzmann-Gibbs's distribution by knowing average value and correlations of the spins at the thermodynamic equilibrium. We solve it observing that the susceptibility matrix can be written as function of the interaction parameters as well as function of the average value and the correlations of the spins. Once the interaction parameters are computed, the field is obtained inverting the mean-field equations. The inverse problem for the mean-field approximation of the Ising model is solved in \cite{tanaka1998mean, roudi2009statistical} and \cite{roudi2009ising}.
We also show how to use this method to obtain the maximum likelihood estimator of the parameters of the two models from a suitable sample of empirical data.

\clearpage{\pagestyle{empty}\cleardoublepage} 

 \tableofcontents
\rhead[\fancyplain{}{\bfseries\leftmark}]{\fancyplain{}{\bfseries\thepage}}
\lhead[\fancyplain{}{\bfseries\thepage}]{\fancyplain{}{\bfseries
Index}} \clearpage{\pagestyle{empty}\cleardoublepage}

\pagenumbering{arabic}

\chapter{The Mean-Field Model}
\lhead[\fancyplain{}{\bfseries\thepage}]{\fancyplain{}{\bfseries\rightmark}}
In this chapter after defining the more general case of mean-field model we compute the limit for large $N$ of the pressure function and of the di\-stribution of the normalized sum of spins. As an example, we analyze the Curie-Weiss model.

\section{The model}
We consider a system composed of $N$ particles that interact with each other and with an external magnetic field. In particular the interaction between two spins is independent from their distance.  Such system is defined by the Hamiltonian:
\begin{equation}\label{Hamiltoniana.1}
H_{N}(\boldsymbol{\sigma})=-\frac{J}{2N}\sum_{i,j=1}^{N}\sigma_{i}\sigma_{j}-h\sum_{i=1}^{N}\sigma_{i}
\end{equation}\\
\noindent where $\sigma_{i}$ is the spin of the particle $i$, the parameter $J>0$ is the coupling constant and $h$ is the value of the magnetic field. The probability of a configuration of spins $\boldsymbol{\sigma}=(\sigma_{1},\dots,\sigma_{N})$ is given by the measure of Boltzmann-Gibbs:
\begin{equation}\label{misura.BG.1}
P_{N,J,h}\{\boldsymbol{\sigma}\}=\frac{\exp(- H_{N}(\boldsymbol{\sigma}))}{Z_{N}(J,h)}\prod_{i=1}^{N}d\rho(\sigma_{i})
\end{equation}
\noindent where $Z_{N}(J,h)$ is the canonical partition function:
\begin{equation*}
Z_{N}(J,h)=\int_{\mathbb{R}^{N}}\exp(-H_{N}(\boldsymbol{\sigma}))\prod_{i=1}^{N}d\rho(\sigma_{i})
\end{equation*}
\noindent and $\rho$ is the distribution of a single spin in the absence of interaction with the other spins. In (\ref{misura.BG.1}) we do not write the inverse temperature $\beta$ because we consider it inclosed in the parameters of the Hamiltonian. We assume that $\rho$ is a non-degenerate Borel probability measure on $\mathbb{R}$ and satisfies\\
\begin{equation}\label{condizione.ro}
 \int_{\mathbb{R}}\exp\bigg(\frac{ax^{2}}{2}+bx\bigg)d\rho(x)<\infty\quad\quad\text{all}\;a,b\in\mathbb{R},\;a>0.
\end{equation}\\
It is easy to check that the measure 
\begin{equation*}
\bar{\rho}(x)=\frac{1}{2}\Big(\delta(x-1)+\delta(x+1)\Big)
\end{equation*}
where $\delta(x-x_{0})$, with $x_{0}\in\mathbb{R}$ denotes the unit point mass with support at $x_{0}$, verified the condition (\ref{condizione.ro}). The model defined by the Hamiltonian (\ref{Hamiltoniana.1}) and the distribution (\ref{misura.BG.1}) whit $\rho=\bar{\rho}$ is called model of Curie-Weiss. We explain this model in detail at the end of the chapter.\\
Considered the distribution (\ref{misura.BG.1}), it is possible to compute the expected value of any observable $\psi(\boldsymbol{\sigma})$ of interest 
\begin{equation}\label{stato.Gibbs}
\langle \psi(\boldsymbol{\sigma})\rangle_{BG}=\dfrac{\displaystyle{\int_{\mathbb{R}^{N}}}\psi(\boldsymbol{\sigma})\exp(- H_{N}(\boldsymbol{\sigma}))\prod_{i=1}^{N}d\rho(\sigma_{i})}{\displaystyle{\int_{\mathbb{R}^{N}}}\exp(-H_{N}(\boldsymbol{\sigma}))\prod_{i=1}^{N}d\rho(\sigma_{i})}.
\end{equation}
The expected value $\langle \psi(\boldsymbol{\sigma})\rangle_{BG}$ is called Gibbs state of the observable $\psi(\boldsymbol{\sigma})$.
The main observable of the mean-field model is the magnetization $m_{N}(\boldsymbol{\sigma})$ of a configuration $\boldsymbol{\sigma}$:
\begin{equation*}
m_{N}(\boldsymbol{\sigma})=\frac{1}{N}\sum_{i=1}^{N}\sigma_{i}.
\end{equation*}
We claim that since $\sum_{i,j}\sigma_{i}\sigma_{j}=(\sum_{i}\sigma_{i})^{2}$, the Hamiltonian (\ref{Hamiltoniana.1}) can be written as function of the magnetization:
\begin{equation}\label{Hamiltoniana.2}
H_{N}(\boldsymbol{\sigma})=-N\bigg(\frac{J}{2}m_{N}^{2}(\boldsymbol{\sigma})+hm_{N}(\boldsymbol{\sigma})\bigg).
\end{equation}
Rather then compute directly the Gibbs state (\ref{stato.Gibbs}) of $m_{N}(\boldsymbol{\sigma})$ we consider the pressure function associated to the model:
\begin{equation*}
p_{N}(J,h)=\frac{1}{N}\ln Z_{N}(J,h).
\end{equation*}
The reason is that the Gibbs state of the magnetization can be obtained differentiating $p_{N}(J,h)$ with respect to $h$
\begin{equation*}
\frac{\partial p_{N}}{\partial h}=\dfrac{\displaystyle{\int_{\mathbb{R}^{N}}} m_{N}(\boldsymbol{\sigma})\exp(- H_{N}(\boldsymbol{\sigma}))\prod_{i=1}^{N}d\rho(\sigma_{i})}{\displaystyle{\int_{\mathbb{R}^{N}}}\exp(-H_{N}(\boldsymbol{\sigma}))\prod_{i=1}^{N}d\rho(\sigma_{i})}=\langle m_{N}(\boldsymbol{\sigma})\rangle_{BG}.
\end{equation*}

\section{Thermodynamic limit}
A statistical mechanical model is well posed if the Hamiltonian is an intensive quantity of the number of spins. This property is verified if the pressure function $p_{N}(J,h)$, associated to the model, admits limit as $N\rightarrow\infty$. 
The existence of the thermodynamic limit of $p_{N}(J,h)$ for the model defined by Hamiltonian (\ref{Hamiltoniana.1}) and distribution (\ref{misura.BG.1}) follows from a subadditivity property of the logarithm of the partition function (see \cite{barra2008mean}).\\ 
In this section we compute the exact solution of this limit. We have to calculate the partition function $Z_{N}(J,h)$. 
Using the expression of the Hamiltonian (\ref{Hamiltoniana.2}) we can write the partition function in the form: 
\begin{equation*}
Z_{N}(J,h)=\int_{\mathbb{R}}\exp\bigg(N\bigg(\frac{J}{2}m^{2}+hm\bigg)\bigg)d\nu_{m_{N}}(m) 
\end{equation*}\\
\noindent where $\nu_{m_{N}}$ denotes the distribution of $m_{N}(\boldsymbol{\sigma})$ on $(\mathbb{R}^{N},\prod_{i=1}^{N}\rho(\sigma_{i}))$.\\

\noindent Since $J>0$, by a Gaussian transform the following identity holds:\\
\begin{equation}\label{trasformazione.gaussiana}
 \exp\bigg(\frac{N J}{2}m^{2}\bigg)=\bigg(\frac{N J}{2\pi}\bigg)^{\frac{1}{2}}\int_{\mathbb{R}}\exp\bigg(NJ\Big(xm-\frac{1}{2}x^{2}\Big)\bigg)dx.
\end{equation}\\
By (\ref{trasformazione.gaussiana}) we can write:
\begin{align*}
 Z_{N}(J,h)&=\bigg(\frac{N J}{2\pi}\bigg)^{\frac{1}{2}}\!\!\iint_{\mathbb{R}^{2}}\!\!\!\exp\bigg(N J\Big(xm-\frac{1}{2}x^{2}\Big)+N h m\bigg)d\nu_{m_{N}}(m)dx\nonumber\\
&=\bigg(\frac{N J}{2\pi}\bigg)^{\frac{1}{2}}\!\!\!\int_{\mathbb{R}}\!\exp\bigg(\!-\frac{N J}{2}x^{2}\bigg)\!\int_{\mathbb{R}}\exp\Big(N m(Jx+ h)\Big)d\nu_{m_{N}}(m)dx
\end{align*}
where
\begin{align*}
\int_{\mathbb{R}}\exp\Big(N m(Jx+h)\Big)d\nu_{m_{N}}(m) &=\int_{\mathbb{R}^{N}}\exp\bigg(\sum_{i=1}^{N}\sigma_{i}(Jx+h)\bigg)\prod_{i=1}^{N}d\rho(\sigma_{i})\nonumber\\
&=\prod_{i=1}^{N}\int_{\mathbb{R}}\exp\Big(\sigma_{i}(Jx+h)\Big)d\rho(\sigma_{i}).
\end{align*}
Thus integrating over the spins we obtain
\begin{equation*}
  Z_{N}(J,h)=\bigg(\frac{N J}{2\pi}\bigg)^{\frac{1}{2}}\int_{\mathbb{R}}\exp(Nf(x))dx
\end{equation*}
where
\begin{equation}\label{funzionale.pressione}
 f(x)=-\frac{ J}{2}x^{2}+\ln\int_{\mathbb{R}}\exp( s(Jx +h ))d\rho(s).
\end{equation}\\
The condition (\ref{condizione.ro}) on the measure $\rho$ assures that the integral of the expression (\ref{funzionale.pressione}) is finite for each $x\in\mathbb{R}$. Indeed, if $x<0$ choosing $a= J$ and $b= h$ in (\ref{condizione.ro}) we have
\begin{equation*}
\int_{\mathbb{R}}\!\exp( s(Jx +h))d\rho(s)\!<2\!\!\int_{\mathbb{R}}\!\!\exp\bigg(\frac{ J}{2} s^{2} + h s\bigg)d\rho(s)+\!\!\int_{2x}^{0}\!\!\exp( s(Jx +h ))d\rho(s)\!<\!\infty
\end{equation*}\\
If $x>0$ choosing $a= Jx$ and $b= h$ in (\ref{condizione.ro}) we have\\
\begin{equation*}
\int_{\mathbb{R}}\!\exp( s(Jx +h))d\rho(s)\!<\!2\!\!\int_{\mathbb{R}}\!\!\exp\bigg(\!\frac{ Jx}{2} s^{2} + h s\!\bigg)d\rho(s)+\!\!\int_{0}^{2}\!\!\exp( s(Jx +h ))d\rho(s)\!<\!\infty
\end{equation*}\newpage
We can state the following
\begin{prop}\label{proposizione.f}
	Let $f(x)$ be the function defined by (\ref{funzionale.pressione}). Then
	\begin{enumerate}
		\item $f(x)$ is a real analytic function that tends to $-\infty$ as $|x|\rightarrow\infty$;
		\item $f(x)$ admits a finite number of global maximum points;
		\item for any positive $N\in\mathbb{N}$
		\begin{equation}\label{proprieta.f}
		\int_{\mathbb{R}}\exp(Nf(x))dx<\infty
		\end{equation}
		\item if $\mu$ is a global maximum point of $f(x)$
		\begin{equation}\label{proprieta.f.2}
		\lim_{N\rightarrow\infty}\frac{1}{N}\ln\int_{\mathbb{R}}\exp(Nf(x))dx=f(\mu)
		\end{equation}
	\end{enumerate}	
\end{prop}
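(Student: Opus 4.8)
The plan is to treat the four assertions in order, extracting from hypothesis (\ref{condizione.ro}) a uniform Gaussian-type upper bound on $f$ that then drives everything else.

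First, for analyticity I would show that the Laplace transform $\phi(z)=\int_{\mathbb{R}}\exp(zs)\,d\rho(s)$ extends to an entire function. Writing $z=u+iv$ one has $|\exp(zs)|=\exp(us)$, and the elementary bound $\exp(us)\le\exp(\tfrac{a}{2}s^{2}+us)$ together with (\ref{condizione.ro}) (any $a>0$, $b=u$) shows $\int_{\mathbb{R}}\exp(us)\,d\rho(s)<\infty$ for every $u\in\mathbb{R}$; hence $\phi$ converges absolutely on all of $\mathbb{C}$, and differentiation under the integral sign (or Morera's theorem) makes it holomorphic. Consequently $f(x)=-\tfrac{J}{2}x^{2}+\ln\phi(Jx+h)$ is real analytic, since $\phi(Jx+h)>0$. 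For the decay at infinity I would complete the square: from $s(Jx+h)-\tfrac{a}{2}s^{2}\le\tfrac{(Jx+h)^{2}}{2a}$ one gets
\begin{equation*}
\ln\phi(Jx+h)\le\frac{(Jx+h)^{2}}{2a}+\ln\int_{\mathbb{R}}\exp\Big(\frac{a}{2}s^{2}\Big)d\rho(s),
\end{equation*}
the last integral being finite by (\ref{condizione.ro}). Choosing $a>J$ makes the quadratic coefficient of $f$ equal to $\tfrac{J}{2}\big(\tfrac{J}{a}-1\big)<0$, so $f(x)\le -c\,x^{2}+C|x|+C'$ with $c>0$, whence $f(x)\to-\infty$ as $|x|\to\infty$.

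For the finiteness of the global maxima I would note that $f$, being continuous with $f\to-\infty$, attains its maximum $M$, and that the maximizer set $\{x:f(x)=M\}$ is bounded (outside a large ball $f<M$). This set is the zero set of the analytic, non-constant function $f-M$; by the identity theorem its zeros are isolated, and an isolated subset of a compact set is finite. The integrability (\ref{proprieta.f}) is then immediate from the same quadratic bound: $\exp(Nf(x))\le\exp(NC')\exp(-Nc\,x^{2}+NC|x|)$ is integrable for every $N$. For the Laplace limit (\ref{proprieta.f.2}) I would prove matching bounds. The lower bound uses continuity at a maximizer $\mu$: on some $[\mu-\delta,\mu+\delta]$ one has $f>M-\epsilon$, so $\int_{\mathbb{R}}\exp(Nf)\ge 2\delta\exp(N(M-\epsilon))$ and $\liminf\tfrac1N\ln\int_{\mathbb{R}}\exp(Nf)\ge M-\epsilon$ for all $\epsilon>0$. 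The upper bound splits $Nf=(N-1)f+f\le(N-1)M+f$, giving $\int_{\mathbb{R}}\exp(Nf)\le\exp((N-1)M)\int_{\mathbb{R}}\exp(f)$, where $\int_{\mathbb{R}}\exp(f)<\infty$ is the $N=1$ case just established; dividing by $N$ and letting $N\to\infty$ yields $\limsup\le M=f(\mu)$.

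The main obstacle is the decay statement in item~1: a naive linear bound on $\ln\phi$ only works for compactly supported $\rho$, and for heavy-but-admissible tails (a measure sitting at the edge of Gaussian decay) the sign of the leading quadratic coefficient of $f$ is exactly what is at stake. The crucial use of the hypothesis is that (\ref{condizione.ro}) is assumed for \emph{every} $a>0$, so one is free to take $a>J$ and beat the $-\tfrac{J}{2}x^{2}$ term. Once this quadratic control is secured, items~2--4 are routine, the entireness of $\phi$ being the only other point requiring a little care.
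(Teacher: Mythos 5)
Your proof is correct, and the overall architecture (analyticity $\Rightarrow$ decay $\Rightarrow$ compact maximizer set $\Rightarrow$ finiteness; then integrability; then matching upper/lower Laplace bounds) matches the paper's. The execution differs in two places, and in both your route is the more economical one. For the decay in item~1, the paper splits the integral $\int\exp(s(Jz+h))\,d\rho(s)$ at $|s|=L$, bounds the tail via $|Jsz|\le\tfrac{J}{2}(s^{2}+|z|^{2})$ with $a=J$ \emph{exactly}, and concludes that the whole expression is $o(\exp(J|z|^{2}/2))$ because the tail mass $\int_{|s|>L}\exp(\tfrac{J}{2}s^{2}+hs)\,d\rho(s)$ can be made small by taking $L$ large; the conclusion $f\to-\infty$ then rests on this little-$o$. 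You instead complete the square with $a>J$ \emph{strictly}, which condition (\ref{condizione.ro}) permits, and obtain the quantitative bound $f(x)\le -c\,x^{2}+C|x|+C'$ with $c>0$. This is cleaner and buys you item~3 for free: the paper proves (\ref{proprieta.f}) by induction on $N$, with an explicit Fubini--Gaussian computation of $\int e^{f}$ at the base case $N=1$, whereas your Gaussian domination gives $\int e^{Nf}<\infty$ for every $N$ in one line. (The paper's base-case computation is not wasted effort in its own context -- the identity $\int e^{f}=\sqrt{2\pi/J}\int\exp(\tfrac{J}{2}s^{2}+hs)\,d\rho(s)$ is reused in Lemma~\ref{lemma.3} -- but it is not needed for the proposition itself.) Items~2 and~4 of your argument coincide with the paper's: same maximizing-sequence/identity-theorem reasoning, and the same upper bound via $Nf\le(N-1)F+f$, which is exactly the paper's observation that $I_{N}\le I_{1}$.
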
\vspace{0.5cm}

\begin{proof}
	For complex $z$ and $L>0$ we have
	\begin{multline}\label{dim.1.passaggio.1}
	\bigg|\int_{\mathbb{R}}\exp( s(Jz+h))d\rho(s)\bigg|\leq\int_{|s|\leq L}\exp(| s(Jz+h)|)d\rho(s)\\+\int_{|s|> L}\exp(| Jsz|)		\exp( hs)d\rho(s)
	\end{multline}
	where 
	\begin{equation}\label{dim.1.passaggio.2}
	\int_{|s|\leq L}\exp(| s(Jz+h)|)d\rho(s)\leq\rho([-L,L])\exp( L|Jz+h|)
	\end{equation}
        and
	\begin{align}\label{dim.1.passaggio.3}
	\int_{|s|> L}\!\!\!\!\!\!\exp(| Jsz|)\exp( hs)d\rho(s) &\!\leq\!\!\int_{|s|>L}\!\!\!\!\!\exp\bigg(\frac{ J}{2}\Big(s^{2}+|z|^{2}\Big)\!		\bigg)\exp( h s)d\rho(s)\nonumber\\
	&\!=\!\exp\!\bigg(\frac{ J}{2}|z|^{2}\bigg)\!\!\int_{|s|>L}\!\!\!\!\!\exp\bigg(\frac{ J}{2}s^{2}+ h s\bigg)d\rho(s).
	\end{align}
	By inequalities (\ref{dim.1.passaggio.2}) and (\ref{dim.1.passaggio.3}) and the condition (\ref{condizione.ro}) on the measure $\rho$, the expression (\ref{dim.1.passaggio.1}) has order $o(\exp( J|z|^{2}/2))$ as $|z|\rightarrow\infty$, thus the function $f$ is real analytic and tends to $-\infty$ as $|x|\rightarrow\infty$.\\
	
	To prove the second statement we take a sequence $x_{l}$ of $\mathbb{R}$ such that 
	\begin{equation*}
	\lim_{l\rightarrow\infty}f(x_{l})=\sup_{x\in\mathbb{R}}f(x)=L\leq\infty.
	\end{equation*}
	Since $f(x)$ tends to $-\infty$ as $|x|\rightarrow\infty$, the sequence $x_{l}$ is bounded. Thus we can take a 			subsequence $x_{k_{l}}$ that tends to $x_{0}\in\mathbb{R}$ as $l\rightarrow\infty$. Hence by continuity of $f$ we 		have
	\begin{equation*}
	f(x_{0})=\lim_{l\rightarrow\infty}f(x_{k_{l}})=\sup_{x\in\mathbb{R}}f(x).
	\end{equation*}
	This shows that $f$ must have at least one maximum point. Since $f(x)\rightarrow-\infty$ as $|x|\rightarrow\infty$ the 	point $x_{0}$ and other possible global maximum points must belong to a compact set. The analyticity of $f$ ensures 	that inside this set the global maximum points are finite in number.\\
	
	We prove the statement (\ref{proprieta.f}) by induction on $N$. For $N=1$ we have:
	\begin{align}\label{dim.1.passaggio.4}
	\int_{\mathbb{R}}e^{f(x)}dx &=\iint_{\mathbb{R}^{2}}\exp\bigg(-\frac{ J}{2}x^{2}+ s(Jx+h)\bigg)d\rho(s)dx\nonumber\\
	&=\iint_{\mathbb{R}^{2}}\exp\bigg(-\frac{ J}{2}(x-s)^{2}\bigg)\exp\bigg(\frac{ J}{2}s^{2}+ h s\bigg)d\rho(s)dx\nonumber\\
	&=\bigg(\frac{2\pi}{ J}\bigg)^{\frac{1}{2}}\int_{\mathbb{R}}\exp\bigg(\frac{ J}{2}s^{2}+ h s\bigg)d\rho(s).
	\end{align}
	Since the integral on the left-hand side of (\ref{dim.1.passaggio.4}) is finite by the condition (\ref{condizione.ro}) on the measure $\rho$ (with $a= J$ and $b= h$) the result (\ref{proprieta.f}) is proved for $N=1$.
	Supposed true the inductive hypothesis 
	\begin{equation}\label{ipotesi.induttiva}
	 \int_{\mathbb{R}}e^{(N-1)f(x)}dx<\infty
	\end{equation}
	and defined $F=\max\{f(x)|x\in\mathbb{R}\}$, the statement follows because
	\begin{equation}\label{dim.1.passaggio.5}
	\int_{\mathbb{R}}e^{Nf(x)}dx\leq e^{F}\int_{\mathbb{R}}e^{(N-1)f(x)}dx
	\end{equation}
where the right-hand side of (\ref{dim.1.passaggio.5}) is finite by the inductive hypothesis (\ref{ipotesi.induttiva}).\\

	To prove the statement (\ref{proprieta.f.2}) we write 
	\begin{equation*}
	\int_{\mathbb{R}}\exp(Nf(x))dx=e^{Nf(\mu)}I_{N}  
	\end{equation*}
	where 
	\begin{equation*}
	I_{N}=\int_{\mathbb{R}}\exp(N(f(x)-f(\mu)))dx\;.
	\end{equation*}\\
	Since $f(x)-f(\mu)\leq 0$, the integral $I_{N}$ is a decreasing function of $N$.\\
	Thus
	\begin{equation*}
	\ln\int\exp(Nf(x))dx\leq Nf(\mu)+\ln I_{1}
	\end{equation*}
	hence we obtain 
	\begin{equation}\label{dim.1.passaggio.6}
	\lim_{N\rightarrow\infty}\frac{1}{N}\ln\int\exp(Nf(x))dx\leq f(\mu).
	\end{equation}\\
	By continuity of the function $f$, given any $\epsilon>0$, there exists $\delta_{\epsilon}>0$ such that as $|x-\mu|<\delta_{\epsilon}$ we have $f(x)-f(\mu)>-\epsilon$.\\ 
	Thus
	\begin{equation*}
	I_{N}\geq\int_{\mu-\delta_{\epsilon}}^{\mu+\delta_{\epsilon}}\exp(N(f(x)-f(\mu)))dx>2\delta_{\epsilon}e^{-N\epsilon}
	\end{equation*}
	and in the limit
	\begin{equation}\label{dim.1.passaggio.7}
	\lim_{N\rightarrow\infty}\frac{1}{N}\ln\int\exp(Nf(x))dx\geq f(\mu)-\epsilon.
	\end{equation}
	Since $\epsilon$ is arbitrary the statement (\ref{proprieta.f.2}) follows from the inequalities (\ref{dim.1.passaggio.6}) and (\ref{dim.1.passaggio.7}).
\end{proof}\vspace{0.5cm}
The proposition \ref{proposizione.f} implies that in the thermodynamic limit
\begin{align*}
\lim_{N\rightarrow\infty}p_{N}(J,h)&=\lim_{N\rightarrow\infty}\bigg(\frac{1}{2N}\ln\bigg(\frac{ JN}{2\pi}\bigg)+\frac{1}{N}\ln\int_{\mathbb{R}}\exp(Nf(x))dx\bigg)\nonumber\\
&=\max_{x\in\mathbb{R}}f(x).
\end{align*} 
Differentiating $f(x)$ with respect to $x$ and looking for the values that va\-nishied the derivative we find the following condition:\\
\begin{equation}\label{condizione.estremale}
x=\dfrac{\displaystyle{\int_{\mathbb{R}}}s\exp\Big( s (Jx+h)\Big)d\rho(s)}{\displaystyle{\int_{\mathbb{R}}}\exp\Big( s (Jx+h)\Big)d\rho(s)}.
\end{equation}\\
Evidently a maximum point of $f$ must satisfy the condition (\ref{condizione.estremale}). Let $\mu$ be a global maximum point of $f$. If we differentiate the thermodynamic limit of $p_{N}$ with respect to $h$ we find:
\begin{align*}
\frac{\partial}{\partial h}\Big(\lim_{N\rightarrow\infty}p_{N}(J,h)\Big) &= - J\mu\frac{\partial \mu}{\partial h}+\Big( J\frac{\partial \mu}{\partial h}+1\Big)\dfrac{\displaystyle{\int_{\mathbb{R}}}s\exp( s (J\mu +h))d\rho(s)}{\displaystyle{\int_{\mathbb{R}}}\exp( s (J\mu+h))d\rho(s)}\notag\\
&=- J\mu\frac{\partial \mu}{\partial h}+ J\mu\frac{\partial \mu}{\partial h}+\mu\notag\\
&=\mu.
\end{align*}
Thus $\mu$ is precisely the magnetization of the system in the thermodynamic limit.

\section{Asymptotic behaviour of the sum of spins}
The study of the normalized sum of random variables and its asymptotic behaviour is
a central chapter in probability and statistical mechanics. When those 
variables are independent the central limit theorem ensures that the sum with square-root normalization
converges toward a Gaussian di\-stribution. Spins whose interaction is described by Hamiltonian (\ref{Hamiltoniana.1}) and distribution (\ref{misura.BG.1}) are not independent random variables, thus the central limit theorem can't help us to understand the behaviour of their sum
\begin{equation*}
S_{N}(\boldsymbol{\sigma})=\sum_{i=1}^{N}\sigma_{i}.
\end{equation*}
The generalization of the central limit theorem to this type of random variable was performed by Ellis, Newman and Rosen \cite{ellis1978limit, ellis1980limit}.\\ 
They found that the behaviour in the thermodynamic limit of the probability distribution of $S_{N}(\boldsymbol{\sigma})$ depends on the number and the type of the maxima points of the functional $f$ given by (\ref{funzionale.pressione}).\\

Before stating and proving these results  we have to clarify the meaning of type of a maximum point.\\ 

Let $\mu_{1},\dots,\mu_{P}$ be the global maxima points of the function $f$ defined in (\ref{funzionale.pressione}). For each $p$ there exists a positive integer $k_{p}$ and a negative real number $\lambda_{p}$ such that around $\mu_{p}$ we can write:\\
\begin{equation*}
f(x)=f(\mu_{p})+\lambda_{p}\frac{(x-\mu_{p})^{2k_{p}}}{(2k_{p})!}+o((x-\mu_{p})^{2k_{p}}).
\end{equation*}\\
The numbers $k_{p}$ and $\lambda_{p}$ are called, respectively, the type and the strength of the maximum point $\mu_{p}$. We define the maximal type $k^{*}$ of the function $f$ as the largest of the $k_{p}$. Define the function
\begin{equation*}
B(x;y)=f(x+y)-f(y).
\end{equation*} 
For each $p=1,\dots,P$ there exists $\delta_{p}>0$ sufficiently small such that for $|x|<\delta_{p} N^{1/2k}$ as $N\rightarrow\infty$\\
\begin{align}\label{proprieta.B}
NB\Big(\frac{x}{N^{1/2k}};\mu_{p}\Big)&=\frac{\lambda }{(2k)!}x^{2k}+o(1)P_{2k}(x)\nonumber\\ \\
NB\Big(\frac{x}{N^{1/2k}};\mu_{p}\Big)&\leq\frac{1}{2}\frac{\lambda }{(2k)!}x^{2k}+P_{2k+1}(x)\nonumber
\end{align}\\
where $P_{2k}(x)$ is a polynomial of $2k$ degree and $P_{2k+1}$ is a polynomial of $2k+1$ degree.
\newpage
Normalizing $S_{N}(\boldsymbol{\sigma})$ by the total number of spins we obtain the magnetization. Its behaviour in the thermodynamic limit is specified by the following
\begin{teorema}\label{teorema.1}
        Let $\mu_{1},\dots,\mu_{P}$ be the global maximum points of maximal type $k^{*}$ of the function $f(x)$ given by (\ref{funzionale.pressione}). Let $\lambda_{1},\dots,\lambda_{p}$ be respectively the strengths of the maximum points.
        Then as $N\rightarrow\infty$\\
	\begin{equation*}
	m_{N}(\boldsymbol{\sigma})\overset{\mathscr{D}}{\rightarrow}\dfrac{\sum\limits_{p=1}^{P}b_{p}\;\delta(x-\mu_{p})}{\sum\limits_{p=1}^{P}b_{p}}
	\end{equation*}
	where $b_{p}=\lambda_{p}^{-1/2k^{*}}$.
\end{teorema}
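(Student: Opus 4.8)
The plan is to reduce the computation of $\langle g(m_N)\rangle_{BG}$, for an arbitrary bounded continuous test function $g$, to a Laplace-type asymptotic analysis of integrals against $e^{Nf}$, i.e.\ exactly the object already controlled in Proposition~\ref{proposizione.f}. Recall that convergence in distribution to the stated discrete measure is equivalent to showing, for every such $g$, that $\langle g(m_N)\rangle_{BG}\to\big(\sum_p b_p\,g(\mu_p)\big)/\big(\sum_p b_p\big)$, so it suffices to produce this limit.

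First I would insert the Gaussian transform (\ref{trasformazione.gaussiana}) into both numerator and denominator of the Gibbs average, keeping the factor $g(m_N)$ in the numerator. Once the auxiliary variable $x$ is frozen, the spins are independent and identically distributed under the tilted single-spin measure $d\rho_x(s)\propto\exp(s(Jx+h))\,d\rho(s)$, whose mean is precisely the right-hand side of the stationarity condition (\ref{condizione.estremale}); call it $m(x)$. Factorizing over the spins then yields the key identity
\begin{equation*}
\langle g(m_N)\rangle_{BG}=\frac{\int_{\mathbb{R}}G_N(x)\exp(Nf(x))\,dx}{\int_{\mathbb{R}}\exp(Nf(x))\,dx},\qquad G_N(x):=\mathbb{E}_{\rho_x^{\otimes N}}\big[g(m_N)\big],
\end{equation*}
with $f$ the functional (\ref{funzionale.pressione}). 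By the weak law of large numbers for the i.i.d.\ tilted variables, $G_N(x)\to g(m(x))$; and since a maximum point $\mu_p$ satisfies $f'(\mu_p)=0$, that is $m(\mu_p)=\mu_p$ by (\ref{condizione.estremale}), one gets $G_N(x)\to g(\mu_p)$ as $x\to\mu_p$.

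Next I would run the Laplace analysis. Splitting each integral over the union of the neighbourhoods $(\mu_p-\delta,\mu_p+\delta)$ and a remainder, the remainder is negligible: there $f\le F-c(\delta)$ with $F=\max f$, so by the bound (\ref{dim.1.passaggio.5}) together with (\ref{proprieta.f}) its contribution is $O(e^{(N-1)(F-c)})$, while the denominator is at least of order $e^{NF}N^{-1/2k^{*}}$. On each neighbourhood I would rescale $x=\mu_p+uN^{-1/2k_p}$ and use the expansion (\ref{proprieta.B}), which gives $\exp(Nf(x))\sim e^{NF}\exp\!\big(\lambda_p u^{2k_p}/(2k_p)!\big)$ with Jacobian $N^{-1/2k_p}$; replacing $G_N$ by its limit $g(\mu_p)$ and integrating $u$ over $\mathbb{R}$ produces a factor $c_{k_p}\,|\lambda_p|^{-1/2k_p}$, where $c_{k_p}=\int_{\mathbb{R}}\exp(-v^{2k_p}/(2k_p)!)\,dv$ depends only on the type. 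Since $N^{-1/2k_p}=o(N^{-1/2k^{*}})$ whenever $k_p<k^{*}$, only the maxima of maximal type survive in the ratio, the constant $c_{k^{*}}$ cancels between numerator and denominator, and the normalized weights $b_p=|\lambda_p|^{-1/2k^{*}}$ of the statement emerge exactly as claimed.

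The main obstacle I anticipate is not the bookkeeping but the uniform control of $G_N$ over the shrinking windows: I must show that $G_N(\mu_p+uN^{-1/2k_p})\to g(\mu_p)$ uniformly enough on the region that actually carries the mass, so that replacing $G_N$ by $g(\mu_p)$ costs only $o(1)$ after integration in $u$. This requires a concentration estimate for $m_N$ under the tilted product measure $\rho_x^{\otimes N}$ that is uniform as $x$ runs through the window (equivalently, continuity of $x\mapsto m(x)$ combined with a dominated-convergence argument that exploits $\|g\|_\infty$ to absorb the polynomial tails permitted by the second line of (\ref{proprieta.B})). Establishing this uniformity, and verifying rigorously that the sub-maximal-type peaks are genuinely of lower order, are the two points demanding the most care; everything else follows from Proposition~\ref{proposizione.f}.
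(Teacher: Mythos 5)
Your argument is correct, and it reaches the Laplace-asymptotic core of the problem by a genuinely different reduction than the paper's. The paper does not represent $\langle \phi(m_N)\rangle_{BG}$ directly; instead it introduces an auxiliary independent Gaussian $W\sim N(0,J^{-1})$ and shows (Lemma \ref{lemma.2} with $\gamma=0$, $m=0$) that $W/\sqrt{N}+m_N$ has \emph{exactly} the law $e^{Nf(x)}dx/\int e^{Nf}$, so the test function lands on the integration variable $x$ with no extra factor; the convergence of $m_N$ itself is then recovered via Lemma \ref{lemma.1} together with the observation that $W/\sqrt{N}\to 0$. You instead keep $g(m_N)$ on the spin side of the Gaussian transform (\ref{trasformazione.gaussiana}) and obtain the exact identity $\langle g(m_N)\rangle_{BG}=\int G_N(x)e^{Nf(x)}dx\big/\int e^{Nf(x)}dx$ with $G_N(x)=\mathbb{E}_{\rho_x^{\otimes N}}[g(m_N)]$. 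This buys you independence from the convolution device, but at the price of the extra object $G_N$, which the paper's route never has to control; and you correctly identify that the only nontrivial new ingredient is locally uniform concentration of $m_N$ under the tilted product measure $\rho_x^{\otimes N}$. That ingredient is genuinely available here: $\mathrm{Var}_{\rho_x}(Y)=\Phi_\rho''(x)/J$ is continuous, hence bounded on a compact neighbourhood of each $\mu_p$, so Chebyshev gives $G_N\to g\circ m$ uniformly on compacts, and since $m$ is continuous with $m(\mu_p)=\mu_p$ by (\ref{condizione.estremale}), the pointwise limits needed for dominated convergence in the rescaled variable follow; the $u$-domination itself comes from $|G_N|\le\|g\|_\infty$ together with the second line of (\ref{proprieta.B}), exactly as in the paper. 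Your treatment of sub-maximal-type peaks (contribution $N^{1/2k^*}\cdot N^{-1/2k_p}\to 0$ for $k_p<k^*$) is in fact slightly more careful than the paper's proof, which tacitly assumes all global maxima share the maximal type. The one cosmetic discrepancy is the constant: your weights come out as $|\lambda_p|^{-1/2k^*}$ after the change of variable $w=|\lambda_p|^{1/2k^*}\,x$ (matching (\ref{teo.1.passaggio.5})), whereas the theorem statement writes $b_p=\lambda_p^{-1/2k^*}$ with $\lambda_p<0$; your sign convention is the consistent one, and since the common factor $\int\exp(-x^{2k^*}/(2k^*)!)dx$ cancels in the ratio this does not affect the limit.
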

\vspace{0.3cm}
We claim that if $f$ admits only one global maximum point $\mu$ of maximal type the limiting distribution of the magnetization is a delta picked in $\mu$. In other world the variance of the magnetization vanishes for large $N$.
When $f$ has more global maximum points of maximal type this result holds only locally around each maximum point. For the proof see \cite{ellis1980limit}.\\
Thus it is important to determinate a suitable normalization of $S_{N}(\boldsymbol{\sigma})$ such that in the thermodynamic limit it converges to a well define random variable. If $f$ has a unique maximum point the problem is solved by the following 
\begin{teorema}\label{teorema.2}
	Suppose that the function $f$ given by (\ref{funzionale.pressione}) has a unique maximum point $\mu$ of type $k$ and strength $\lambda$. Then
	\begin{equation*}
	\bar{S}_{k}(\boldsymbol{\sigma})=\frac{S_{N}(\boldsymbol{\sigma})-N\mu}{N^{1-1/2k}}\overset{\mathscr{D}}{\rightarrow}\begin{cases}
	N\Big(0,(-\lambda)^{-1}-J^{-1}\Big) & \text{if $k=1$} \\\\
	\exp\bigg(\dfrac{\lambda}{(2k)!}x^{2k}\bigg) & \text{if $k>1$}
	\end{cases}
	\end{equation*}
	where $(-\lambda)^{-1}-J^{-1}>0$ for $k=1$.
\end{teorema}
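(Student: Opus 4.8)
My plan is to lift the Gaussian (Hubbard--Stratonovich) transform already used to evaluate $Z_N$ from the level of the normalization to the level of the measure itself. First I would introduce on $\mathbb{R}^N\times\mathbb{R}$ the joint probability measure
\[
\tilde P(d\boldsymbol{\sigma},dx)\;\propto\;\exp\!\Big(NJ\big(xm_N-\tfrac12 x^2\big)+Nhm_N\Big)\,dx\,\prod_{i=1}^N d\rho(\sigma_i),
\]
whose normalization is $Z_N(J,h)(2\pi/NJ)^{1/2}$ by (\ref{trasformazione.gaussiana}). Integrating out $x$ recovers exactly the Boltzmann--Gibbs measure (\ref{misura.BG.1}) for $\boldsymbol{\sigma}$, while integrating out $\boldsymbol{\sigma}$ gives, just as in the computation of $Z_N$, that the marginal density of $x$ is proportional to $\exp(Nf(x))$ with $f$ as in (\ref{funzionale.pressione}). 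The decisive point is that, for fixed $\boldsymbol{\sigma}$, the exponent equals $-\tfrac{NJ}{2}(x-m_N)^2$ plus a term not involving $x$, so that $x\mid\boldsymbol{\sigma}\sim\mathcal N\!\big(m_N,(NJ)^{-1}\big)$. Equivalently, setting $W:=\sqrt{NJ}\,(x-S_N/N)$, the variable $W$ is standard Gaussian and \emph{independent} of $\boldsymbol{\sigma}$ under $\tilde P$.

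Then I would rescale both variables at the scale dictated by the type $k$. Writing $\bar x:=N^{1/2k}(x-\mu)$ and recalling $\bar S_k=(S_N-N\mu)/N^{1-1/2k}$, the previous identity becomes
\[
\bar x=\bar S_k+\frac{1}{\sqrt J\,N^{1/2-1/2k}}\,W.
\]
For $k>1$ the prefactor of $W$ tends to $0$, so $\bar S_k$ and $\bar x$ have the same weak limit by Slutsky; for $k=1$ it is the constant $J^{-1/2}$, so $\bar x$ is the convolution of $\bar S_1$ with an independent $\mathcal N(0,J^{-1})$. In either case the problem reduces to identifying the limit law of $\bar x$, whose density is explicit up to normalization, being proportional to $\exp\big(Nf(\mu+N^{-1/2k}y)\big)$.

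Next I would run a Laplace/Watson asymptotic on this density. Using the local expansion $f(x)=f(\mu)+\tfrac{\lambda}{(2k)!}(x-\mu)^{2k}+o((x-\mu)^{2k})$ together with the estimates (\ref{proprieta.B}), the rescaled exponent $NB(y/N^{1/2k};\mu)$ converges to $\tfrac{\lambda}{(2k)!}y^{2k}$ and is dominated by an integrable function (here $\lambda<0$ is precisely what makes $\exp(\tfrac{\lambda}{(2k)!}y^{2k})$ normalizable). Dominated convergence then yields $\bar x\overset{\mathscr{D}}{\rightarrow}$ the law with density proportional to $\exp\big(\tfrac{\lambda}{(2k)!}y^{2k}\big)$. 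For $k>1$ this is already the asserted limit of $\bar S_k$. For $k=1$ the limit of $\bar x$ is $\mathcal N(0,(-\lambda)^{-1})$, and deconvolving the independent $\mathcal N(0,J^{-1})$ at the level of characteristic functions gives $\bar S_1\overset{\mathscr{D}}{\rightarrow}\mathcal N\big(0,(-\lambda)^{-1}-J^{-1}\big)$; the positivity $(-\lambda)^{-1}-J^{-1}>0$ follows from $\lambda=f''(\mu)=-J+J^2\,\mathrm{Var}_\mu(s)<0$, which forces $J\,\mathrm{Var}_\mu(s)<1$ and hence $(-\lambda)^{-1}-J^{-1}=\mathrm{Var}_\mu(s)/(1-J\,\mathrm{Var}_\mu(s))>0$.

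The main obstacle will be the passage from pointwise convergence of the unnormalized densities of $\bar x$ to genuine convergence in distribution: one must control the tails uniformly in $N$ so that no mass escapes to infinity and the normalizing constants converge. This is exactly where the global decay of $f$ furnished by Proposition \ref{proposizione.f} and the uniform upper bound in (\ref{proprieta.B}) are indispensable, and where the hypothesis that $\mu$ is the \emph{unique} maximum guarantees there is no competing contribution from elsewhere on the line. The remaining checks, namely the Slutsky step for $k>1$ and the positivity of the $k=1$ variance, are then routine given the variance formula for $f''(\mu)$ derived above.
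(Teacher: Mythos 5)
Your proposal is correct and follows essentially the same route as the paper: your auxiliary variable $\bar x=\bar S_k+W/(\sqrt{J}\,N^{1/2-1/2k})$ with density proportional to $\exp\big(Nf(\mu+y/N^{1/2k})\big)$ is exactly the content of Lemma \ref{lemma.2}, your Laplace/dominated-convergence step with tail control is Lemma \ref{lemma.3} combined with (\ref{proprieta.B}), and your deconvolution of the independent $N(0,J^{-1})$ for $k=1$ together with the positivity argument via $f''(\mu)=-J+J^{2}\mathrm{Var}_{\rho_\mu}$ reproduces Lemma \ref{lemma.1} and the identity $\lambda+J=J\Phi_{\rho}''(\mu)>0$. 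The only (cosmetic) difference is that you obtain the key identity by exhibiting the joint Hubbard--Stratonovich measure on $(\boldsymbol{\sigma},x)$ rather than by computing the convolution of the law of $S_{N}$ with an externally introduced Gaussian, which is the same calculation read in the opposite direction.
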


Instead if $f$ has more than one maximum point we have the following local statement
\begin{teorema}\label{teorema.3}
	Assume that $\mu$ is either a nonunique global maximum point of the function $f$ given by (\ref{funzionale.pressione}). Let $k$ and $\lambda$ be respectively the type and the strength of $\mu$. Then there exists $A>0$ such that for all $a\in(0,A)$ if $m_{N}(\boldsymbol{\sigma})\in [\mu-a,\mu+a]$ then
	\begin{equation*}
	\bar{S}_{k}(\boldsymbol{\sigma})=\frac{S_{N}(\boldsymbol{\sigma})-N\mu}{N^{1-1/2k}}\overset{\mathscr{D}}{\rightarrow}\begin{cases}
	N\Big(0,(-\lambda)^{-1}- J^{-1}\Big) & \text{if $k=1$} \\\\
	\exp\bigg(\dfrac{\lambda s^{2k}}{(2k)!}\bigg) & \text{if $k\geq 2$}
	\end{cases}
	\end{equation*}
	where $(-\lambda)^{-1}- J^{-1}>0$ for $k=1$.
\end{teorema}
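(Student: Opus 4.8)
Proof proposal for Theorem~\ref{teorema.3}.

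The plan is to read the Gaussian transform~(\ref{trasformazione.gaussiana}) used to compute $Z_N$ as a probabilistic coupling. Completing the square in the $x$--integral shows that, on the joint space of $(\boldsymbol{\sigma},x)$ carrying the weight $\exp(-\tfrac{NJ}{2}x^{2}+Nm_{N}(Jx+h))\prod d\rho(\sigma_{i})\,dx$, the spin marginal is exactly the Boltzmann--Gibbs measure~(\ref{misura.BG.1}), while the conditional law of $x$ given $\boldsymbol{\sigma}$ is $\mathcal N\!\big(m_{N},(NJ)^{-1}\big)$. Hence, adjoining a standard Gaussian $Z$ independent of the spins, one may set
\[
X \;=\; m_{N}(\boldsymbol{\sigma})+\frac{1}{\sqrt{NJ}}\,Z ,
\]
whose \emph{marginal} density is proportional to $\exp(Nf(x))$, with $f$ as in~(\ref{funzionale.pressione}). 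The identity
\[
N^{1/2k}(X-\mu)\;=\;\bar S_{k}(\boldsymbol{\sigma})+\frac{1}{\sqrt J}\,N^{\frac1{2k}-\frac12}\,Z
\]
is the engine of the argument: it transfers the fluctuations of $X$, governed by the clean density $e^{Nf}$, onto those of $\bar S_{k}=N^{1/2k}(m_{N}-\mu)$.

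The second step is a \emph{localization lemma} showing that the conditioning $m_{N}\in I_{a}:=[\mu-a,\mu+a]$ isolates the single maximum $\mu$. Under the conditioned measure the inner spin integral defining the $x$--density is restricted to $I_{a}$, so that the conditioned density of $X$ is proportional to $e^{Nf(x)}$ times the probability that the tilted single--spin law with field $Jx+h$ produces $m_{N}\in I_{a}$. For $x$ near a different global maximum $\mu_{p}$ this tilted law has mean $G(x)\approx\mu_{p}$, where $G$ is the right--hand side of~(\ref{condizione.estremale}) (so $G(\mu_{p})=\mu_{p}$); whenever $|\mu_{p}-\mu|>a$, Cram\'er's theorem makes $\{m_{N}\in I_{a}\}$ a large deviation of probability $\le e^{-cN}$. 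Choosing $A>0$ so small that $\mu$ is the only global maximum in $[\mu-A,\mu+A]$ and taking $a\in(0,A)$, the contributions of all other maxima are suppressed by a factor $e^{-cN}$, while near $\mu$ the tilted probability tends to $1$. Thus, up to exponentially small errors, the conditioned law of $X$ is $\propto e^{Nf(x)}$ and concentrated at $\mu$, and the problem reduces to the single--maximum situation of Theorem~\ref{teorema.2} localized at $\mu$.

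With a single maximum in play, I would apply Laplace's method to this conditioned density: rescaling $x=\mu+sN^{-1/2k}$ and invoking the expansions~(\ref{proprieta.B}), the weight $\exp\!\big(N(f(\mu+sN^{-1/2k})-f(\mu))\big)$ converges pointwise to $\exp\!\big(\tfrac{\lambda}{(2k)!}s^{2k}\big)$ and is dominated by an integrable function, so by dominated convergence $N^{1/2k}(X-\mu)$ converges in law to the density proportional to $\exp\!\big(\tfrac{\lambda}{(2k)!}s^{2k}\big)$. Transferring through the coupling identity: for $k\ge2$ the correction $J^{-1/2}N^{\frac1{2k}-\frac12}Z$ tends to $0$ in probability, so by Slutsky $\bar S_{k}$ inherits the higher--order exponential limit; for $k=1$ the correction is exactly $J^{-1/2}Z\sim\mathcal N(0,J^{-1})$, independent of $\bar S_{1}$, while Laplace gives $N^{1/2}(X-\mu)\to\mathcal N(0,(-\lambda)^{-1})$ (since $f''(\mu)=\lambda$); comparing characteristic functions deconvolves the independent Gaussian and yields $\bar S_{1}\to\mathcal N\!\big(0,(-\lambda)^{-1}-J^{-1}\big)$. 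Positivity reduces to the non-degeneracy of $\rho$: from $\lambda=f''(\mu)=-J+J^{2}v$ with $v=\mathrm{Var}_{\mu}(\sigma)>0$ one gets $0<Jv<1$ and $(-\lambda)^{-1}-J^{-1}=v/(1-Jv)>0$.

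The main obstacle is the localization lemma of the second step: making rigorous, with uniform--in--$x$ large--deviation bounds, that conditioning on $m_{N}\in I_{a}$ both annihilates the contribution of every other global maximum and leaves the local weight near $\mu$ essentially unchanged, including control of the boundary of $I_{a}$, where the events $\{m_{N}\in I_{a}\}$ and $\{X\in I_{a}\}$ differ by $O(N^{-1/2})$ and must be shown not to affect the limit. Once the reduction to a single maximum is secured, the Laplace asymptotics and the deconvolution of the third step are routine, the only quantitative point being the positivity $(-\lambda)^{-1}-J^{-1}>0$ verified above.
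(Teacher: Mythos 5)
Your proposal is correct in all its essential steps, but it is organized quite differently from the proof given here, so a comparison is worthwhile. Both arguments start from the same Hubbard--Stratonovich coupling (my Lemma \ref{lemma.2}): an auxiliary variable $X=m_{N}+Z/\sqrt{NJ}$ whose unconditioned marginal is $\propto e^{Nf}$. From there the routes diverge. The proof in the text tilts the spin measure at the \emph{fixed} maximum $\mu$ (the measure $\rho_{\mu}$ of (\ref{misura.ro.x})), performs a complex contour shift in the auxiliary variable, and then invokes the Transfer Principle (Lemma \ref{trasferimento}) to exchange the restriction $|m_{N}-\mu|\le a$ on the spin side for a restriction $|w|\le B$ on the auxiliary side; the tail bounds needed there, $P\{U_{N}>u\}\le\exp(-NJ(\Phi_{\rho}^{*}(\mu+u)-\Phi_{\rho}^{*}(\mu)-(\Phi_{\rho}^{*})'(\mu)u))$, come from the Legendre transform machinery of Lemma \ref{lemma.legendre}. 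You instead keep the conditioning on $m_{N}$ throughout and compute the conditioned marginal of $X$ as $e^{Nf(x)}\,q_{N}(x)$ with $q_{N}(x)=P_{\rho_{x}^{\otimes N}}\{m_{N}\in I_{a}\}$ --- i.e.\ you tilt at the \emph{varying} point $x$ --- and you suppress the other global maxima by a Chernoff bound for the i.i.d.\ tilted spins, whose mean $G(x)=\Phi_{\rho}'(x)$ sits near $\mu_{p}\notin I_{a}$. Your version avoids both the contour shift and the Legendre-transform formalism, and it makes transparent \emph{why} the conditioning selects $\mu$; the Transfer Principle buys explicit exponential rates in the exchange of localization sets. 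It is worth noting that your conditioning computation is essentially the Ellis--Wang argument that this thesis adopts for the multi-species case (compare your $e^{Nf(x)}q_{N}(x)$ with the object $I_{N}(\mathbf{x},\boldsymbol{\mu},d)$ of Lemma \ref{lemma.multi.5} and the estimate of Lemma \ref{ultimo.lemma}), so you have in effect transported the Chapter 2 proof back to the one-species setting, where the text uses the older Ellis--Newman--Rosen route instead.

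Two small remarks on the step you flag as the main obstacle. First, the only genuinely delicate region is where the tilted mean $G(x)$ falls near the endpoints $\mu\pm a$ of the window, so that $q_{N}(x)$ is neither exponentially small nor close to $1$; there you must use that $G(x)=\mu\pm a\neq\mu=G(\mu)$ together with the strict monotonicity of $G=\Phi_{\rho}'$ forces $x$ to be bounded away from $\mu$ and (for $a<A$ small) from every global maximum, whence $f(x)\le F-\epsilon(a)$ and the contribution is exponentially negligible against the main term of order $e^{NF}N^{-1/2k}$. Second, the boundary discrepancy between $\{m_{N}\in I_{a}\}$ and $\{X\in I_{a}\}$ that you worry about does not actually arise in your own construction: you condition on $m_{N}\in I_{a}$ only, the event is $\boldsymbol{\sigma}$-measurable, so the conditional law of $Z$ given $\boldsymbol{\sigma}$ is untouched and the deconvolution at the end is exact. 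Your positivity computation $(-\lambda)^{-1}-J^{-1}=v/(1-Jv)>0$ agrees with the one in the text via $\lambda+J=J\Phi_{\rho}''(\mu)>0$.
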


The result of theorem \ref{teorema.3} is valid also for local maximum points of the function $f$. For the proofs of these theorems it is useful to define the function 
\begin{equation}\label{funzione.fi}
\Phi_{\rho}(x)=\frac{1}{ J}\ln\int_{\mathbb{R}}\exp( s(Jx +h ))d\rho(s).
\end{equation}
We claim that since
\begin{equation}\label{funzione.fi.2}
\Phi_{\rho}(x)=\frac{1}{ J}f(x)+\frac{1}{2}x^{2}
\end{equation}
the first statement of proposition \ref{proposizione.f} ensures that the function $\Phi_{\rho}$ is real analytic. 
The second derivative of $\Phi_{\rho}$ is\\
\begin{align*}
\Phi_{\rho}''(x) &= J\bigg(\dfrac{\int_{\mathbb{R}}s^{2}\exp( s(Jx+h))d\rho(s)}{\int_{\mathbb{R}}\exp( s(Jx+h))d\rho(s)}-\bigg(\dfrac{\int_{\mathbb{R}}s\exp( s(Jx+h))d\rho(s)}{\int_{\mathbb{R}}\exp( s(Jx+h))d\rho(s)}\bigg)^{2}\bigg)\nonumber\\
&= J\bigg(\int_{\mathbb{R}}s^{2}d\rho_{x}(s)-\bigg(\int_{\mathbb{R}}s\;d\rho_{x}(s)\bigg)^{2}\bigg)= J\mathrm{Var}_{\rho_{x}}(Y)
\end{align*}\\
\noindent where $\mathrm{Var}_{\rho_{x}}(Y)$ denotes the variance of a random variable $Y$ whose distribution is 
\begin{equation}\label{misura.ro.x}
 \rho_{x}(s)=\dfrac{\exp( s(Jx+h))d\rho(s)}{\int_{\mathbb{R}}\exp( s(Jx+h))d\rho(s)}.
\end{equation}\\
\noindent Since $\rho$ is a nondegenerate measure, by definition of variance of a random variable, $\Phi_{\rho}''(x)>0$ for any $x\in\mathbb{R}$.\\

The proofs of the theorems \ref{teorema.1} and \ref{teorema.2} also need the following preliminary results:
\begin{lemma}\label{lemma.1}
	Suppose that for each $N$, $X_{N}$ and $Y_{N}$ are independent random variables such that $X_{N}\overset{\mathscr{D}}{\rightarrow}\nu$, where for all $a\in\mathbb{R}$
	\begin{equation*}
	\int e^{iax}d\nu(x)\neq 0.
	\end{equation*}
	Then $Y_{N}\overset{\mathscr{D}}{\rightarrow}\mu$ if and only if $X_{N}+Y_{N}\overset{\mathscr{D}}{\rightarrow}\nu*\mu$.\\
	Where $\nu*\mu$ indicates the convolution of two distribution, that is:
	\begin{equation*}
	\nu*\mu =\int_{-\infty}^{\infty}\nu(x-t)\mu(t)dt.
	\end{equation*}
\end{lemma}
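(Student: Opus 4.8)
The plan is to pass to characteristic functions and reduce the whole statement to an elementary manipulation of limits. For a distribution $\eta$ write $\hat\eta(a)=\int e^{iax}\,d\eta(x)$ for its characteristic function. The two tools I would invoke are L\'evy's continuity theorem — convergence in distribution $Z_{N}\overset{\mathscr{D}}{\rightarrow}\zeta$ is equivalent to the pointwise convergence $\hat Z_{N}(a)\to\hat\zeta(a)$ for every $a\in\mathbb{R}$, provided the limiting function is continuous at the origin — and the factorization of the characteristic function of a sum of independent variables, $\widehat{X_{N}+Y_{N}}(a)=\hat X_{N}(a)\,\hat Y_{N}(a)$, together with $\widehat{\nu*\mu}(a)=\hat\nu(a)\,\hat\mu(a)$. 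By hypothesis $\hat X_{N}(a)\to\hat\nu(a)$ for every $a$, and $\hat\nu(a)\neq 0$ everywhere.

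For the forward implication I would argue directly: if $Y_{N}\overset{\mathscr{D}}{\rightarrow}\mu$ then $\hat Y_{N}(a)\to\hat\mu(a)$, hence
\begin{equation*}
\widehat{X_{N}+Y_{N}}(a)=\hat X_{N}(a)\,\hat Y_{N}(a)\longrightarrow \hat\nu(a)\,\hat\mu(a)=\widehat{\nu*\mu}(a)
\end{equation*}
for every $a$. Since $\widehat{\nu*\mu}$ is a characteristic function it is continuous at $0$, so L\'evy's theorem yields $X_{N}+Y_{N}\overset{\mathscr{D}}{\rightarrow}\nu*\mu$. This direction uses nothing beyond continuity of multiplication.

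The converse is where the non-vanishing hypothesis on $\hat\nu$ does the work. Assume $X_{N}+Y_{N}\overset{\mathscr{D}}{\rightarrow}\nu*\mu$, so $\hat X_{N}(a)\,\hat Y_{N}(a)\to\hat\nu(a)\,\hat\mu(a)$. Fix $a$. Because $\hat X_{N}(a)\to\hat\nu(a)\neq 0$, there is an index beyond which $\hat X_{N}(a)\neq 0$, so I may divide and write
\begin{equation*}
\hat Y_{N}(a)=\frac{\hat X_{N}(a)\,\hat Y_{N}(a)}{\hat X_{N}(a)}\longrightarrow\frac{\hat\nu(a)\,\hat\mu(a)}{\hat\nu(a)}=\hat\mu(a).
\end{equation*}
Thus $\hat Y_{N}(a)\to\hat\mu(a)$ pointwise, and since $\hat\mu$ is continuous at $0$ the converse half of L\'evy's theorem gives $Y_{N}\overset{\mathscr{D}}{\rightarrow}\mu$.

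The only delicate point — and the reason the hypothesis $\hat\nu(a)\neq 0$ cannot be dropped — is the division step: if $\hat\nu$ vanished at some $a_{0}$, the factor $\hat X_{N}(a_{0})$ could tend to $0$ and the behaviour of $\hat Y_{N}(a_{0})$ would be undetermined, so $Y_{N}$ would not be forced to converge. I would also make explicit that the division is legitimate only for $N$ large (depending on $a$), which is harmless because pointwise convergence of characteristic functions is all that L\'evy's theorem requires; and that the limit $\hat\mu$ is a bona fide characteristic function continuous at $0$ is immediate here, since $\mu$ is by assumption a given distribution.
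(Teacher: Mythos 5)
Your proof is correct and follows exactly the route the paper intends: its entire proof is the one-line remark that weak convergence is equivalent to pointwise convergence of characteristic functions, and your argument simply spells out the factorization and the division step (legitimate for large $N$ since $\hat\nu(a)\neq 0$) that this remark leaves implicit. No substantive difference.
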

\begin{proof}
 Weak convergence of measures is equivalent to pointwise convergence of characteristic functions.
\end{proof}
\begin{lemma}\label{lemma.2}
	Suppose that the random variable $W \sim N(0;J^{-1})$ is independent of $S_{N}(\boldsymbol{\sigma})$ for all $N\geq 1$. Then given $\gamma$ and $m$ real, the distribution of 
	\begin{equation*}
	\frac{W}{N^{1/2 -\gamma}}+\frac{S_{N}(\boldsymbol{\sigma})-Nm}{N^{1-\gamma}}
	\end{equation*}
	is given by
	\begin{equation}\label{tesi.lemma.2}
	\dfrac{\exp\Big(Nf\Big(\dfrac{s}{N^{\gamma}}+m\Big)\Big)ds}{\displaystyle{\int_{\mathbb{R}}}\exp\Big(Nf\Big(\dfrac{s}{N^{\gamma}}+m\Big)\Big)ds}.
	\end{equation}\\
	where the function $f$ is given by (\ref{funzionale.pressione}).
\end{lemma}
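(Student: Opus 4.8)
The plan is to recognise that the auxiliary Gaussian $W\sim N(0,J^{-1})$ realises probabilistically the very Hubbard--Stratonovich transform (\ref{trasformazione.gaussiana}) already used to evaluate $Z_N$: adding a suitably rescaled copy of $W$ to $(S_N-Nm)/N^{1-\gamma}$ should cancel the quadratic self-interaction $\tfrac{J}{2N}S_N^2$ hidden in the Gibbs weight and thereby decouple the spins. Concretely, I would start from the density of the sum. Since $W$ and $S_N(\boldsymbol{\sigma})$ are independent, the density at a point $s$ of
\[
\frac{W}{N^{1/2-\gamma}}+\frac{S_N(\boldsymbol{\sigma})-Nm}{N^{1-\gamma}}
\]
equals the Gibbs expectation
\[
\left\langle\, g\!\left(s-\frac{S_N(\boldsymbol{\sigma})-Nm}{N^{1-\gamma}}\right)\right\rangle_{BG},
\]
where $g$ is the density of $W/N^{1/2-\gamma}$, a centred Gaussian of variance $J^{-1}N^{2\gamma-1}$, so that $g(u)=(JN^{1-2\gamma}/2\pi)^{1/2}\exp(-\tfrac{J}{2}N^{1-2\gamma}u^2)$.

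The core of the argument is the expansion of this Gaussian exponent. Writing the Gibbs weight as $\exp(-H_N)=\exp(\tfrac{J}{2N}S_N^2+hS_N)$ and expanding the square $\big(s-(S_N-Nm)/N^{1-\gamma}\big)^2$, the term quadratic in $S_N$ contributes exactly $-\tfrac{J}{2N}S_N^2$, which cancels the interaction term in $\exp(-H_N)$. What remains is at most linear in $S_N$, so the integrand factorises over the spins; the product measure $\prod_i d\rho(\sigma_i)$ then turns the spin integral into the $N$-th power of $\int_{\mathbb{R}}\exp(\sigma(Jx+h))\,d\rho(\sigma)$ with $x=s/N^\gamma+m$. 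By the definition (\ref{funzionale.pressione}) of $f$ this power equals $\exp\!\big(Nf(x)+\tfrac{NJ}{2}x^2\big)$.

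It then remains to collect the leftover prefactors. The $s$- and $m$-dependent pieces produced by the Gaussian expansion, namely $-\tfrac{J}{2}N^{1-2\gamma}s^2$, the cross term $-JN^{1-\gamma}sm$ and the constant $-\tfrac{JN}{2}m^2$, must be combined with the quadratic piece $\tfrac{NJ}{2}x^2=\tfrac{NJ}{2}(s/N^\gamma+m)^2$ coming from the spin integral. A short expansion shows these cancel identically, leaving the clean weight $\exp\!\big(Nf(s/N^\gamma+m)\big)$ multiplied by an $s$-independent constant $C=Z_N^{-1}(J/2\pi)^{1/2}N^{1/2-\gamma}$. Since the left-hand side is a probability density in $s$, it integrates to one, which forces $C=\big(\int_{\mathbb{R}}\exp(Nf(s/N^\gamma+m))\,ds\big)^{-1}$ and yields (\ref{tesi.lemma.2}); the finiteness of this normalising integral, needed for the right-hand side to be a genuine density, is precisely statement (\ref{proprieta.f}) of Proposition \ref{proposizione.f}. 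As a consistency check one can verify $C$ directly: the substitution $x=s/N^\gamma+m$ and the formula $Z_N=(NJ/2\pi)^{1/2}\int_{\mathbb{R}}\exp(Nf(x))\,dx$ from the previous section reproduce exactly this value.

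I do not expect any genuine obstacle: the only real content is spotting the Hubbard--Stratonovich cancellation, and the main place to be careful is the bookkeeping of the powers of $N$ together with the several cross terms in the two Gaussian expansions, all of which have to cancel for the identity to close.
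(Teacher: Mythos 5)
Your proposal is correct and follows essentially the same route as the paper: both compute the convolution of the rescaled Gaussian with the Gibbs law of $S_{N}(\boldsymbol{\sigma})$, use the completion of the square to cancel the quadratic term $\tfrac{J}{2N}S_{N}^{2}$, factorise the spin integral to recognise $f$, and fix the normalising constant by letting the total mass equal one (the paper's $\theta\rightarrow\infty$ step), invoking (\ref{proprieta.f}) for finiteness. The only cosmetic difference is that you work directly with the density while the paper works with the distribution function and the change of variable $x=(t-Nm)/N^{1-\gamma}$.
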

\begin{proof}
	Given $\theta$ real
	\begin{equation*}
	P\bigg\{\frac{W}{N^{1/2 -\gamma}}+\frac{S_{N}(\boldsymbol{\sigma})-Nm}{N^{1-\gamma}}\leq\theta\bigg\}=P\Big\{\sqrt{N}W+S_{N}(\boldsymbol{\sigma})\in E\Big\}
	\end{equation*}
	\noindent where $E=(-\infty,\theta N^{1-\gamma}+Nm]$. The distribution of $\sqrt{N}W+S_{N}(\boldsymbol{\sigma})$ is given by the convolution of the Gaussian $N(0,N J^{-1})$ with the distribution of $S_{N}(\boldsymbol{\sigma})$ 
	\begin{equation*}
	\frac{1}{Z_{N}(J,h)}\exp\bigg(\bigg(\frac{J}{2N}s^{2}+hs\bigg)\bigg)d\nu_{S}(s)
	\end{equation*}
	\noindent where $d\nu_{S}(s)$ denotes the distribution of $S_{N}(\boldsymbol{\sigma})$ on $(\mathbb{R}^{N},\prod_{i=1}^{N}\rho(\sigma_{i}))$.\\
	Thus we have:
	\begin{multline*}
	P\Big\{\sqrt{N}W+S_{N}(\boldsymbol{\sigma})\in E\Big\}=\frac{1}{Z_{N}(J,h)}\bigg(\frac{ J}{2\pi N}\bigg)^{\frac{1}{2}}\\
	\times\int_{E}\exp\bigg(-\frac{ J}{2N}t^{2}\bigg)\int_{\mathbb{R}}\exp\bigg( s\bigg(\frac{J}{N}t+h\bigg)\bigg)d\nu_{S}(s)dt
	\end{multline*}
	where
	\begin{equation*}
	\int_{\mathbb{R}}\exp\bigg( s\bigg(\frac{J}{N}t+h\bigg)\bigg)d\nu_{S}(s)=\prod_{i=1}^{n}\int_{\mathbb{R}}\exp\bigg( \sigma_{i}\bigg(\frac{J}{N}t+h\bigg)\bigg)d\rho(\sigma_{i}).
	\end{equation*}\\
	\noindent If we make the following change of variable 
	\begin{equation*}
	x=\frac{t-Nm}{N^{1-\gamma}}
	\end{equation*}
	\noindent and we integrate over the spins, we obtain:
	\begin{equation}\label{pim}
	P\Big\{\sqrt{N}W+S_{N}(\boldsymbol{\sigma})\in E\Big\}=\frac{1}{Z_{N}(J,h)}\bigg(\frac{ JN^{1-2\gamma}}{2\pi}\bigg)^{\frac{1}{2}}\!\!\!\int_{-\infty}^{\theta}\!\!\!\exp\bigg(Nf\bigg(\frac{x}{N^{\gamma}}+m\bigg)\bigg)dx.
	\end{equation}
	Taking $\theta\rightarrow\infty$ the (\ref{pim}) gives an equation for $Z_{N}(J,h)$ which when substituted back yields the distribution (\ref{tesi.lemma.2}). The integral in (\ref{tesi.lemma.2}) is finite by (\ref{proprieta.f}).
\end{proof}
We remark that for $\gamma<1/2$ the random variable $W$ does not contribute to the limit of the distribution (\ref{tesi.lemma.2}) as $N\rightarrow\infty$.
\begin{lemma}\label{lemma.3}
	Defined $F=\max\{f(x)|x\in\mathbb{R}\}$, let $V$ be any closed (possibly unbounded) subset of $\mathbb{R}$ which contains no global maxima of $f(x)$. Then there exists $\epsilon>0$ so that 
	\begin{equation}\label{tesi.lemma.3}
	e^{-NF}\int_{V} e^{Nf(x)}dx=O(e^{-N\epsilon}) \quad\quad\quad N\rightarrow\infty.
	\end{equation}
\end{lemma}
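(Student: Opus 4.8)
The plan is to split the integration domain into a bounded (compact) piece and an unbounded tail, and to show that on each piece the integral is smaller than $e^{NF}$ by a fixed exponential factor. The global maximum value $F$ is finite and attained by the second statement of Proposition~\ref{proposizione.f}, so the comparison makes sense; the whole point is that moving away from the maximizers costs a definite amount in the exponent.

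First I would control the tail using that $f(x)\to-\infty$ as $|x|\to\infty$, established in the first statement of Proposition~\ref{proposizione.f}. This lets me fix $R>0$ so large that $f(x)\le F-1$ whenever $|x|>R$. On this region I would write $e^{Nf}=e^{(N-1)f}\,e^{f}$ and bound $e^{(N-1)f}\le e^{(N-1)(F-1)}$, obtaining
\begin{equation*}
\int_{|x|>R}e^{Nf(x)}\,dx\le e^{(N-1)(F-1)}\int_{\mathbb{R}}e^{f(x)}\,dx,
\end{equation*}
where the last integral is finite by (\ref{proprieta.f}) taken at $N=1$. Multiplying through by $e^{-NF}$ collapses the exponent to $-(N-1)-F$, so the tail contribution is of order $e^{-(N-1)}$, hence $O(e^{-N})$.

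Next, on the compact set $K=V\cap[-R,R]$ I would use that $K$ is closed and bounded and that $f$ is continuous, so $f$ attains a maximum $M=\max_{x\in K}f(x)$. Since $V$ contains no global maximum of $f$, neither does the subset $K$, so no point of $K$ satisfies $f=F$; by compactness this upgrades to the strict inequality $M<F$. Writing $\eta=F-M>0$, the crude bound $\int_{K}e^{Nf}\,dx\le |K|\,e^{NM}$ with $|K|\le 2R$ gives, after multiplication by $e^{-NF}$, a contribution of order $e^{-N\eta}$. Combining the two regions (and noting that either piece contributes $0$ if it happens to be empty), $e^{-NF}\int_{V}e^{Nf}\,dx$ is bounded by a constant times $e^{-N}+e^{-N\eta}$, which is $O(e^{-N\epsilon})$ for any $\epsilon$ with $0<\epsilon\le\min(1,\eta)$.

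The only genuinely substantive step is the strict gap $M<F$ on the compact part: it is exactly here that the hypothesis that $V$ contains no global maximum is invoked, and it is essential that $K$ be compact, so that the continuous function $f$ \emph{attains} its maximum on $K$ rather than merely approaching $F$ as an unattained supremum. Everything else is routine bookkeeping once $R$ and $\eta$ have been fixed.
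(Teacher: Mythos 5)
Your proof is correct, and its core mechanism is the same as the paper's: once one knows that $\sup_{x\in V}f(x)\le F-\epsilon$ for some fixed $\epsilon>0$, the bound $e^{Nf(x)}=e^{(N-1)f(x)}e^{f(x)}\le e^{(N-1)(F-\epsilon)}e^{f(x)}$ together with the finiteness of $\int_{\mathbb{R}}e^{f(x)}dx$ (i.e.\ (\ref{proprieta.f}) at $N=1$) gives the claim. The one place where you genuinely add something is the justification of that uniform gap: the paper simply asserts $\sup_{x\in V}f(x)\le F-\epsilon$ from the hypothesis that $V$ contains no global maximizer, which for an arbitrary closed unbounded $V$ is not automatic, whereas you derive it by splitting $V$ into the tail $\{|x|>R\}$, where $f\le F-1$ because $f\to-\infty$, and the compact piece $V\cap[-R,R]$, where continuity forces the maximum to be attained and hence strictly below $F$. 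This makes your write-up slightly longer but closes the only real logical gap in the paper's version; the two arguments are otherwise interchangeable.
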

\begin{proof}
	$V$ contains no global maxima of $f(x)$, thus:
	\begin{equation*}
	\sup_{x\in V}f(x)\leq\sup_{x\in\mathbb{R}}f(x)-\epsilon=F-\epsilon
	\end{equation*}
	hence
	\begin{align}\label{dim.3.passaggio}
	e^{-NF}\!\!\!\int_{V}\!\! e^{Nf(x)}dx &<e^{-NF}e^{(N-1)(F-\epsilon)}\int_{V} e^{f(x)}dx\nonumber\\
	&\leq e^{-NF}e^{N(F-\epsilon)}\bigg(e^{(F-\epsilon)}\int_{\mathbb{R}} e^{f(x)}dx\bigg)\nonumber\\
        &=e^{-NF}\bigg(e^{(F-\epsilon)}\bigg(\frac{2\pi}{ J}\bigg)^{\frac{1}{2}}\!\!\!\int_{\mathbb{R}}\exp\!\bigg(\frac{ J}{2}x^{2}+ h x\bigg)d\rho(x)\bigg)
	\end{align}
	The condition (\ref{condizione.ro}) on the measure $\rho$ (with $a= J$ and $b= h$) assures that the latter passage of (\ref{dim.3.passaggio}) is $O(e^{-N\epsilon})$ as $N\rightarrow\infty$. This proved the (\ref{tesi.lemma.3}).
\end{proof} 
At last we can prove the theorems \ref{teorema.1} and \ref{teorema.2}.\\
\noindent {\it Proof of Theorem \ref{teorema.1}}.
	By definition $m_{N}(\boldsymbol{\sigma})=S_{N}(\boldsymbol{\sigma})/N$, thus by lemmas \ref{lemma.1} and \ref{lemma.2} (with $\gamma=0$ and $m=0$) we know that
	\begin{equation*}
	\frac{W}{\sqrt{N}}+\frac{S_{N}(\boldsymbol{\sigma})}{N}\sim\dfrac{e^{Nf(x)}dx}{\displaystyle{\int_{\mathbb{R}}}e^{Nf(x)}dx}
	\end{equation*}
	where $W\sim N(0, J^{-1})$. We have to prove that for any bounded continuous function $\phi(x)$
	\begin{equation}\label{teo.1.passaggio.1}
	\dfrac{\displaystyle{\int_{\mathbb{R}}}e^{Nf(x)}\phi(x)dx}{\displaystyle{\int_{\mathbb{R}}}e^{Nf(x)}dx}\rightarrow\dfrac{\sum\limits_{p=1}^{P}\phi(\mu_{p})b_{p}}{\sum\limits_{p=1}^{P}b_{p}}.
	\end{equation}
	Consider $\delta_{1},\dots,\delta_{P}$ such that the conditions expressed in  (\ref{proprieta.B}) are satisfied we choose $\bar{\delta}=\min\{\delta_{p}\;|\;p=1,\dots,P\}$, decreasing it (if necessary) to assure that $0<\bar{\delta}<\min\{|\mu_{p}-\mu_{q}|:1\leq p\neq q\leq P\}$. Denoted by $V$ the closet set
	\begin{equation*}
	V=\mathbb{R}-\bigcup_{p=1}^{P}(\mu_{p}-\bar{\delta},\mu_{p}+\bar{\delta})
	\end{equation*}
	by lemma \ref{lemma.3} there exists $\epsilon>0$ such that as $N\rightarrow\infty$
	\begin{equation}\label{teo.1.passaggio.2}
	e^{-NF}\int_{V}e^{Nf(x)}\phi(x)dx=O(e^{-N\epsilon}). 
	\end{equation}
	For each $p=1,\dots,P$ we have
	\begin{align*}
	N^{1/2k^{*}}e^{-NF}&\int_{\mu_{p}-\bar{\delta}}^{\mu_{p}+\bar{\delta}}e^{Nf(x)}\phi(x)dx\nonumber\\ &=N^{1/2k^{*}}\int_{-\bar{\delta}}^{\bar{\delta}}\exp(N(f(u+\mu_{p})-f(\mu_{p})))\phi(u+\mu_{p})du\nonumber\\
	&=\!\!\!\!\int_{|w|<\bar{\delta} N^{1/2k^{*}}}\!\!\!\!\!\exp\bigg(NB\bigg(\frac{w}{N^{1/2k^{*}}};\mu_{p}\bigg)\bigg)\phi\bigg(\frac{w}{N^{1/2k^{*}}}+\mu_{p}\bigg)dw
	\end{align*}\\
	\noindent where the two equalities follow from suitable changes of variable. First we define $u=x-\mu_{p}$, and then $w=uN^{1/2k^{*}}$.
	Thus by (\ref{proprieta.B}) and the dominated convergence theorem
	\begin{align}\label{teo.1.passaggio.4}
	\lim_{N\rightarrow\infty}N^{1/2k^{*}}e^{-NF}\int_{\mu_{p}-\delta}^{\mu_{p}+\delta}&e^{Nf(x)}\phi(x)dx=\nonumber\\ &\phi(\mu_{p})\int_{\mathbb{R}}\exp\Big(\frac{\lambda_{p}}{(2k^{*})!}w^{2k^{*}}\Big)dw.
	\end{align}\\
	Since $\lambda_{p}<0$ the integral of (\ref{teo.1.passaggio.4}) is finite. Making the change of variable
	\begin{equation*}
	x=\frac{w}{(-\lambda_{p})^{1/2k^{*}}}
	\end{equation*}
	in the right-hand side of (\ref{teo.1.passaggio.4}) we obtain:
	\begin{multline}\label{teo.1.passaggio.5}
	\lim_{N\rightarrow\infty}N^{1/2k^{*}}e^{-NF}\int_{\mu_{p}-\delta}^{\mu_{p}+\delta}e^{Nf(x)}\phi(x)dx=\\\phi(\mu_{p})(-\lambda_{p})^{1/2k^{*}}\int_{\mathbb{R}}\exp\bigg(-\frac{x^{2k^{*}}}{(2k^{*})!}\bigg)dx.
	\end{multline}
	By (\ref{teo.1.passaggio.2}) and (\ref{teo.1.passaggio.5})
	\begin{multline}\label{teo.1.passaggio.6}
	\lim_{N\rightarrow\infty}N^{1/2k^{*}}e^{-NF}\int_{\mathbb{R}}e^{Nf(x)}\phi(x)dx=\\\sum_{p=1}^{P}\phi(\mu_{p})(-\lambda_{p})^{1/2k^{*}}\int_{\mathbb{R}}\exp\bigg(-\frac{x^{2k^{*}}}{(2k^{*})!}\bigg)dx.
	\end{multline}
	In a similar way for the denominator we have\\
	\begin{equation}\label{teo.1.passaggio.7}
	\lim_{N\rightarrow\infty}N^{1/2k^{*}}e^{-NF}\int_{\mathbb{R}}e^{Nf(x)}dx=\sum_{p=1}^{P}(-\lambda_{p})^{1/2k^{*}}\!\!\!\int_{\mathbb{R}}\exp\bigg(\!-\frac{x^{2k^{*}}}{(2k^{*})!}\bigg)dx.
	\end{equation}\\
	Now the statement (\ref{teo.1.passaggio.1}) follows from (\ref{teo.1.passaggio.6}) and (\ref{teo.1.passaggio.7}).
	\qed\\
	
\noindent {\it Proof of Theorem \ref{teorema.2}}.
	By lemma \ref{lemma.2} and \ref{lemma.3} (with $\gamma=1/2k$ and $m=\mu$) we know:
	\begin{equation*}
	\frac{W}{N^{1/2 -1/2k}}+\frac{S_{N}(\boldsymbol{\sigma})-N\mu}{N^{1-1/2k}} \sim\frac{\exp\bigg(Nf\bigg(\dfrac{x}{N^{1/2k}}+\mu\bigg)\bigg)dx}{\displaystyle{\int_{\mathbb{R}}}\exp\bigg(Nf\bigg(\dfrac{x}{N^{1/2k}}+\mu\bigg)\bigg)dx}
	\end{equation*}\\
	where $W\sim N(0,J^{-1})$. If $k>1$ to prove the result it suffices to verify that:\\
	\begin{equation}\label{teo.2.passaggio.1.bu}
	\frac{\displaystyle{\int_{\mathbb{R}}}\exp\bigg(Nf\bigg(\dfrac{x}{N^{1/2k}}+\mu\bigg)\bigg) \phi(x)dx}{\displaystyle{\int_{\mathbb{R}}}\exp\bigg(Nf\bigg(\dfrac{x}{N^{1/2k}}+\mu\bigg)\bigg)dx}\rightarrow\frac{\displaystyle{\int_{\mathbb{R}}}\exp\bigg(\dfrac{\lambda }{(2k)!}x^{2k}\bigg)\phi(x)dx}{\displaystyle{\int_{\mathbb{R}}}\exp\bigg(\dfrac{\lambda }{(2k)!x^{2k}}\bigg)dx}
	\end{equation}\\
	for any bounded continuous function $\phi:\mathbb{R}\rightarrow\mathbb{R}$. We pick $\delta>0$ such that it satisfies the conditions (\ref{proprieta.B}).
	By lemma \ref{lemma.3} there exists $\epsilon>0$ so that as $N\rightarrow\infty$
	\begin{equation}\label{teo.2.passaggio.2.bu}
	e^{-NF}\int_{|x|\geq\delta N^{1/2k}}\exp\bigg(Nf\bigg(\dfrac{x}{N^{1/2k}}+\mu\bigg)\bigg) \phi(x)dx=O(N^{1/2k}e^{-N\epsilon})
	\end{equation} 
	where $F=\max\{f(x)|x\in\mathbb{R}\}$. On the other hand as $|x|<\delta N^{1/2k}$ 
	\begin{multline*}
	e^{-NF}\int_{|x|<\delta N^{1/2k}}\!\!\exp\bigg(Nf\bigg(\dfrac{x}{N^{1/2k}}+\mu\bigg)\bigg)\phi(x)dx =\\
	e^{N(F-f(\mu))}\int_{|x|<\delta N^{1/2k}}\!\!\exp\bigg(NB\bigg(\dfrac{x}{N^{1/2k}};\mu\bigg)\bigg)\phi(x)dx.
	\end{multline*}
	By (\ref{proprieta.B}) and the dominated convergence theorem
	\begin{multline}\label{teo.2.passaggio.3.bu}
	\lim_{N\rightarrow\infty}e^{-NF}\int_{|x|<\delta N^{1/2k}}\exp\bigg(Nf\bigg(\dfrac{x}{N^{1/2k}}+\mu\bigg)\bigg)\phi(x)dx 
	=\\\int_{\mathbb{R}}\exp\bigg(\frac{\lambda}{(2k)!}x^{2k}\bigg)\phi(x)dx
	\end{multline}
	where the integral of the right-hand side of (\ref{teo.2.passaggio.3.bu}) is finite because $\lambda<0$.
	By (\ref{teo.2.passaggio.2.bu}) and (\ref{teo.2.passaggio.3.bu}) the statement  (\ref{teo.2.passaggio.1.bu}) follows for $k>1$.\\

	For $k=1$ in analogous way we prove that for any bounded continuous function $\phi:\mathbb{R}\rightarrow\mathbb{R}$:
	\begin{equation*}
	\dfrac{\displaystyle{\int_{\mathbb{R}}}\exp\bigg(Nf\bigg(\dfrac{x}{\sqrt{N}}+\mu\bigg)\bigg)\phi(x)dx}{\displaystyle{\int_{\mathbb{R}}}\exp\bigg(Nf\bigg(\dfrac{x}{\sqrt{N}}+\mu\bigg)\bigg)dx}\rightarrow\dfrac{\displaystyle{\int_{\mathbb{R}}}\exp\Big(\dfrac{\lambda}{2}x^{2}\Big)\phi(x)dx}{\displaystyle{\int_{\mathbb{R}}}\exp\Big(\dfrac{\lambda}{2}x^{2}\Big)dx}.
	\end{equation*}
	The Gaussian $N(0,(-\lambda)^{-1})$ obtained is the convolution of the limiting di\-stribution of the random variables $W$ and $\bar{S}_{1}(\boldsymbol{\sigma})$. Since $W\sim N(0, J^{-1})$, the random variable $\bar{S}_{1}(\boldsymbol{\sigma})$ as $N\rightarrow\infty$ has to converge to a Gaussian whose covariance is $(-\lambda)^{-1}-J^{-1}$. 
	To complete the proof we must check that 
	\begin{equation}\label{teo.2.passaggio.4.bu}
	(-\lambda)^{-1}-J^{-1}=\dfrac{\lambda+ J}{-\lambda J}>0
	\end{equation}
	where we claim that $\lambda=f''(\mu)$. Considering the function $\Phi_{\rho}$ defined in (\ref{funzione.fi}), by (\ref{funzione.fi.2}) we have $\lambda+J= J\Phi''(\mu)$. Since $\Phi_{\rho}''(\mu)>0$ and $\lambda<0$ the inequality (\ref{teo.2.passaggio.4.bu}) holds.
	\qed\\

To prove Theorem \ref{teorema.3} it is useful to consider the Legendre transformation of the function $\Phi_{\rho}$ defined in (\ref{funzione.fi})
\begin{equation}\label{legendre}
\Phi_{\rho}^{*}(y)=\sup_{x\in\mathbb{R}}\{xy-\Phi_{\rho}(x)\}.
\end{equation}
We claim that it is possible to define the function $\Phi_{\rho}^{*}$ because $\Phi_{\rho}''(x)>0$ for all $x\in\mathbb{R}$.\newpage We can state the following
\begin{lemma}\label{lemma.legendre}
Let $\Phi_{\rho}^{*}$ the function defined in (\ref{legendre}) and $\mu$ a maximum point of the function $f$ given by (\ref{funzionale.pressione}). Then 
\begin{enumerate}
\item There exists an open (possibly unbounded) interval $I$ containing $\mu$ such that $\Phi_{\rho}^{*}$ is finite, real analytic and convex (with $(\Phi_{\rho}^{*})''(x)>0$) on $I$ and $\Phi_{\rho}^{*}=+\infty$ on $\bar{I}^{C}$. 
\item Consider the random variable $U_{N}(\boldsymbol{\sigma})=m_{N}(\boldsymbol{\sigma})-\mu$. Denote by $\nu_{U}$ its distribution on $(\mathbb{R}^{N},\prod_{i=1}^{N}\rho_{\mu}(\sigma_{i}))$ where $\rho_{\mu}$ is given by (\ref{misura.ro.x}) with $x=\mu$. For any $u>0$
	\begin{equation}\label{tesi.legendre}
	P\{U_{N}(\boldsymbol{\sigma})>u\}\leq\exp\Big(-N J(\Phi_{\rho}^{*}(\mu+u)-\Phi_{\rho}^{*}(\mu)-(\Phi_{\rho}^{*})'(\mu)u)\Big).
	\end{equation}
	\item There exists a number $u_{0}>0$ such that for all $u\in(0,u_{0})$
	\begin{equation}\label{tesi.legendre.2}
	(\Phi_{\rho}^{*})'(\mu+u)-(\Phi_{\rho}^{*})'(\mu)=u+\xi(u)\quad\quad\xi(u)>0
	\end{equation}
	\end{enumerate}
\end{lemma}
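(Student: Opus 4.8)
The plan is to hang all three parts on the single fact that, since $\mu$ is a maximum point of $f$, the extremal condition (\ref{condizione.estremale}) reads $\Phi_{\rho}'(\mu)=\mu$; equivalently, writing $f=J\Phi_{\rho}-\tfrac{J}{2}x^{2}$ one has $f'=J(\Phi_{\rho}'-\mathrm{id})$, so $f'(\mu)=0$. Throughout I use that $\Phi_{\rho}$ is real analytic with $\Phi_{\rho}''>0$ everywhere (both established above), so $\Phi_{\rho}'$ is a strictly increasing real-analytic diffeomorphism onto an open interval $I=\Phi_{\rho}'(\mathbb{R})$. For part (1), when $y\in I$ the supremum in (\ref{legendre}) is attained at the unique point $x=(\Phi_{\rho}')^{-1}(y)$, giving $\Phi_{\rho}^{*}(y)=y(\Phi_{\rho}')^{-1}(y)-\Phi_{\rho}((\Phi_{\rho}')^{-1}(y))$, which is real analytic by the analytic inverse function theorem; differentiating twice yields the classical duality $(\Phi_{\rho}^{*})''(y)=1/\Phi_{\rho}''((\Phi_{\rho}')^{-1}(y))>0$, so $\Phi_{\rho}^{*}$ is finite, analytic and strictly convex on $I$. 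For $y\notin\bar{I}$ the linear term dominates and $xy-\Phi_{\rho}(x)\to+\infty$, so $\Phi_{\rho}^{*}(y)=+\infty$. Finally $\mu=\Phi_{\rho}'(\mu)\in I$, so $I$ contains $\mu$ as required.

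For part (2) I run the Cram\'er--Chernoff bound under $\rho_{\mu}$. Under this tilted measure the $\sigma_{i}$ are i.i.d.\ with mean $\mathbb{E}_{\rho_{\mu}}[\sigma]=\Phi_{\rho}'(\mu)=\mu$, so $U_{N}=m_{N}-\mu$ is centred. For $t>0$, Markov's inequality gives
\begin{equation*}
P\{U_{N}>u\}=P\{S_{N}>N(\mu+u)\}\le e^{-tN(\mu+u)}\big(\mathbb{E}_{\rho_{\mu}}[e^{t\sigma}]\big)^{N},
\end{equation*}
and a direct computation using $\int e^{s(Jx+h)}d\rho(s)=e^{J\Phi_{\rho}(x)}$ gives $\mathbb{E}_{\rho_{\mu}}[e^{t\sigma}]=\exp\big(J(\Phi_{\rho}(\mu+t/J)-\Phi_{\rho}(\mu))\big)$. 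Substituting $x=\mu+t/J$ and minimising over $x$ turns the exponent into a Legendre transform:
\begin{equation*}
P\{U_{N}>u\}\le\exp\Big(-NJ\big(\Phi_{\rho}^{*}(\mu+u)-\mu(\mu+u)+\Phi_{\rho}(\mu)\big)\Big),
\end{equation*}
the optimiser being $t^{*}=J((\Phi_{\rho}')^{-1}(\mu+u)-\mu)>0$ for $u>0$, so the bound is legitimate. To match (\ref{tesi.legendre}) I evaluate the duality at $y=\mu$: since $(\Phi_{\rho}')^{-1}(\mu)=\mu$ one gets $(\Phi_{\rho}^{*})'(\mu)=\mu$ and $\Phi_{\rho}^{*}(\mu)=\mu^{2}-\Phi_{\rho}(\mu)$, whence $\Phi_{\rho}^{*}(\mu)+(\Phi_{\rho}^{*})'(\mu)u=\mu(\mu+u)-\Phi_{\rho}(\mu)$, and the exponent becomes exactly $\Phi_{\rho}^{*}(\mu+u)-\Phi_{\rho}^{*}(\mu)-(\Phi_{\rho}^{*})'(\mu)u$.

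For part (3) set $x(u)=(\Phi_{\rho}^{*})'(\mu+u)=(\Phi_{\rho}')^{-1}(\mu+u)$, so $\Phi_{\rho}'(x(u))=\mu+u$ and $x(0)=\mu$. Using $(\Phi_{\rho}^{*})'(\mu)=\mu$ and $f'=J(\Phi_{\rho}'-\mathrm{id})$,
\begin{equation*}
(\Phi_{\rho}^{*})'(\mu+u)-(\Phi_{\rho}^{*})'(\mu)-u=x(u)-\Phi_{\rho}'(x(u))=-\tfrac{1}{J}f'(x(u)).
\end{equation*}
Because $\Phi_{\rho}'$ is increasing, $u>0$ forces $x(u)>\mu$; and the type-$k$ expansion $f'(x)=\lambda(x-\mu)^{2k-1}/(2k-1)!+o\big((x-\mu)^{2k-1}\big)$ with $\lambda<0$ shows $f'(x(u))<0$ on a right neighbourhood of $\mu$. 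Hence $\xi(u):=-f'(x(u))/J>0$ for all $u\in(0,u_{0})$ with $u_{0}$ small, which is (\ref{tesi.legendre.2}).

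The main obstacle is the exact matching in part (2): the Chernoff bound naturally produces the exponent $\Phi_{\rho}^{*}(\mu+u)-\mu(\mu+u)+\Phi_{\rho}(\mu)$, and rewriting this as the centred combination $\Phi_{\rho}^{*}(\mu+u)-\Phi_{\rho}^{*}(\mu)-(\Phi_{\rho}^{*})'(\mu)u$ demanded by the statement is precisely where the maximality of $\mu$ (through the self-dual identity $(\Phi_{\rho}')^{-1}(\mu)=\mu$) is indispensable. Everything else reduces to routine convex analysis and a single-term Taylor expansion of $f'$.
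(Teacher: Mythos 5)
Your proof is correct and follows essentially the same route as the paper: part (1) via the standard Legendre duality identities $(\Phi_{\rho}^{*})'=(\Phi_{\rho}')^{-1}$, $(\Phi_{\rho}^{*})''=1/\Phi_{\rho}''$ on the image of $\Phi_{\rho}'$; part (2) via the exponential Chebyshev bound under the tilted measure $\rho_{\mu}$ together with the self-dual identities $\Phi_{\rho}^{*}(\mu)=\mu^{2}-\Phi_{\rho}(\mu)$, $(\Phi_{\rho}^{*})'(\mu)=\mu$; part (3) from the sign of $f'=J(\Phi_{\rho}'-\mathrm{id})$ just to the right of $\mu$. The only (cosmetic) difference is in part (2), where you optimise the Chernoff exponent directly against $\Phi_{\rho}^{*}$ by computing $\mathbb{E}_{\rho_{\mu}}[e^{t\sigma}]$ in terms of $\Phi_{\rho}$, whereas the paper passes through the auxiliary transform $\Phi_{\rho_{\mu}}^{*}$ and then converts it back to $\Phi_{\rho}^{*}$; your version is slightly cleaner but the underlying argument is identical.
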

\begin{proof}
	Since $\Phi_{\rho}''(x)>0$ for all $x\in\mathbb{R}$ the function $\Phi_{\rho}'$ is strictly increasing and hence admits inverse $(\Phi_{\rho}')^{-1}$. By (\ref{legendre}) the function $\Phi_{\rho}^{*}$ is bounded if and only if there exists a point $x_{0}\in\mathbb{R}$ such that $y=\Phi_{\rho}'(x_{0})$. This condition is verified when $y$ belongs to the image of the function $\Phi_{\rho}'$. In this case we have
	\begin{equation}\label{legendre.passaggio.1}
	\Phi_{\rho}^{*}(y)=yx_{0}-\Phi_{\rho}(x_{0})\quad
	(\Phi_{\rho}^{*})'(y)=(\Phi_{\rho}')^{-1}(y)\quad
	(\Phi_{\rho}^{*})''(y)=\frac{1}{\Phi_{\rho}''(x_{0})}
	\end{equation}
	Thus $\Phi_{\rho}^{*}$ is real analytic and convex in particular with $(\Phi_{\rho}^{*})''(y)>0$. By (\ref{funzione.fi.2}) and (\ref{condizione.estremale}) we have $\Phi_{\rho}'(\mu)=\mu$, hence $\mu$ is inside the image of $\Phi_{\rho}'$. On the other hand for $y$ in the complement of the closure of the image of $\Phi_{\rho}'$ we have $\Phi_{\rho}^{*}(y)=+\infty$. This shows that the first sentence of lemma \ref{lemma.legendre} is proved taken $I$ equal to the image of $\Phi_{\rho}'$.\\
	
	Let $\nu$ be any measure on $\mathscr{B}$.  
	Choose $Jy+h>0$, by the exponential Chebyshev's inequality we can write:
	\begin{align*}
	P\bigg\{\sum_{i=1}^{N}&x_{i}>Nw\bigg\}\nonumber\\ &=P\bigg\{\exp\bigg(\sum_{i=1}^{N}x_{i}\beta(Jy+h)\bigg)>\exp\Big(Nw\beta(Jy+h)\Big)\bigg\}\nonumber\\
	&\leq E\bigg[x_{i}(Jy+h)\bigg]^{N}\exp(-Nw(Jy+h))\nonumber\\
	&=\exp(-N w(Jy+h))\prod_{i=1}^{N}\int_{\mathbb{R}}\exp( x_{i}(Jy+h))d\nu(x_{i})\nonumber\\
	&\leq\exp\Big(-N hw-N J\Big(wy-\frac{1}{ J}\int_{\mathbb{R}}\exp( x_{i}(Jy+h))d\nu(x_{i})\Big)\Big)\nonumber\\
	&\leq\exp\Big(-N hw-N J\sup\Big\{wy-\Phi_{\nu}(y)|(Jy+h)>0\Big\}\Big)
	\end{align*}
	where $\Phi_{\nu}$ is given by (\ref{funzione.fi}) with $\rho=\nu$ and $E[\cdot]$ denotes the expectation value with respect to the measure $\rho$. By convexity of the function $\Phi_{\nu}$, whenever $w>\int_{\mathbb{R}}xd\nu(x)$ the superior value of $\{wy-\Phi_{\nu}(y)\;|\;y\in\mathbb{R}\}$ is reached for $Jy+h>0$. This shows that:
	\begin{equation*}
	P\bigg\{\sum_{i=1}^{N}x_{i}>Nw\bigg\}\leq\exp\Big(-N hw-N J\Phi_{\nu}^{*}(w)\Big) \quad\text{whenever}\; w>\int_{\mathbb{R}}xd\nu(x).
	\end{equation*}
	Since $\mu$ is a maximum point of the function $f$ by the condition (\ref{condizione.estremale}) and the definition of the measure $\rho_{\mu}$ (\ref{misura.ro.x}) with $x=\mu$  
	\begin{equation*}
	\int_{\mathbb{R}}xd\rho_{\mu}(x)=\dfrac{\int_{\mathbb{R}}x\exp( x(J\mu+h))d\rho(x)}{\int_{\mathbb{R}}\exp( x(J\mu+h))d\rho_{\mu}(x)}=\mu<\mu+u.
	\end{equation*}
	Thus
	\begin{align*}
	P\{U_{N}(\boldsymbol{\sigma})>u\}&=P\{S_{N}(\boldsymbol{\sigma})>N(\mu+u)\}\nonumber\\
	&\leq\exp(-N h(\mu+u)-N J\Phi_{\rho_{\mu}}^{*}(\mu+u))
	\end{align*}
	where
	\begin{align*}
	\Phi^{*}_{\rho_{\mu}}((\mu+u)) &=\sup_{y\in{\mathbb{R}}}\Big\{(\mu+u)y-\frac{1}{ J}\ln\int_{\mathbb{R}}\exp( s(Jy+h))d\rho_{\mu}(s)\Big\}\nonumber\\\nonumber\\
	&=\sup_{y\in{\mathbb{R}}}\Big\{(\mu+u)y-\frac{1}{ J}\ln\!\!\int_{\mathbb{R}}\!\!\exp\Big( s\Big(J\Big(y+\mu+\frac{h}{J}\Big)+h\Big)\Big)d\rho(s)
	\nonumber\\
	&\quad+\frac{1}{ J}\ln\int_{\mathbb{R}}\exp( s(J\mu+h))d\rho(s)\Big\}\nonumber\\\nonumber\\
	&=-\mu^{2}-\mu u-\frac{h}{J}(\mu+u)+\Phi_{\rho}(\mu)\nonumber\\
	&\quad+\sup_{y\in{\mathbb{R}}}\Big\{(\mu+u)\Big(y+\mu+\frac{h}{J}\Big)-\Phi_{\rho}\Big(y+\mu+\frac{h}{J}\Big)\Big\}\nonumber\\\nonumber\\
	&=\Phi_{\rho}^{*}(\mu+u)-\mu^{2}-\mu u-\frac{h}{J}(\mu+u)+\Phi_{\rho}(\mu).
	\end{align*}\\
	Since $(\Phi_{\rho}')^{-1}(\mu)=\mu$, by (\ref{legendre.passaggio.1}) we have 
	\begin{equation*}
	\Phi_{\rho}^{*}(\mu)=\mu^{2}-\Phi_{\rho}(\mu)\quad
	(\Phi_{\rho}^{*})'(\mu)=\mu.
	\end{equation*}
	Thus
	\begin{align}
	P\{U_{N}(\boldsymbol{\sigma})>u\}&\leq\exp\Big(-N h(\mu+u)-N J(\Phi_{\rho}^{*}(\mu+u)-\Phi_{\rho}^{*}(\mu)\nonumber\\
	&\quad-(\Phi_{\rho}^{*})'(\mu)u-\frac{h}{J}\Big(\mu+u\Big)\Big)\nonumber\\
	&=\exp(-N J(\Phi_{\rho}^{*}(\mu+u)-\Phi_{\rho}^{*}(\mu)-(\Phi_{\rho}^{*})'(\mu)u)).
	\end{align}
	This proves the statement (\ref{tesi.legendre}).\\

	Since $\mu$ is a maximum point of $f$ there exists $u_{0}>0$ such that $x>\Phi_{\rho}'(x)$ as $x\in(\mu,\mu+u_{0})$.
	Thus,$(\Phi_{\rho}^{*})'(\mu+u)>\mu+u$ is true for any $u\in(0,u_{0})$. Since $(\Phi_{\rho}^{*})'(\mu)=\mu$ the sentence (\ref{tesi.legendre.2}) is proved.
\end{proof}

\begin{lemma}[Transfer Principle]\label{trasferimento}
	Let $\nu_{U}$ be the distribution of the random variable $U_{N}(\boldsymbol{\sigma})=m_{N}(\boldsymbol{\sigma})-\mu$ on $(\mathbb{R}^{N},\prod_{i=1}^{N}d\rho_{\mu}(\sigma_{i}))$. There exists $\widehat{B}>0$  only depending on $\rho$ such that for each $B\in(0,\widehat{B})$ and for each $a\in(0,B/2)$ and each $r\in\mathbb{R}$, there exists $\bar{\delta}=\bar{\delta}(a,B)>0$ such that as $N\rightarrow\infty$:
	\begin{multline*}
	\int_{\mathbb{R}}\!\!\exp\bigg(irN^{\gamma}w-\frac{N J}{2}w^{2}\bigg)\!\!\int_{|u|\leq a}\!\!\!\!\!\exp(N Juw)d\nu_{U}(u)dw\\
        =\int_{|w|\leq B}\!\!\!\!\!\exp\bigg(irN^{\gamma}w-\frac{N J}{2}w^{2}\bigg)\!\!\int_{\mathbb{R}}\!\!\exp(N Juw)d\nu_{U}(u)dw+O(e^{-N\bar{\delta}}).
	\end{multline*}
\end{lemma}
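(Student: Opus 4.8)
The plan is to bound the difference between the two sides directly and exhibit it as $O(e^{-N\bar\delta})$. Since the factor $\exp(irN^\gamma w)$ has modulus one, the numbers $r$ and $\gamma$ play no role in the estimates: I take absolute values throughout and the oscillatory term simply drops out. Writing $I(W,A)=\int_{W}\exp(irN^\gamma w-\tfrac{NJ}{2}w^2)\int_{A}\exp(NJuw)\,d\nu_U(u)\,dw$, the left-hand side is $I(\mathbb{R},[-a,a])$ and the main term on the right is $I([-B,B],\mathbb{R})$. Splitting the outer region of the former at $|w|=B$ and the inner region of the latter at $|u|=a$, then cancelling the common piece $I([-B,B],[-a,a])$, reduces the claim to showing that the two error integrals
\begin{align*}
T_1&=\int_{|w|>B}\exp\Big(irN^\gamma w-\tfrac{NJ}{2}w^2\Big)\int_{|u|\le a}\exp(NJuw)\,d\nu_U(u)\,dw,\\
T_2&=\int_{|w|\le B}\exp\Big(irN^\gamma w-\tfrac{NJ}{2}w^2\Big)\int_{|u|>a}\exp(NJuw)\,d\nu_U(u)\,dw
\end{align*}
are each $O(e^{-N\bar\delta})$.

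The term $T_1$ is elementary. Since $\nu_U$ is a probability measure and $|u|\le a$, the inner integral is at most $e^{NJa|w|}$, so $|T_1|\le 2\int_B^\infty e^{-\frac{NJ}{2}w^2+NJaw}\,dw$. Completing the square and substituting $t=w-a$ gives the Gaussian tail $e^{\frac{NJ}{2}a^2}\int_{B-a}^\infty e^{-\frac{NJ}{2}t^2}\,dt$, whose exponential rate is $\tfrac{J}{2}\big(a^2-(B-a)^2\big)=-\tfrac{J}{2}B(B-2a)$. This is strictly negative precisely because $a<B/2$, which is exactly the hypothesis; hence $T_1=O(e^{-N\bar\delta_1})$ with $\bar\delta_1=\tfrac{J}{2}B(B-2a)$.

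The term $T_2$ is the crux, and is where Lemma \ref{lemma.legendre} enters. After taking absolute values and applying Fubini, I first perform the Gaussian integral in $w$ over $[-B,B]$: completing the square in $-\tfrac{NJ}{2}w^2+NJuw$ produces a factor $e^{\frac{NJ}{2}u^2}$ when the peak $w=u$ lies inside $[-B,B]$ (i.e.\ $|u|\le B$), and a boundary factor $e^{NJ(|u|B-B^2/2)}$ when $|u|>B$. Thus $T_2$ is controlled, up to polynomial factors, by $\int_{a<|u|\le B}e^{\frac{NJ}{2}u^2}\,d\nu_U(u)$ together with $\int_{|u|>B}e^{NJ(|u|B-B^2/2)}\,d\nu_U(u)$. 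I estimate both by integration by parts against the large-deviation tail $P\{U_N>u\}\le e^{-NJ g(u)}$ of Lemma \ref{lemma.legendre}(2), where $g(u)=\Phi_\rho^*(\mu+u)-\Phi_\rho^*(\mu)-(\Phi_\rho^*)'(\mu)u$, the symmetric lower tail (same Chebyshev argument applied to $-U_N$) handling $u<-a$. The decisive input is Lemma \ref{lemma.legendre}(3): since $g'(u)=u+\xi(u)$ with $\xi>0$ on $(0,u_0)$, one has $g(u)-\tfrac12 u^2=\int_0^u\xi(t)\,dt>0$, strictly increasing. Choosing $\widehat B:=u_0$, which depends only on $\rho$, so that the whole range $|u|\le B<u_0$ lies where part 3 applies, the integrand $e^{\frac{NJ}{2}u^2-NJg(u)}$ decays at rate $\int_0^a\xi>0$; for $|u|>B$ convexity of $\Phi_\rho^*$ gives $g'(u)>B$, whence $g(u)-(|u|B-\tfrac{B^2}{2})\ge g(B)-\tfrac{B^2}{2}>0$ and this contribution is even smaller (it vanishes altogether once $\mu+|u|$ leaves the interval $I$ of Lemma \ref{lemma.legendre}(1), where $\Phi_\rho^*=+\infty$). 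This yields $T_2=O(e^{-N\bar\delta_2})$ with $\bar\delta_2=J\int_0^a\xi$, and taking $\bar\delta=\min(\bar\delta_1,\bar\delta_2)$ finishes the argument. The main difficulty is exactly this matching in $T_2$: the Gaussian integration in $w$ generates the growth $e^{\frac{NJ}{2}u^2}$, which would overwhelm any naive tail bound, and it is only the strict inequality $g(u)>\tfrac12 u^2$ supplied by part 3 of the Legendre lemma that makes the resulting exponential rate positive.
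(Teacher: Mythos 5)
Your proof is correct and follows essentially the same route as the paper's: the same reduction to the two cross terms $T_1$ and $T_2$, the same elementary Gaussian estimate for $T_1$ using $a<B/2$, and for $T_2$ the same combination of integration by parts against the tail bound of Lemma \ref{lemma.legendre} with the strict inequality $\Phi_\rho^*(\mu+u)-\Phi_\rho^*(\mu)-(\Phi_\rho^*)'(\mu)u-\tfrac12 u^2=\int_0^u\xi(t)\,dt>0$ from its third part. The only (harmless) divergence is technical: you integrate out $w$ over $[-B,B]$ first and then exploit convexity of $\Phi_\rho^*$ on the range $|u|>B$, which lets you take $\widehat{B}=u_0$, whereas the paper bounds the $w$-integral by $2B\sup_{|w|\le B}$ and, using only the linear tail bound $g(u)\ge u\theta_2$ for $u>u_0$, is forced to take $\widehat{B}<\theta_2=\xi(u_0/2)/2$.
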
\vspace{0.2cm}

\begin{proof}
	We shall find $\widehat{B}>0$ such that for each $B\in(0,\widehat{B})$ and each $a\in(0,B/2)$, there exists $\bar{\delta}=\bar{\delta}(a,B)>0$ such that as $N\rightarrow\infty$\\
	\begin{equation}\label{trasferimento.1}
	\int_{|w|>B}\exp\bigg(-\frac{N J}{2}w^{2}\bigg)\int_{|u|\leq a}\exp(N Juw)d\nu_{U}(u)dw=O(e^{-N\bar{\delta}})
	\end{equation}
	and
	\begin{equation}\label{trasferimento.2}
	\int_{|w|\leq B}\exp\bigg(-\frac{N J}{2}w^{2}\bigg)\int_{|u|>a}\exp(N Juw)d\nu_{U}(u)dw=O(e^{-N\bar{\delta}}).
	\end{equation}\\
	We start with equality (\ref{trasferimento.1}). For any $B>0$ and any $a\in(0,B/2)$ we have
	\begin{align}\label{trasferimento.passaggio.1}
	\int_{|w|>B}\!\!\!\!\exp\Big(-\frac{N J}{2}w^{2}\Big)\int_{|u|\leq a}\!\!\!\!\exp&(N Juw)d\nu_{U}(u)dw\nonumber\\
	&\leq 2\int_{B}^{+\infty}\!\!\!\exp\bigg(\!-\!N J\Big(\frac{w^{2}}{2}-aw\Big)\bigg)dw\nonumber\\
	&\leq 2 \int_{B}^{+\infty}\!\!\!\exp\bigg(\!-\!N Jw\Big(\frac{B}{2}-a\Big)\bigg)dw.
	\end{align}
	As $N\rightarrow\infty$ the latter integral in (\ref{trasferimento.passaggio.1}) is $O(e^{-N\bar{\delta}_{1}})$ whit $\bar{\delta}_{1}=B(B/2-a)$, thus the equality (\ref{trasferimento.1}) is proved.

	In the proof of identity (\ref{trasferimento.2}) we exploit the following result 
	\begin{equation}\label{integrazione.parti}
	E[Y1_{\{a\leq Y\leq b\}}]\leq aP(Y\geq a)\int_{a}^{b}P(Y\geq t)dt
	\end{equation}
	where $Y$ is a random variable whose distribution is given by $\rho_{Y}$, $E[\cdot]$ denotes the expectation value with respect to the distribution $\rho_{Y}$ and $1_{\{a\leq Y\leq b\}}$ is the indicator function of the set ${\{a\leq Y\leq b\}}$.
	The inequality (\ref{integrazione.parti}) is obtained integrating by parts the left-hand side of the following:
	\begin{equation*}
	\int_{a}^{b}P(Y\geq t)dt=bP(Y\geq b)-aP(Y\geq a)-\int_{a}^{b}tP(Y\geq t)'dt
	\end{equation*}
	and observing that $P(Y\geq t)'=-\rho_{Y}(t)$.
	
	The left-hand side of equality (\ref{trasferimento.2}) is upper bounded by
	\begin{equation}\label{trasferimento.passaggio.2}
	2B\sup_{|w|\leq B}\;\int_{|u|>a}\exp\bigg(-N J\Big(\frac{w^{2}}{2}-uw\Big)\bigg)d\nu_{U}(u).
	\end{equation}
	The integral in (\ref{trasferimento.passaggio.2}) breaks up into one over $(a,+\infty)$ and another over $(-\infty,a)$. For the first using (\ref{integrazione.parti}), we obtain
	\begin{align}\label{trasferimento.passaggio.3}
	\sup_{|w|\leq B}&\int_{a}^{+\infty}\exp\bigg(-N J\Big(\frac{w^{2}}{2}-uw\Big)\bigg)d\nu_{U}(u)\nonumber\\&\leq\sup_{|w|\leq B}\exp\bigg(-N J\Big(\frac{w^{2}}{2}-wa\Big)\bigg)P\{U_{N}(\boldsymbol{\sigma})>a\}\nonumber\\
	&\quad + JNB\sup_{|w|\leq B}\int_{a}^{+\infty}\!\!\!\!\exp\bigg(\!-N J\Big(\frac{w^{2}}{2}-uw\Big)\bigg)P\{U_{N}(\boldsymbol{\sigma})>u\}du.
	\end{align}
	By (\ref{tesi.legendre}) we can bound $P\{U_{N}(\boldsymbol{\sigma})>u\}$, where $u\geq a$. In particular for $u\geq a$ it holds
	\begin{equation}\label{trasferimento.passaggio.4}
	\Phi_{\rho}^{*}(\mu+u)-\Phi_{\rho}^{*}(\mu)-(\Phi_{\rho}^{*})'(\mu)u\geq
	\begin{cases}
	u^{2}+\theta_{1} &\text{for}\;a\leq u\leq u_{0}\\
	u\theta_{2} &\text{for}\;u>u_{0}
	\end{cases}
	\end{equation}
	where $\theta_{1}=\int_{0}^{a}\xi(t)dt>0$ and $\theta_{2}=\xi(u_{0}/2)/2$.\\
	We consider an interval $I$ such that lemma \ref{lemma.legendre} is verified. For all $\mu+u\in\bar{I}^{c}$ the (\ref{trasferimento.passaggio.4}) holds since $\Phi_{\rho}^{*}(\mu+u)=+\infty$. For $\mu+u\in\bar{I}$, if $a\leq u\leq u_{0}$ by (\ref{tesi.legendre.2}) we have:
	\begin{align*}
	\Phi_{\rho}^{*}(\mu+u)-\Phi_{\rho}^{*}(\mu)-(\Phi_{\rho}^{*})'(\mu)u &=\int_{0}^{u}(\Phi_{\rho}^{*})'(\mu+t) -(\Phi_{\rho}^{*})'(\mu)dt\nonumber\\
	&=\int_{0}^{u}t+\xi(t)dt\nonumber\\
	&=\frac{u^{2}}{2}+\int_{0}^{u}\xi(t)dt\nonumber\\
	&\geq \frac{u^{2}}{2}+\theta_{1}.
	\end{align*}
	This prove the first line of (\ref{trasferimento.passaggio.4}). If $u>u_{0}$, for $u_{0}/2\leq t\leq u$
	\begin{equation*}
	(\Phi_{\rho}^{*})'(\mu+t) -(\Phi_{\rho}^{*})'(\mu)\geq(\Phi_{\rho}^{*})'\Big(\mu+\frac{u_{0}}{2}\Big)-(\Phi_{\rho}^{*})'(\mu)\geq\xi\Big(\frac{u_{0}}{2}\Big) 
	\end{equation*}
	thus if $u\geq u_{0}$
	\begin{align*}
	\int_{0}^{u}(\Phi_{\rho}^{*})'(\mu+t) -(\Phi_{\rho}^{*})'(\mu)dt &\geq \int_{u_{0}/2}^{u}(\Phi_{\rho}^{*})'(\mu+t) -(\Phi_{\rho}^{*})'(\mu)dt\nonumber\\
	&\geq \Big(u-\frac{u_{0}}{2}\Big)\xi\Big(\frac{u_{0}}{2}\Big)\nonumber\\
	&\geq u\theta_{2}.
	\end{align*}
	This proves the second line of (\ref{trasferimento.passaggio.4}).

	Choose $\hat{B}$ such that $0<\hat{B}<\theta_{2}$, for any $B\in(0,\hat{B})$ using (\ref{tesi.legendre}) and (\ref{trasferimento.passaggio.4}) we have
	\begin{align*}
	NB J\sup_{|w|\leq B}&\int_{a}^{+\infty}\exp\bigg(- JN\bigg(\frac{w^{2}}{2}-uw\bigg)\bigg)P\{U_{N}(\boldsymbol{\sigma})>u\}du\nonumber\\
	&\leq NB J\sup_{|w|\leq B}\int_{a}^{u_{0}}\exp\bigg(-N\bigg(\frac{ J}{2}w^{2}- 
	Juw+\frac{u^{2}}{2}+\theta_{1}\bigg)\bigg)du\nonumber\\
	&\quad+NB J\sup_{|w|\leq B}\int_{u_{0}}^{+\infty}\exp\bigg(-N\bigg(\frac{ J}{2}w^{2}- Juw+u\theta_{2}\bigg)\bigg)du.
	\end{align*}
	Since
	\begin{equation*}
	\int_{a}^{u_{0}}\!\!\exp\bigg(\!\!-N\bigg(\frac{ J}{2}w^{2}- 
	Juw+\frac{u^{2}}{2}+\theta_{1}\bigg)\bigg)du=e^{-N\theta_{1}}\!\int_{a}^{u_{0}}\!\!\!\exp\bigg(-\frac{N J}{2}\Big(u-w\Big)^{2}\bigg)du
	\end{equation*}
	and
	\begin{equation*}
	\int_{u_{0}}^{+\infty}\!\!\!\exp\bigg(\!-N\bigg(\frac{ J}{2}w^{2}- Juw+u\theta_{2}\bigg)\bigg)du=\frac{\exp\bigg(-N\Big(\dfrac{w^{2}}{2}+u_{0}(\theta_{2}-w)\Big)\bigg)}{N(\theta_{2}-w)}
	\end{equation*}\\
	we obtain
	\begin{multline*}
	NB J\sup_{|w|\leq B}\int_{a}^{+\infty}\exp\bigg(- JN\bigg(\frac{w^{2}}{2}-uw\bigg)\bigg)P\{U_{N}(\boldsymbol{\sigma})>u\}du\\=O(Ne^{-N\theta_{1}})+O(e^{-Nu_{0}(\theta_{2}-B)}).
	\end{multline*}
	Thus the last line of (\ref{trasferimento.passaggio.3}) is $O(e^{-N\bar{\delta}_{2}})$ where $\bar{\delta}_{2}=\min\{\theta_{1}/2,u_{0}(\theta_{2}-B))\}$. Concerning the term of (\ref{trasferimento.passaggio.3}) involving $P\{U_{N}(\boldsymbol{\sigma})>a\}$ we have
	\begin{multline*}
	\sup_{|w|\leq B}\exp\bigg(-N J\Big(\frac{w^{2}}{2}-wa\Big)\bigg)P\{U_{N}(\boldsymbol{\sigma})>a\}\\\leq\sup_{|w|\leq B}\exp\bigg(-N J\Big(\frac{w^{2}}{2}-wa+\frac{a^{2}}{2}+\theta_{1}\Big)\bigg)=O(e^{-N\theta_{1}}).
	\end{multline*}
	The integral over $(-\infty,a)$ is handled in the same way. Thus we have proved identity (\ref{trasferimento.1}) and (\ref{trasferimento.2}) with $\bar{\delta}=\min\{\bar{\delta}_{1},\bar{\delta}_{2}\}$.
\end{proof}
Now we can prove the theorem \ref{teorema.3}.\\
{\it Proof of Theorem \ref{teorema.3}}.
	Given $k>1$, to prove the statement we must find $A>0$ such that for each $r\in\mathbb{R}$
	and any $a\in(0,A)$ when the magnetization $m_{N}(\boldsymbol{\sigma})$ is inside $[\mu-a,\mu+a]$, the Gibbs value of the characteristic function of the random variable $\overline{S}_{k}(\boldsymbol{\sigma})$:
	\begin{equation}\label{teo.2.passaggio.1}
	\Big\langle e^{ir\overline{S}_{k}(\boldsymbol{\sigma})}\Big||m_{N}(\boldsymbol{\sigma})-\mu|\leq a\Big\rangle_{BG}=\dfrac{\displaystyle{\int_{|m_{N}(\boldsymbol{\sigma})-\mu|\leq a}}\!\!\!\!e^{ir\overline{S}_{k}(\boldsymbol{\sigma})}e^{- H_{N}(\boldsymbol{\sigma})}\prod_{i=1}^{N}d\rho(\sigma_{i})}{\displaystyle{\int_{|m_{N}(\boldsymbol{\sigma})-\mu|\leq a}}\!\!\!\!e^{- H_{N}(\boldsymbol{\sigma})}\prod_{i=1}^{N}d\rho(\sigma_{i})}
	\end{equation}
	tends as $N\rightarrow\infty$ to
	\begin{equation}\label{teo.2.isultato}
	\dfrac{\displaystyle{\int_{\mathbb{R}}}\exp(irs)\exp\Big(\dfrac{\lambda}{(2k)!}s^{2k}\Big)ds}{\displaystyle{\int_{\mathbb{R}}}\exp\Big(\dfrac{\lambda}{(2k)!}s^{2k}\Big)ds}.
	\end{equation}
	Defining
	\begin{equation*}
	\widetilde{H}_{N}(\boldsymbol{\sigma})=-\frac{J}{2}\bigg(\frac{S_{N}(\boldsymbol{\sigma})-N\mu}{\sqrt{N}}\bigg)^{2}
	\end{equation*}\\
	we can write (\ref{teo.2.passaggio.1}) as
	\begin{equation*}
	\Big\langle e^{ir\overline{S}_{k}(\boldsymbol{\sigma})}\Big||m_{N}(\boldsymbol{\sigma})-\mu|\leq a\Big\rangle_{BG}=\dfrac{\displaystyle{\int_{|m_{N}(\boldsymbol{\sigma})-\mu|\leq a}}\!\!\!e^{ir\overline{S}_{k}(\boldsymbol{\sigma})}e^{- \widetilde{H}_{N}(\boldsymbol{\sigma})}\prod_{i=1}^{N}d\rho_{\mu}(\sigma_{i})}{\displaystyle{\int_{|m_{N}(\boldsymbol{\sigma})-\mu|\leq a}}\!\!\!e^{- \widetilde{H}_{N}(\boldsymbol{\sigma})}\prod_{i=1}^{N}d\rho_{\mu}(\sigma_{i})}
	\end{equation*}
	where $\rho_{\mu}$ is the measure defined by (\ref{misura.ro.x}) with $x=\mu$.\\ 
	Consider the random variable 
	\begin{equation*}
	U_{N}(\boldsymbol{\sigma})=\frac{S_{N}(\boldsymbol{\sigma})-N\mu}{N} 
	\end{equation*}
	and let $\nu_{U}$ be its distribution on $(\mathbb{R}^{N},\prod_{i=1}^{N}d\rho_{\mu}(\sigma_{i}))$. We can write:\\
	\begin{equation}\label{teo.2.passaggio.3}
	\Big\langle e^{ir\overline{S}_{k}(\boldsymbol{\sigma})}\Big||U_{N}(\boldsymbol{\sigma})|\leq a\Big\rangle_{BG}=\dfrac{\displaystyle{\int_{|u|\leq a}}\exp(irN^{\gamma}u)\exp\Big(\frac{N J}{2}u^{2}\Big)d\nu_{U}(u)}{\displaystyle{\int_{|u|\leq a}}\exp\Big(\frac{N J}{2}u^{2}\Big)d\nu_{U}(u)}.
	\end{equation}
	By identity (\ref{trasformazione.gaussiana}) with $m=u$ after the simplification of the term $\sqrt{N J/2\pi}$ the right-hand side of (\ref{teo.2.passaggio.3}) becomes
	\begin{equation}\label{teo.2.passaggio.4}
	\dfrac{\displaystyle{\int_{|u|\leq a}}\exp(irN^{\gamma}u)\displaystyle{\int_{\mathbb{R}}}\exp\Big(-\frac{N J}{2}x^{2}+N Jux\Big)d\nu_{U}(u)dx}{\displaystyle{\int_{\{|u|\leq a\}\times{\mathbb{R}}}}\exp\Big(-\frac{N J}{2}x^{2}+N Jux\Big)d\nu_{U}(u)dx}.
	\end{equation}
	Making the change of variable
	\begin{equation}\label{cambio.variabili.teo.2}
	w=x+\dfrac{ir}{ JN^{1-\gamma}}
	\end{equation}
	the (\ref{teo.2.passaggio.4}) becomes\\
	\begin{equation}\label{teo.2.passaggio.5}
	\dfrac{\exp\bigg(\dfrac{r^{2}}{2 JN^{1-2\gamma}}\bigg)\displaystyle{\int_{\mathbb{R}}}\exp\Big(irN^{\gamma}w-\frac{N J}{2}w^{2}\Big)\displaystyle{\int_{|u|\leq a}}\exp(N Juw)d\nu_{U}(u)dw}{\displaystyle{\int_{\mathbb{R}}}\exp\Big(-N\frac{ J}{2}w^{2}\Big)\displaystyle{\int_{|u|\leq a}}\exp(N Juw)d\nu_{U}(u)dw}.
	\end{equation}
	The change of variable (\ref{cambio.variabili.teo.2}) is justified by the analyticity of the integrand in (\ref{teo.2.passaggio.5}) as function of $w$ complex and the rapid decrease of this integrand to zero as $|Re(w)|\rightarrow\infty$ and $|Im(w)|\leq|r|N^{\gamma}$. Since $k>1$
	we have that 
	\begin{equation*}
	\exp\Big(\dfrac{r^{2}}{2 JN^{1-2\gamma}}\Big)\rightarrow 1\quad\text{as}\; N\rightarrow\infty
	\end{equation*}
	hence we can neglect this term for the rest of the proof.
	Using the transfer principle \ref{trasferimento} we can find $\hat{B}>0$ such that the (\ref{teo.2.passaggio.5}) can be written as 
	\begin{equation}\label{teo.2.passaggio.6}
	\dfrac{\displaystyle{\int_{|w|\leq \hat{B}}}\exp\Big(irN^{\gamma}w-\frac{N J}{2}w^{2}\Big)\displaystyle{\int_{\mathbb{R}}}\exp(N Juw)d\nu_{U}(u)dw}{\displaystyle{\int_{|w|\leq \hat{B}}}\exp\Big(-N\frac{ J}{2}w^{2}\Big)\displaystyle{\int_{\mathbb{R}}}\exp(N Juw)d\nu_{U}(u)dw}+O(e^{-N\bar{\delta}}).
	\end{equation}
	Making the change of variable $s=N^{\gamma}w$ and pick $\bar{B}=\min\{\delta,\hat{B}\}$, where $\delta$ is taken such that the conditions (\ref{proprieta.B}) are verified, we have for (\ref{teo.2.passaggio.6})
	\begin{equation*}
	\dfrac{\displaystyle{\int_{|s|\leq\bar{B}N^{\gamma}}}\exp(irs)\exp\bigg(- JN^{1-2\gamma}\frac{s^{2}}{2}\bigg)\displaystyle{\int_{\mathbb{R}}}\exp( JN^{1-\gamma}us)d\nu_{U}(u)ds}{\displaystyle{\int_{|s|\leq\bar{B}N^{\gamma}}}\exp\bigg(- JN^{1-2\gamma}\frac{s^{2}}{2}\bigg)\displaystyle{\int_{\mathbb{R}}}\exp( JN^{1-\gamma}us)d\nu_{U}(u)dt}+O(e^{-N\bar{\delta}}).
	\end{equation*}
	where:
	\begin{align*}
	\int_{\mathbb{R}}\exp( J&N^{1-\gamma}us)d\nu_{U}(u)\\
	&=\dfrac{\displaystyle{\int_{\mathbb{R}^{N}}}\exp\bigg(\frac{ J}{N^{\gamma}}s(S_{N}(\boldsymbol{\sigma})-N\mu)\bigg)\exp( S_{N}(\boldsymbol{\sigma})(J\mu+h))\prod_{i=1}^{N}d\rho(\sigma_{i})}{\displaystyle{\int_{\mathbb{R}^{N}}}\exp( S_{N}(\boldsymbol{\sigma})(J\mu+h))\prod_{i=1}^{N}d\rho(\sigma_{i})}\nonumber\\
	&=\exp\bigg(N J\bigg(\Phi\Big(\mu+\frac{s}{N^{\gamma}}\Big)-\Phi(\mu)-\mu\frac{s}{N^{\gamma}}\bigg)\bigg)
	\end{align*}
	Thus
	\begin{align*}
	\exp\bigg(- JN^{1-2\gamma}&\frac{s^{2}}{2}\bigg)\int_{\mathbb{R}}\exp( JN^{1-\gamma}us)d\nu_{U}(u)\nonumber\\
	&=\exp\bigg(N J\bigg(-\frac{s^{2}}{2N^{2\gamma}}+\Phi\Big(\mu+\frac{s}{N^{\gamma}}\Big)-\Phi(\mu)-\mu\frac{s}{N^{\gamma}}\bigg)\bigg)\nonumber\\
	&=\exp\bigg(N\bigg(f\Big(\mu+\frac{s}{N^{\gamma}}\Big)-f(\mu)\bigg)\bigg)\nonumber\\
	&=\exp\bigg(NB\bigg(\frac{s}{N^{\gamma}};\mu\bigg)\bigg).
	\end{align*}
	By conditions expressed in (\ref{proprieta.B}) and the dominated convergence theorem the statement follows.
\qed\\
\section{Example: the Curie-Weiss model}
Now we describe the Curie-Weiss model, that is a model defined by Hamiltonian (\ref{Hamiltoniana.1}) and distribution (\ref{misura.BG.1}) where $\rho$ is given by 
\begin{equation*}
\rho(x)=\frac{1}{2}\Big(\delta(x-1)+\delta(x+1)\Big).
\end{equation*}
For further arguments related to this model see \cite{ellis2005entropy}.
The definition of $\rho$ implies that the space of all configuration is $\Omega_{N}=\{1,-1\}^{N}$.
The function $f$ given by (\ref{funzionale.pressione}) becomes
\begin{equation}\label{funzione.f.curie}
f(x)=-\frac{ J}{2}x^{2}+\ln\cosh(Jx +h)
\end{equation}
whose estremality condition is given by the so called mean-field equation
\begin{equation}\label{campo.medio}
\mu=\tanh(J\mu+h).
\end{equation}
The solutions of this equation are the intersections between the hyperbolic tangent $y=\tanh(J\mu+h)$ and the line $y=\mu$. 
As $h\neq 0$, for any positive value of $ J$ the equation (\ref{campo.medio}) admits a unique solution $\mu_{h}$ different from zero that has the same sign as the field $h$. This solution is the unique maximum point of the function $f$. On the other hand as $h=0$ the number of solutions of equation (\ref{campo.medio}) depends on the slope $ J$ of the hyperbolic tangent. If $ J\leq 1$ there is a unique solution, the zero, which is the unique maximum point of the function $f$. If $ J>1$, the equation (\ref{campo.medio}) admits other two solutions $\pm\mu_{0}$. In this case the function $f$ reaches its maximum in $\pm\mu_{0}$. To determinate the type and the strength of the maximum points of $f$ as parameters $J$ and $h$ change, we compute the even derivatives of $f$ in the points until we obtain a value different from zero. We obtain
\begin{enumerate}
\item if $h\neq 0$ and $J>0$ the maximum point $\mu_{h}$ is of type $k=1$ and strength $\lambda=- J(1- J(1-\mu_{h}^{2})$;
\item if $h=0$ and $ J<1$ the maximum point $0$ is of type $k=1$ and strength $\lambda=- J(1- J)$;
\item if $h=0$ and $ J>1$ maximum points $\pm\mu_{0}$ are of type $k=1$ and strength $\lambda=- J(1- J(1-\mu_{0}^{2})$;
\item if $h=0$ and $ J=1$ the maximum point $0$ is of type $k=2$ and strength $\lambda=-2$.
\end{enumerate}
By theorem \ref{teorema.1} we get the distribution in the thermodynamic limit of the magnetization:
\begin{equation*}
 m_{N}(\boldsymbol{\sigma})\overset{\mathscr{D}}{\rightarrow}
\begin{cases}
\delta(x-\mu_{h})&h\neq 0, J>0\\
\delta(x)&h=0, J\leq 1\\
\frac{1}{2}\delta(x-\mu_{0})+\frac{1}{2}\delta(x+\mu_{0})&h=0, J>1.  
\end{cases}
\end{equation*}
Defined the susceptibility of the model as $\chi=\partial\mu/\partial h$, by the main field equation (\ref{campo.medio}) we obtain
\begin{equation*}
\chi=\frac{(1-\mu^{2})}{1-J(1-\mu^{2})}.
\end{equation*}
By theorem \ref{teorema.2} it is easy to check that in the thermodynamic limit\\
\begin{align*}
\dfrac{S_{N}(\boldsymbol{\sigma})-N\mu}{\sqrt{N}}\overset{\mathscr{D}}{\rightarrow} N(0,\chi) &\quad\;\text{as $J>0$ and $h\neq 0$} \\\\
\dfrac{S_{N}(\boldsymbol{\sigma})}{\sqrt{N}}\overset{\mathscr{D}}{\rightarrow} N(0,\chi)&\quad\;\text{as $0< J<1$ and $h=0$}\\\\
\dfrac{S_{N}(\boldsymbol{\sigma})}{N^{3/4}}\overset{\mathscr{D}}{\rightarrow} \dfrac{\exp\Big(-\dfrac{x^{4}}{12}\Big)dx}{\displaystyle{\int_{\mathbb{R}}}\exp\Big(-\dfrac{x^{4}}{12}\Big)dx}&\quad\;\text{as $J=1$ and $h=0$}.
\end{align*}\\
If $ J>1$ and $h=0$ the function $f$ admits two global maximum points $\pm\mu_{0}$. Considering the point $\mu_{0}$, by theorem \ref{teorema.3} there exists $A>0$ such that for all $a\in(0,A)$ if $m_{N}(\boldsymbol{\sigma})\in[\mu_{0}-a,\mu_{0}+a]$
$$\frac{S_{N}(\boldsymbol{\sigma})- N\mu_{0}}{\sqrt{N}}\overset{\mathscr{D}}{\rightarrow} N(0,\chi)$$ 
An analogous result holds for the point $-\mu_{0}$.\\

To complete the description of the Curie-Weiss model we analyze its phase transition. A phase transition point is any point of non-analyticity of the thermodynamic limit of the pressure occurring for real $h$ and/or real positive $J$. If $h\neq 0$ it is easy to show that there is not any phase transition.
The situation is totally different as $h=0$. In absence of the field $h$ we have:
\begin{equation*}
\lim_{N\rightarrow\infty}p_{N}(J,0)=\begin{cases}
0 & \text{when $ J\leq 1$} \\
-\frac{ J}{2}\mu_{0}^{2}+\ln\cosh( J\mu_{0}) & \text{when $ J>1$}.
\end{cases}
\end{equation*}
As $ J\rightarrow 1^{+}$ the spontaneous magnetization $\mu_{0}$ tends to zero, thus the limit of the pressure is continuous for every values of $J$. Differentiating this limit with respect to $J$ we obtain:
\begin{align*}
\frac{\partial}{\partial J}\Big(\lim_{N\rightarrow\infty}p_{N}(J,h)\Big) &=-\frac{J}{2}\mu^{2}- J\mu\frac{\partial\mu}{\partial J}+\tanh(J\mu+h)\Big(\mu+ J\frac{\partial\mu}{\partial J}\Big)\\
&=\frac{1}{2}\mu^{2}.
\end{align*}
Thus in zero magnetic field 
\begin{equation*}
\frac{\partial}{\partial J}\Big(\lim_{N\rightarrow\infty}p_{N}(J,0)\Big)=\begin{cases}
0 & \text{when $ J\leq 1$} \\
\frac{1}{2}\mu_{0}^{2} & \text{when $ J>1$}.
\end{cases}
\end{equation*}\\
Also this function is continuous in $J$. If we differentiate another time the limit of the pressure we get:
\begin{equation*}
\frac{\partial^{2}}{\partial J^{2}}\Big(\lim_{N\rightarrow\infty}p_{N}(J,h)\Big)=\mu\frac{\partial \mu}{\partial J}.
\end{equation*}\\
Since
\begin{equation}
\mu\frac{\partial \mu}{\partial J}=\frac{1}{2}\frac{\partial \mu^{2}}{\partial J}
\end{equation}\\
in zero field we have:
\begin{equation}\label{transizione.fase}
\frac{\partial^{2}}{\partial J^{2}}\Big(\lim_{N\rightarrow\infty}p_{N}(J,0)\Big)=\begin{cases}
0 & \text{when $ J\leq 1$} \\\\
\dfrac{1}{2}\dfrac{d\mu_{0}^{2}}{dJ} & \text{when $ J>1$}.
\end{cases}
\end{equation}
Just below $J=1$ the value of $\mu_{0}$ is small, thus we can expand the hyperbolic tangent of the mean field equation (\ref{campo.medio}):
\begin{equation}\label{espansione.tangente.iperbolica}
\mu_{0}= J\mu_{0}-\frac{( J\mu_{0})^{3}}{3}+O(\mu_{0}^{5}) \quad\quad\text{as}\;\; J\rightarrow 1^{+}.
\end{equation}
Since $\mu_{0}$ is different from zero as $ J>1$, we can divide by $ J\mu_{0}$ the equation (\ref{espansione.tangente.iperbolica}). We obtain 
\begin{equation*}
\frac{1}{ J}=1-\frac{( J\mu_{0})^{2}}{3}+O(\mu_{0}^{4}) \quad\quad\text{as}\;\;  J\rightarrow 1^{+}.
\end{equation*}
Thus: 
\begin{equation*}
\mu_{0} \sim\bigg( \frac{3}{( J)^{2}}\Big( 1-\frac{1}{ J}\Big)\bigg)^{\frac{1}{2}}\sim\bigg(3\Big( 1-\frac{1}{ J}\Big)\bigg)^{\frac{1}{2}}\quad \quad\text{as}\;\;  J\rightarrow 1^{+}
\end{equation*}
and the second line of (\ref{transizione.fase}) can be approximate in the following way:
\begin{equation}\label{approssimazione.transizione}
\frac{1}{2}\frac{d\mu_{0}^{2}}{dJ} \sim\frac{1}{2}\frac{d}{dJ}\bigg(3\Big( 1-\frac{1}{ J}\Big)\bigg)=\frac{3}{2J^{2}} \quad\quad \text{as}\;\;  J\rightarrow\frac{1}{J}^{+}.
\end{equation}\\
By (\ref{approssimazione.transizione}) it follows that the second derivative of the thermodynamic limit (\ref{transizione.fase}) is discontinuous. The model exhibits a phase transition of the second order for $h=0$ and $J=1$. We claim that for this choice of the parameters the normalize sum of spins does not converge to a Gaussian distribution in the thermodynamic limit. Thus the theorems \ref{teorema.1} and \ref{teorema.2} are potent tools to obtain information about the criticality of a phase.
\clearpage{\pagestyle{empty}\cleardoublepage}

\chapter[The Multi-Species Mean-Field Model]{The Multi-Species Mean-Field Model}
\lhead[\fancyplain{}{\bfseries\thepage}]{\fancyplain{}{\bfseries\rightmark}}
In this chapter we deal with the multi-species generalization of the Curie-Weiss model. After computing the exact solution of the thermodynamic limit of the pressure we analyze the asymptotic behaviour of the normalize random vector whose components are the sums of the spins of each species.
\section{The model}
We consider a system of $N$ particles that can be divided into $n$ subsets $P_{1},\dots , P_{n}$ with $P_{l}\cap P_{s}=\emptyset$, for $l\neq s$ and sizes $|P_{l}|=N_{l}$, where $\sum_{l=1}^{n}N_{l}=N$. Particles interact 
with each other and with an extern field according to the mean field Hamiltonian:
\begin{equation}\label{hamiltoniana.multi.1}
H_{N}(\boldsymbol{\sigma})=-\frac{1}{2N}\sum_{i,j=1}^{N}J_{ij}\sigma_{i}\sigma_{j}-\sum_{i=1}^{N}h_{i}\sigma_{i} \; .
\end{equation}
The $\sigma_{i}$ represents the spin of the particle $i$, while $J_{ij}$ is the parameter that tunes the mutual 
interaction between the particle $i$ and the particle $j$ and takes values according to the following symmetric matrix:

\begin{displaymath}
         \begin{array}{ll}
                \\
                N_1 \left\{ \begin{array}{ll||}
                                      \\
                                   \end{array}  \right.
                                        \\
                N_2 \left\{ \begin{array}{ll||}
                                        \\
                                   \end{array}  \right.
                                          \\
                                         \\
                                         \\
                      
                  N_n \left\{ \begin{array}{ll||}
                     \\
            \\
                \\
                                  \end{array}  \right.
         \end{array}
          \!\!\!\!\!\!\!\!
         \begin{array}{ll||}
                \quad
                 \overbrace{\qquad }^{\textrm{$N_1$}}\;
                 \overbrace{\qquad }^{\textrm{$N_2$}}\qquad\quad\;
                 \overbrace{\qquad\qquad\quad }^{\textrm{$N_n$}}
                  \\
                 \left(\begin{array}{c|c|cc|ccc}
                               \mathbf{ J}_{11}  &  \mathbf{ J}_{12} & &\;\dots\; & &\;\;\mathbf{ J}_{1n}\;\;&
                                \\
                                 \hline
                              \mathbf{ J}_{12} & \mathbf{ J}_{22} & & & & &\\
                             \hline
                             & & & & & &\\
                             \vdots & & & & & &\\
                             \hline
                             & & & & & &\\
                             \mathbf{ J}_{1n} & \mathbf{ J}_{2n} & &\;\dots\; & &\;\;\mathbf{ J}_{nn}\;\; &\\
                             & & & & & &
                      \end{array}\right)
               \end{array}
\end{displaymath}\\
\noindent where each block $\mathbf{J}_{ls}$ has constant elements $J_{ls}$. For $l=s$, $\mathbf{J}_{ll}$ is a square matrix, whereas the matrix $\mathbf{ J}_{ls}$ is rectangular. We assume $J_{11}, J_{22},\dots , J_{nn}$ to be positive, whereas $J_{ls}$ with $l\neq s$ can be either positive or negative allowing both ferromagnetic and antiferromagnetic interactions. The vector field takes also different values depending on the subset the particles belong to as specified by: 

\begin{displaymath}
         \begin{array}{ll}
                N_1 \left\{ \begin{array}{ll}
                                      \\
                                   \end{array}  \right.
                                        \\
                N_2 \left\{ \begin{array}{ll}
                                        \\
                                   \end{array}  \right.
                                          \\
                                         \\
                                         \\
                      
                  N_n \left\{ \begin{array}{ll}
                     \\
            \\
                \\
                                  \end{array}  \right.

           \!\!\!\!\!\!
    \end{array}
    \!\!\!\!\!\!
    \left(\begin{array}{ccc|c}
                \mathbf{h}_{1}
            \\
            \hline
            
            \mathbf{h}_{2}
            \\
            \hline
            \\
            \vdots
            \\
            \hline
            \\
            \mathbf{h}_{n}
            \\
            \\
        \end{array}\right)
\end{displaymath}\\
\noindent where each $\mathbf{h}_{l}$ is a vector of costant elements $h_{l}$.

The joint distribution of a spin configuration $\boldsymbol{\sigma}=(\sigma_{1},\dots ,\sigma_{N})$ is given by the Boltzmann-Gibbs measure:
\begin{equation}\label{misura.BG.multi}
P_{N,\mathbf{J},\mathbf{h}}\{\boldsymbol{\sigma}\}=\frac{\exp(-H_{N}(\boldsymbol{\sigma}))}{Z_{N}(\mathbf{J},\mathbf{h})}\prod\limits_{i=1}^{N}d\rho(\sigma_{i})
\end{equation}
where $Z_{N}(\mathbf{J},\mathbf{h})$ is the partition function
\begin{equation}\label{partizione.multi}
Z_{N}(\mathbf{J},\mathbf{h})=\int_{\mathbb{R}^{N}}\exp(-H_{N}(\boldsymbol{\sigma}))\prod\limits_{i=1}^{N}d\rho(\sigma_{i})\end{equation}
and $\rho$ is the measure:
\begin{equation}\label{misura.ro.multi}
\rho(x)=\frac{1}{2}\Big(\delta(x-1)+\delta(x+1)\Big)
\end{equation}
where $\delta(x-x_{0})$ with $x_{0}\in\mathbb{R}$ denotes the unit point mass with support at $x_{0}$. The definition of $\rho$ implies that each spin can take only the values $\pm 1$. The inverse temperature $\beta$ isn't explicitly written because it is included in the parameters of the Hamiltonian.



By introducing the magnetization of a set of spins $A$ as:
\begin{equation*}
m_{A}(\boldsymbol{\sigma})=\frac{1}{|A|}\sum_{i \in A}\sigma_{i}
\end{equation*}
\noindent and indicating by $m_{l}(\boldsymbol{\sigma})$ the magnetization of the set $P_{l}$, and by $\alpha_{l}=N_{l}/N$ the relative size of the set $P_{l}$, we may easily express the Hamiltonian (\ref{hamiltoniana.multi.1}) as:
\begin{equation}\label{hamiltoniana.multi.2}
 H_{N}(\boldsymbol{\sigma})=-Ng\Big(m_{1}(\boldsymbol{\sigma}),\dots,m_{n}(\boldsymbol{\sigma})\Big)
\end{equation}
\noindent where the function $g$ is:
\begin{equation}\label{funzione.g}
g(x_{1},\dots,x_{n})=\frac{1}{2}\sum\limits_{l, s=1}^{n}\alpha_{l}\alpha_{s}J_{ls}x_{l}x_{s}+\sum\limits_{l=1}^{n}\alpha_{l}h_{l}x_{l}  .
\end{equation}
Defined
\begin{equation}\label{interazioneridotta}
\mathbf{J}=\begin{pmatrix}
J_{11}  & J_{12} & \dots & J_{1n}\\
J_{12}  & J_{22} & \dots & J_{2n}\\
\vdots&\vdots&&\vdots \\
J_{1n}  & J_{2n} & \dots & J_{nn}
\end{pmatrix}
\quad\quad\quad
\mathbf{h}=\begin{pmatrix}
h_{1}  \\
h_{2} \\
\vdots \\
h_{n}
\end{pmatrix}
\end{equation}\\
we can write the function $g$ in a compact way as
\begin{equation}\label{funzione.g.compatta}
 g(\mathbf{x})=\frac{1}{2}\langle\widetilde{\mathbf{J}}\mathbf{x},\mathbf{x}\rangle+\langle\widetilde{\mathbf{h}},\mathbf{x}\rangle
\end{equation}\\
\noindent where the matrix $\widetilde{\mathbf{J}}=\mathbf{D}_{\boldsymbol{\alpha}}\mathbf{D}_{\boldsymbol{\alpha}}\mathbf{J}\mathbf{D}_{\boldsymbol{\alpha}}\mathbf{D}_{\boldsymbol{\alpha}}$, the vector $\widetilde{\mathbf{h}}=\mathbf{D}_{\boldsymbol{\alpha}}\mathbf{D}_{\boldsymbol{\alpha}}\mathbf{h}$ and the matrix $\mathbf{D}_{\boldsymbol{\alpha}}=diag\{\sqrt{\alpha_{1}},\dots,\sqrt{\alpha_{n}}\}$.
The matrix $\mathbf{J}$ is called reduce interaction matrix.

\section{Thermodynamic limit}
The existence of the thermodynamic limit of the pressure 
\begin{equation*}
 p_{N}(\mathbf{J},\mathbf{h})=\frac{1}{N}\ln Z_{N}(\mathbf{J},\mathbf{h})
\end{equation*}
associated to the model defined by Hamiltonian (\ref{hamiltoniana.multi.1}) and distribution (\ref{misura.BG.multi}) is proved in the paper \cite{gallo2008bipartite}. 
In this section we compute the exact solution of this limit exploiting a tails estimation on the number of configurations that share the same vector of the magnetizations. In this way we obtain a lower and an upper bound for the partition function that converge to a same value as $N\rightarrow\infty$. This technique is used by Talagrand to compute the thermodynamic limit for the Curie-Weiss model \cite{talagrand2003spin}. 

Since the spins variable can take only two values we can write the partition function as
\begin{equation*}
Z_{N}(\mathbf{J},\mathbf{h})=\frac{1}{2^{N}}\sum_{\boldsymbol{\sigma}\in\Omega_{N}}\exp(-H_{N}(\boldsymbol{\sigma}))
\end{equation*}
where $\Omega_{N}=\{-1,1\}^{N}$ is the space of all possible configuration $\boldsymbol{\sigma}$.
 
Denote with $\boldsymbol{\sigma}_{l}$ the configuration of the spins of the set $P_{l}$ and define
\begin{equation}\label{cardinalita}
 A_{\mu_{l}}=\card\bigg\{\boldsymbol{\sigma}_{l}\in\Omega_{N_{l}}\Big| m_{l}(\boldsymbol{\sigma})=\mu_{l}\bigg\}
\end{equation}
using the Hamiltonian expressed as function of the magnetizations (\ref{hamiltoniana.multi.2}) where the function $g$ is given by (\ref{funzione.g.compatta}) we can write:
\begin{equation*}
Z_{N}(\mathbf{J},\mathbf{h})=\frac{1}{2^{N}}\sum_{\boldsymbol{\mu}}\prod_{l=1}^{n}A_{\mu_{l}}\exp\bigg(N\Big(\frac{1}{2}\langle\widetilde{\mathbf{J}}\boldsymbol{\mu},\boldsymbol{\mu}\rangle+\langle\widetilde{\mathbf{h}},\boldsymbol{\mu}\rangle\Big)\bigg)
\end{equation*}
where the sum extends over all the possible values of the random vector $(m_{1}(\boldsymbol{\sigma}),\dots,m_{n}(\boldsymbol{\sigma}))$. 
\begin{lemma}\label{lemma.talagrand}
	Consider the set $\Omega_{N_{l}}=\{-1,1\}^{N_{l}}$ of all possible configuration $\boldsymbol{\sigma}_{l}$. Let $A_{\mu_{l}}$ be a positive number defined by (\ref{cardinalita}). Then the following inequality holds
	\begin{equation}\label{bound}
	\frac{1}{C}\frac{2^{N_{l}}}{\sqrt{N_{l}}}\exp(-N_{l}\mathscr{I}(\mu_{l}))\leq A_{\mu_{l}}\leq 2^{N_{l}}\exp(-N_{l}\mathscr{I}(\mu_{l}))
	\end{equation}
	where $C$ is a constant and
	\begin{equation}\label{entropia}
	\mathscr{I}(x)=\frac{1}{2}\Big((1+x)\ln(1+x)+(1-x)\ln(1-x)\Big)
	\end{equation}
\end{lemma}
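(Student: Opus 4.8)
The plan is to recognise $A_{\mu_l}$ as a single binomial coefficient and then sandwich it between entropy exponentials: by an elementary ``single term of a probability vector'' argument for the upper bound, and by Stirling's formula with explicit remainder for the lower bound. First I would note that a configuration $\boldsymbol{\sigma}_l\in\Omega_{N_l}$ with $m_l(\boldsymbol{\sigma})=\mu_l$ is determined by the set of sites carrying a $+1$ spin: if $k$ of the $N_l$ spins equal $+1$ then $\mu_l=(2k-N_l)/N_l$, so $k=N_l(1+\mu_l)/2$ is forced and
\begin{equation*}
A_{\mu_l}=\binom{N_l}{\,N_l(1+\mu_l)/2\,}.
\end{equation*}
Since $A_{\mu_l}$ is assumed positive, $\mu_l$ is one of the admissible lattice values for which $k$ is an integer in $\{0,\dots,N_l\}$. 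Writing $p=k/N_l=(1+\mu_l)/2$ and $q=1-p=(1-\mu_l)/2$, a direct computation using (\ref{entropia}) gives the key identity $-p\ln p-q\ln q=\ln 2-\mathscr{I}(\mu_l)$, so that the target factor $2^{N_l}\exp(-N_l\mathscr{I}(\mu_l))$ is exactly $\exp\!\big(N_l(-p\ln p-q\ln q)\big)$.

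For the upper bound in (\ref{bound}) I would use the fact that the terms of the binomial distribution with parameter $p$ sum to one, so the single term with index $k$ is at most $1$:
\begin{equation*}
\binom{N_l}{k}p^{k}q^{N_l-k}\le 1,\qquad\text{hence}\qquad \binom{N_l}{k}\le p^{-k}q^{-(N_l-k)}=2^{N_l}\exp(-N_l\mathscr{I}(\mu_l)),
\end{equation*}
where the last equality is the identity above evaluated at $k=N_lp$, $N_l-k=N_lq$.

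For the lower bound I would insert Stirling's formula with explicit remainder, $n!=\sqrt{2\pi n}\,n^{n}e^{-n}e^{\theta_n}$ with $0<\theta_n<1/(12n)$, into $\binom{N_l}{k}=N_l!/\big(k!\,(N_l-k)!\big)$. The power factors reassemble into $\exp\!\big(N_l(-p\ln p-q\ln q)\big)=2^{N_l}\exp(-N_l\mathscr{I}(\mu_l))$, the Gaussian prefactor becomes $(2\pi N_l\,pq)^{-1/2}$, and the remainder is $\exp(\theta_{N_l}-\theta_k-\theta_{N_l-k})$. Because $pq=(1-\mu_l^{2})/4\le 1/4$ we have $(pq)^{-1/2}\ge 2$ uniformly in $\mu_l$, and because the remainder exponent is bounded below by $-1/(12k)-1/(12(N_l-k))\ge -1/6$ whenever $1\le k\le N_l-1$, the product is at least $\tfrac{1}{C}\,2^{N_l}N_l^{-1/2}\exp(-N_l\mathscr{I}(\mu_l))$ with $C=\tfrac{\sqrt{2\pi}}{2}\,e^{1/6}$. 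The two degenerate values $\mu_l=\pm 1$ (that is $k=0$ or $k=N_l$) are checked by hand: there $A_{\mu_l}=1$, $\mathscr{I}(\mu_l)=\ln 2$, and the claimed bounds reduce to $\tfrac{1}{C}N_l^{-1/2}\le 1\le 1$.

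The only delicate point is making sure the lower-bound constant $C$ is genuinely independent of $\mu_l$; this is precisely what the uniform estimates $(pq)^{-1/2}\ge 2$ and $\theta_{N_l}-\theta_k-\theta_{N_l-k}\ge -1/6$ secure. Beyond that careful bookkeeping of the Stirling remainder near the boundary, no obstacle of substance remains, since the whole statement is the standard entropy estimate for a binomial coefficient recast through the identity $-p\ln p-q\ln q=\ln 2-\mathscr{I}$.
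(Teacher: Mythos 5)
Your proposal is correct, and it diverges from the paper's proof in one substantive way. For the lower bound both arguments identify $A_{\mu_l}$ with the binomial coefficient $\binom{N_l}{N_l(1+\mu_l)/2}$ and apply Stirling's formula; you use the version with explicit remainder $e^{\theta_n}$, $0<\theta_n<1/(12n)$, which is what actually guarantees that the constant $C$ is uniform in $\mu_l$ and in $N_l$ (the paper only invokes the asymptotic form $n!\sim n^n e^{-n}\sqrt{2\pi n}$ and leaves this uniformity implicit), and you also check the boundary cases $k=0,N_l$ by hand, which the paper does not address. For the upper bound the routes genuinely differ: the paper passes to the uniform measure on $\Omega_{N_l}$, bounds $P\{m_l=\mu_l\}$ by the tail $P\{m_l\geq\mu_l\}$, and runs an exponential Chebyshev (Chernoff) argument optimized at $\lambda=\tanh^{-1}(\mu_l)$; you instead observe that the single term $\binom{N_l}{k}p^kq^{N_l-k}$ of the Binomial$(N_l,p)$ distribution with $p=(1+\mu_l)/2$ is at most $1$, which gives $\binom{N_l}{k}\leq p^{-k}q^{-(N_l-k)}=2^{N_l}e^{-N_l\mathscr{I}(\mu_l)}$ in one line. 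Your version is shorter and manifestly valid for all signs of $\mu_l$, whereas the paper's Chernoff computation, as written with $\lambda>0$, implicitly assumes $\mu_l>0$ and would need the mirror-image tail for $\mu_l<0$; the paper's approach has the merit of being the template that generalizes to non-Bernoulli single-spin measures, but for $\rho=\frac12(\delta_{-1}+\delta_{+1})$ your elementary bound is the cleaner one. The identity $-p\ln p-q\ln q=\ln 2-\mathscr{I}(\mu_l)$ that you use to reconcile the two normalizations is verified correctly, so no gap remains.
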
\vspace{0.5cm}
\begin{proof}
 As $m_{l}(\boldsymbol{\sigma})=\mu_{l}$, the configuration $\boldsymbol{\sigma}_{l}$ contains $N_{l}(1+\mu_{l})/2$ times the value $1$ and $N_{l}(1-\mu_{l})/2$ times the value $-1$, thus we have:
\begin{equation*}
 A_{\mu_{l}}=
\begin{pmatrix}
 N_{l}\\
\frac{N_{l}(1+\mu_{l})}{2}
\end{pmatrix}
\end{equation*}
Using Stirling's formula, $n!\sim n^{n}e^{-n}\sqrt{2\pi n}$, we get\\
\begin{align}\label{lower.bound}
 A_{\mu_{l}}&\geq\sqrt{\frac{2}{\pi}}\dfrac{1}{\sqrt{N_{l}(1-\mu_{l}^{2})}}\dfrac{N_{l}^{N_{l}}}{\Big(\frac{N_{l}(1+\mu_{l})}{2}\Big)^{N_{l}(1+\mu_{l})/2}\Big(\frac{N_{l}(1-\mu_{l})}{2}\Big)^{N_{l}(1-\mu_{l})/2}}\nonumber\\\nonumber\\
&\geq\frac{1}{C}\frac{2^{N_{l}}}{\sqrt{N_{l}}}\dfrac{1}{(1+\mu_{l})^{N_{l}(1+\mu_{l})/2}(1-\mu_{l})^{N_{l}(1-\mu_{l})/2}}\nonumber\\\nonumber\\
&=\frac{1}{C}\frac{2^{N_{l}}}{\sqrt{N_{l}}}\exp(-N_{l}\mathscr{I}(\mu_{l}))
\end{align}
The (\ref{lower.bound}) gives a lower bound of $A_{\mu_{l}}$. To obtain an upper bound for $A_{\mu_{l}}$ we suppose the spins $\sigma_{i}$ independent. In this case all configurations $\boldsymbol{\sigma}_{l}$ have same probability, hence
\begin{equation*}
A_{\mu_{l}}=2^{N_{l}}P\bigg\{m_{l}(\boldsymbol{\sigma})=\mu_{l}\bigg\}\leq 2^{N_{l}}P\bigg\{m_{l}(\boldsymbol{\sigma})\geq \mu_{l}\bigg\}
\end{equation*}
where by definition of the magnetization
\begin{equation}\label{talagrand.passaggio.3}
 P\bigg\{m_{l}(\boldsymbol{\sigma})\geq \mu_{l}\bigg\}=P\bigg\{S_{l}(\boldsymbol{\sigma})\geq \mu_{l} N_{l}\bigg\}.
\end{equation}
Take $\lambda>0$ by Chebyshev's inequality we can bound the probability (\ref{talagrand.passaggio.3})
\begin{align}\label{talagrand.passaggio.4}
P\bigg\{S_{l}(\boldsymbol{\sigma})\geq \mu_{l} N_{l}\bigg\}&\leq e^{-\lambda\mu_{l} N_{l}}\prod_{i=1}^{N_{l}}E_{\rho}[\exp(\lambda\sigma_{i})]\nonumber\\
&=\exp(N_{l}(-\lambda\mu_{l} +\ln\cosh\lambda))\nonumber\\
&\leq\min_{\lambda}\{\exp(N_{l}(-\lambda\mu_{l}+\ln\cosh\lambda))\}
\end{align}
where $E_{\rho}[\cdot]$ denotes the expectation value with respect to the measure $\rho$ given by (\ref{misura.ro.multi}). If $|\mu_{l}|<1$, the exponent in the last line of (\ref{talagrand.passaggio.4}) is minimized for
\begin{equation}\label{talagrand.passaggio.5}
 \lambda=\tanh^{-1}(\mu_{l})=\frac{1}{2}\ln\bigg(\dfrac{1+\mu_{l}}{1-\mu_{l}}\bigg)
\end{equation}
Since $1/(\cosh^{2}y)=1-\tanh^{2}y$ the following equality holds
\begin{equation}\label{talagrand.passaggio.6}
 \ln\cosh\lambda=-\frac{1}{2}\ln(1-\mu_{l}^{2})
\end{equation}
Thus by (\ref{talagrand.passaggio.5}) and (\ref{talagrand.passaggio.6}) 
\begin{equation*}
\min_{\lambda}\{\exp(N_{l}(-\lambda\mu_{l}+\ln\cosh\lambda))\}=\exp(-N_{l}\mathscr{I}(\mu_{l})).
\end{equation*}
Hence we obtain the following upper bound for $A_{\mu_{l}}$
\begin{equation}\label{upper.bound}
A_{\mu_{l}}\leq 2^{N_{l}}\exp(-N_{l}\mathscr{I}(\mu_{l})).
\end{equation}
The statement (\ref{bound}) follows by (\ref{lower.bound}) and (\ref{upper.bound}).
\end{proof}
\vspace{0.5cm}
The lemma \ref{lemma.talagrand} allows to bound the partition function in the following way:
\begin{equation*}
\frac{1}{C}\prod_{l=1}^{n}\frac{1}{\sqrt{N_{l}}}\exp\Big(N\max_{\boldsymbol{\mu}}\bar{f}(\boldsymbol{\mu})\Big)\leq Z_{N}(\mathbf{J},\mathbf{h})\leq \prod_{l=1}^{n}(N_{l}+1)\exp\Big(N\max_{\boldsymbol{\mu}}\bar{f}(\boldsymbol{\mu})\Big)
\end{equation*}
where
\begin{equation}\label{funzione.f.bar}
 \bar{f}(x_{1},\dots,x_{n})=\frac{1}{2}\sum_{l,s=1}^{n}\alpha_{l}\alpha_{s}J_{ls}x_{l}x_{s}+\sum_{l=1}^{n}\alpha_{l}h_ {l}x_{l}-\sum_{l=1}^{n}\alpha_{l}\mathscr{I}(x_{l}).
\end{equation}
and the function $\mathscr{I}$ is defined in (\ref{entropia}).
Hence for the pressure we have:
\begin{multline*}
-\frac{1}{N}\bigg(\ln C+\frac{1}{2}\sum_{l=1}^{n}\ln N_{l}\bigg)+\max_{\boldsymbol{\mu}}\bar{f}(\boldsymbol{\mu})\leq p_{N}(\mathbf{J},\mathbf{h})\\\leq \frac{1}{N}\bigg(\sum_{l=1}^{n}\ln(N_{l}+1)\bigg)+\max_{\boldsymbol{\mu}}\bar{f}(\boldsymbol{\mu}).
\end{multline*}
Thus the limit as $N\rightarrow\infty$ of the pressure is obtained by maximazing the function $\bar{f}$ defined by (\ref{funzione.f.bar}). Differentiating $\bar{f}$ with respect to $x_{1},\dots,x_{n}$ we obtain the mean field equations of the model:
\begin{equation}\label{campomedio.multi}
\begin{cases}
x_{1} &\!\!\!\!= \tanh\Big(\sum\limits_{l=1}^{n}\;\alpha_{l}J_{1l}\;x_{l}+h_{1}\Big) \\
x_{2} &\!\!\!\!=\tanh\Big(\sum\limits_{l=1}^{n}\;\alpha_{l}J_{2l}\;x_{l}+h_{2}\Big)\\
\;\vdots\\
x_{n} &\!\!\!\!=\tanh\Big(\sum\limits_{l=1}^{n}\;\alpha_{l}J_{ln}\;x_{l}+h_{n}\Big) \; .
\end{cases} 
\end{equation}

When the reduce interaction matrix $\mathbf{J}$ is positive define we can compute the thermodynamic limit also using the Thompson method \cite{thompson1988classical}.  
Considering the Hamiltonian written as function of the magnetizations (\ref{hamiltoniana.multi.2}) where the function $g$ is given by (\ref{funzione.g.compatta}), the partition function (\ref{partizione.multi}) can be expressed in the form: 
\begin{equation*}
 Z_{N}(\mathbf{J},\mathbf{h})=\int_{\mathbb{R}^{n}}\exp\bigg(N\Big(\frac{1}{2}\langle\widetilde{\mathbf{J}}\mathbf{m},\mathbf{m}\rangle+\langle\widetilde{\mathbf{h}},\mathbf{m}\rangle\Big)\bigg)d\nu_{M}(\mathbf{m})
\end{equation*}
where $\nu_{M}$ denotes the distribution of the random vector $(m_{1}(\boldsymbol{\sigma}),\dots,m_{n}(\boldsymbol{\sigma}))$ on $(\mathbb{R}^{N},\prod_{i=1}^{N}\rho(\sigma_{i}))$. 

Since $\mathbf{J}$ is a positive define matrix the following identity holds
\begin{equation}\label{trasformazione.gaussiana.multi}
\exp\bigg(\frac{N}{2}\langle\widetilde{\mathbf{J}}\mathbf{m},\mathbf{m}\rangle\bigg)=\bigg(\dfrac{N\det\widetilde{\mathbf{J}}}{(2\pi)^{n}}\bigg)^{\frac{1}{2}}\!\!\!\int_{\mathbb{R}^{n}}\!\!\!\exp\bigg(\!\!-\frac{N}{2}\langle\widetilde{\mathbf{J}}\mathbf{x},\mathbf{x}\rangle+N\langle\widetilde{\mathbf{J}}\mathbf{x},\mathbf{m}\rangle\bigg)d\mathbf{x}.
\end{equation}
Using (\ref{trasformazione.gaussiana.multi}) the partition function becomes
\begin{align*}
Z_{N}(\mathbf{J},\mathbf{h})&=\bigg(\frac{N\det\widetilde{\mathbf{J}}}{(2\pi)^{n}}\bigg)^{\frac{1}{2}}\\&\quad\times\iint_{\mathbb{R}^{2n}}\exp\bigg(N\Big(-\frac{1}{2}\langle\widetilde{\mathbf{J}}\mathbf{x},\mathbf{x}\rangle+\langle\widetilde{\mathbf{h}},\mathbf{m}\rangle+\langle\widetilde{\mathbf{J}}\mathbf{x},\mathbf{m}\rangle\Big)\bigg)d\nu_{M}(\mathbf{m})d\mathbf{x}\\
&=\bigg(\frac{N\det\widetilde{\mathbf{J}}}{(2\pi)^{n}}\bigg)^{\frac{1}{2}}\\
&\quad\times\!\!\!\int_{\mathbb{R}^{n}}\!\!\exp\Big(N\Big(\!\!-\frac{1}{2}\langle\widetilde{\mathbf{J}}\mathbf{x},\mathbf{x}\rangle\Big)\Big)\!\!\int_{\mathbb{R}^{n}}\!\!\!\exp\Big(\!N\Big(\langle\widetilde{\mathbf{J}}\mathbf{x}+\widetilde{\mathbf{J}},\mathbf{m}\rangle\Big)\Big)d\nu_{M}(\mathbf{m})d\mathbf{x}.
\end{align*}\\
Since
\begin{align*}
 \int_{\mathbb{R}^{n}}\!\!\!\exp\bigg(N\Big(&\langle\widetilde{\mathbf{J}}\mathbf{x}+\widetilde{\mathbf{J}},\mathbf{m}\rangle\Big)\bigg)d\nu_{M}(\mathbf{m})\\&=\int_{\mathbb{R}^{N}}
\!\!\!\exp\bigg(\sum_{l=1}^{n}\alpha_{l}\sum_{i\in P_{l}}\sigma_{i}\Big(\sum_{s=1}^{n}\alpha_{s}J_{ls}x_{s}+h_{l}\Big)\bigg)\prod_{i\in P_{l}}d\rho(\sigma_{i})\nonumber\\
&=\prod_{l=1}^{n}\prod_{i\in P_{l}}\int_{\mathbb{R}}\exp\bigg(\alpha_{l}\sigma_{i}\Big(\sum_{s=1}^{n}\alpha_{s}J_{ls}x_{s}+h_{l}\Big)\bigg)d\rho(\sigma_{i})
\end{align*}\\
summing over the spins we obtain
\begin{equation*}
Z_{N}(\mathbf{J},\mathbf{h})=\bigg(\frac{N\det\widetilde{\mathbf{J}}}{(2\pi)^{n}}\bigg)^{\frac{1}{2}}\int_{\mathbb{R}^{n}}\exp(Nf(\mathbf{x}))d\mathbf{x}
\end{equation*}
where
\begin{equation}\label{funzionale.pressione.multi}
 f(x_{1},\dots,x_{n})=-\frac{1}{2}\sum_{l, s=1}^{n}\alpha_{l}\alpha_{s}J_{ls}x_{l}x_{s}+\!\sum_{l=1}^{n}\alpha_{l}\ln\bigg(\!\cosh\bigg(\sum_{s=1}^{n}\alpha_{s}J_{ls}x_{s}+h_{l}\bigg)\bigg).
\end{equation}

We can state the following:
\begin{prop}\label{prop.multi}
	Let $f$ be the function defined in (\ref{funzionale.pressione.multi}) associated to a model defined by the Hamiltonian (\ref{hamiltoniana.multi.2}). If the reduced interaction matrix $\mathbf{J}$ is positive define, then
	\begin{enumerate}
	\item $f$  has a finite number (different from zero) of global maximum points.
	\item for any positive $N\in\mathbb{R}$
	\begin{equation}\label{proprieta.f.multi.1}
	\int_{\mathbb{R}^{n}}\exp(Nf(\mathbf{x}))d\mathbf{x}<\infty
	\end{equation}
	\item if $\boldsymbol{\mu}$ is a global maximum point of $f$
	\begin{equation}\label{proprieta.f.multi.2}
	\lim_{N\rightarrow\infty}\frac{1}{N}\ln\int_{\mathbb{R}^{n}}\exp(Nf(\mathbf{x}))d\mathbf{x}=f(\boldsymbol{\mu}).
	\end{equation}
	\end{enumerate}
\end{prop}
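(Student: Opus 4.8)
The plan is to mirror the one--dimensional argument of Proposition~\ref{proposizione.f}, with the hypothesis that $\mathbf{J}$ be positive definite taking over the role that $J>0$ had in the scalar case. The first observation is that the function $f$ of (\ref{funzionale.pressione.multi}) is real analytic on all of $\mathbb{R}^{n}$: it is the sum of a quadratic polynomial and of the terms $\alpha_{l}\ln\cosh(\,\cdot\,)$, and since $\cosh>0$ everywhere the map $\ln\cosh$ is analytic on the whole real line. The crucial step is a coercivity estimate. The quadratic part of $f$ is $-\tfrac12\langle\widetilde{\mathbf{J}}\mathbf{x},\mathbf{x}\rangle$, and $\widetilde{\mathbf{J}}$ is a congruence transform of $\mathbf{J}$ by the invertible positive diagonal matrix $\mathbf{D}_{\boldsymbol{\alpha}}\mathbf{D}_{\boldsymbol{\alpha}}$; hence it inherits positive--definiteness and $\langle\widetilde{\mathbf{J}}\mathbf{x},\mathbf{x}\rangle\ge c\,|\mathbf{x}|^{2}$ with $c=\lambda_{\min}(\widetilde{\mathbf{J}})>0$. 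Using $\ln\cosh(y)\le|y|$ the logarithmic part of $f$ grows at most linearly, so that
\begin{equation*}
f(\mathbf{x})\le-\frac{c}{2}\,|\mathbf{x}|^{2}+C\,|\mathbf{x}|+C'
\end{equation*}
for suitable constants $C,C'$, and therefore $f(\mathbf{x})\to-\infty$ as $|\mathbf{x}|\to\infty$.

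For the first statement, existence of a global maximum follows exactly as in Proposition~\ref{proposizione.f}: a maximizing sequence is bounded by the coercivity just established, so it has a convergent subsequence whose limit realizes the supremum by continuity, and every global maximum lies in a compact sublevel set of $f$. The delicate point is to pass from \emph{contained in a compact set} to \emph{finite in number}. In one dimension analyticity gives this for free, since the critical points are isolated zeros of $f'$; in dimension $n\ge2$ a real--analytic function may a priori attain its maximum on a continuum, so one must argue that the level set $\{\mathbf{x}:f(\mathbf{x})=\max f\}$, being a compact real--analytic variety, is in fact finite. I would obtain this by noting that every critical point solves the mean--field equations (\ref{campomedio.multi}) and hence lies in the open box $(-1,1)^{n}$, and then invoking the structure theory of real--analytic sets (finitely many connected components, each reduced to a point once the relevant non--degeneracy is in force). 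This is the step I expect to be the main obstacle, the only place where the scalar reasoning does not transcribe verbatim.

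The remaining two statements are soft and need only continuity and the coercivity bound. For (\ref{proprieta.f.multi.1}) the estimate above gives, for every $N>0$,
\begin{equation*}
\exp(Nf(\mathbf{x}))\le e^{NC'}\exp\!\Big(-\frac{Nc}{2}|\mathbf{x}|^{2}+NC|\mathbf{x}|\Big),
\end{equation*}
whose right--hand side is Gaussian and integrable over $\mathbb{R}^{n}$; completing the square yields finiteness at once for all positive real $N$, so that no induction on $N$ is required. For (\ref{proprieta.f.multi.2}) I would write $\int_{\mathbb{R}^{n}}e^{Nf}\,d\mathbf{x}=e^{Nf(\boldsymbol{\mu})}I_{N}$ with $I_{N}=\int_{\mathbb{R}^{n}}e^{N(f(\mathbf{x})-f(\boldsymbol{\mu}))}\,d\mathbf{x}$. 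Since the exponent is $\le0$, $I_{N}$ is non--increasing in $N$, so $I_{N}\le I_{1}<\infty$ and $\tfrac1N\ln\int e^{Nf}\le f(\boldsymbol{\mu})+\tfrac1N\ln I_{1}\to f(\boldsymbol{\mu})$. For the matching lower bound, continuity gives for each $\varepsilon>0$ a ball $B$ of radius $\delta_{\varepsilon}$ about $\boldsymbol{\mu}$ on which $f-f(\boldsymbol{\mu})>-\varepsilon$, whence $I_{N}\ge\mathrm{vol}(B)\,e^{-N\varepsilon}$ and $\tfrac1N\ln\int e^{Nf}\ge f(\boldsymbol{\mu})-\varepsilon$ in the limit; letting $\varepsilon\to0$ closes the argument. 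This is the $n$--dimensional transcription of the Laplace estimate of Proposition~\ref{proposizione.f}, with balls replacing intervals.
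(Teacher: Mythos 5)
Your proposal is correct in substance and follows the paper's overall strategy (coercivity of $f$, Laplace bounds via the monotone integral $I_{N}$ over a ball around $\boldsymbol{\mu}$), but it diverges from the paper in two places worth noting. For statement (\ref{proprieta.f.multi.1}) the paper argues by induction on $N$, computing the $N=1$ case explicitly with a Gaussian integral and then using $\int e^{Nf}\leq e^{F}\int e^{(N-1)f}$; you instead dominate $e^{Nf(\mathbf{x})}$ directly by an integrable Gaussian obtained from $\langle\widetilde{\mathbf{J}}\mathbf{x},\mathbf{x}\rangle\geq\lambda_{\min}(\widetilde{\mathbf{J}})|\mathbf{x}|^{2}$ and $\ln\cosh y\leq|y|$. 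Your route is shorter and, unlike the induction, actually covers all positive \emph{real} $N$ as the statement demands, so it is arguably the better argument. For statement (1) you correctly identify the one genuinely delicate point: in dimension $n\geq 2$ analyticity plus coercivity alone do not force the set of global maximizers to be finite (it could a priori be a compact analytic variety of positive dimension), whereas the paper simply asserts finiteness "by analyticity" exactly as in the scalar case. Your sketch (critical points solve the mean--field equations, hence lie in $(-1,1)^{n}$, then appeal to the structure of real--analytic sets "once the relevant non--degeneracy is in force") is not a complete proof --- the non--degeneracy is precisely what would need to be established --- but you have not introduced a gap that the paper avoids; you have merely made explicit a gap the paper leaves implicit. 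Statement (3) you prove exactly as the paper does, with the volume of the ball $B(\boldsymbol{\mu},\delta_{\epsilon})$ replacing the interval length.
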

\vspace{0.3cm}
\begin{proof}
	Since $\cosh y\leq e^{|y|}$ for all $y\in\mathbb{R}$ the following inequality holds:\\
	\begin{equation}\label{convessitaG}
	f(x_{1},\dots,x_{n}) \leq -\frac{1}{2}\sum_{l, s=1}^{n}\alpha_{l}\alpha_{s}J_{ls}x_{l}x_{s}+\sum_{l=1}^{n}\alpha_{l}\bigg|\sum_{s=1}^{n}\alpha_{s}J_{ls}\;x_{s}+h_{l}\bigg|\; .
	\end{equation}\\
	The function on the right-hand side of (\ref{convessitaG}) goes to $-\infty$ as $|\mathbf{x}|\rightarrow\infty$. Thus by (\ref{convessitaG}) the function $f$ have to show the same limiting behaviour. This property together with the analyticity assures that the function $f$ admits a finite number (different from zero) of global maxima points.

	We prove the second statement by induction. As $N=1$, since $\alpha_{l}<1$ for each $l=1,\dots,n$, we have:\\
	\begin{align*}
	\int_{\mathbb{R}^{n}}&e^{f(\mathbf{x})}d\mathbf{x}\nonumber\\
	&\leq\int_{\mathbb{R}^{n}}\exp\Big(\!-\frac{1}{2}\sum_{l, s=1}^{n}\alpha_{l}\alpha_{s}J_{ls}x_{l}	x_{s}\Big)
	\prod_{l=1}^{n}\cosh\Big(\sum_{s=1}^{n}\alpha_{s}J_{ls}x_{s}+h_{l}\Big)dx_{1}\dots dx_{n}\nonumber\\
	&\leq\exp\Big(n\max_{i}\{h_{i}\}+\max_{\mathbf{b}_{\boldsymbol{\sigma}}}\Big\{\frac{1}{2}\langle\widetilde{\mathbf{J}}^{-1}\mathbf{b}_{\boldsymbol{\sigma}},\mathbf{b}_{\boldsymbol{\sigma}}\rangle\Big\}\Big)
	\end{align*}\\
	\noindent where $\mathbf{b}_{\boldsymbol{\sigma}}=(\alpha_{1}\sum_{i=1}^{n}\sigma_{i}J_{1i},\dots,\alpha_{n}\sum_{i=1}^{n}\sigma_{i}J_{in})$. This proves the statement for $N=1$. Defined $F=\max \{f(\mathbf{x})|\mathbf{x}\in\mathbb{R}^{n}\}$ and supposed true the inductive hypothesis:
	\begin{equation*}
	\int_{\mathbb{R}^{n}}e^{(N-1)f(\mathbf{x})}d\mathbf{x}<\infty
	\end{equation*}
	we have:
	\begin{equation*}
	\int_{\mathbb{R}^{n}}e^{Nf(\mathbf{x})}d\mathbf{x}=\int_{\mathbb{R}^{n}}e^{(N-1)f(\mathbf{x})}e^{f(\mathbf{x})}d\mathbf{x}
	\leq e^{F}\int_{\mathbb{R}^{n}}e^{(N-1)f(\mathbf{x})}d\mathbf{x}.
	\end{equation*}
	This proves the statement (\ref{proprieta.f.multi.1}).\vspace{0.3cm}

	As in the one-species case, to prove the statement (\ref{proprieta.f.multi.1}) we write\\
	\begin{equation*}
	\int_{\mathbb{R}^{n}}\exp(Nf(\mathbf{x}))d\mathbf{x}=e^{Nf(\boldsymbol{\mu})}I_{N}  
	\end{equation*}
	where 
	\begin{equation*}
	I_{N}=\int_{\mathbb{R}^{n}}\exp(N(f(\mathbf{x})-f(\boldsymbol{\mu})))d\mathbf{x}.
	\end{equation*}\\
	Since $f(\mathbf{x})-f(\boldsymbol{\mu})\leq 0$, the integral $I_{N}$ is a decreasing function of $N$.\\ 
	Thus
	\begin{equation*}
	\ln\int_{\mathbb{R}^{n}}\exp(Nf(\mathbf{x}))d\mathbf{x}\leq Nf(\boldsymbol{\mu})+\ln I_{1}.
	\end{equation*}\\
	Hence we obtain 
	\begin{equation}\label{prop.multi.passaggio.6}
	\lim_{N\rightarrow\infty}\frac{1}{N}\ln\int_{\mathbb{R}^{n}}\exp(Nf(\mathbf{x}))d\mathbf{x}\leq f(\boldsymbol{\mu}).
	\end{equation}
	By continuity of $f$, given any $\epsilon>0$, there exists $\delta_{\epsilon}>0$ such that as $\mathbf{x}$ is inside the ball $B(\boldsymbol{\mu},\delta_{\epsilon})$ we have $f(\mathbf{x})-f(\boldsymbol{\mu})>-\epsilon$. We can write:\\
	\begin{equation*}
	I_{N}\geq\int_{B(\boldsymbol{\mu},\delta_{\epsilon})}\exp(N(f(\mathbf{x})-f(\boldsymbol{\mu})))d\mathbf{x}>\frac{\pi^{\frac{n}{2}}}{\Gamma(\frac{n}{2}+1)}\delta_{\epsilon}^{n}e^{-N\epsilon}
	\end{equation*}\\
	where $\Gamma(x)$ is the function Gamma of Euler. Thus
	\begin{equation}\label{prop.multi.passaggio.7}
	\lim_{N\rightarrow\infty}\frac{1}{N}\ln\int_{\mathbb{R}^{n}}\exp(Nf(\mathbf{x}))d\mathbf{x}\geq f(\boldsymbol{\mu})-\epsilon.
	\end{equation}
	Since $\epsilon$ is arbitrary the statement (\ref{proprieta.f.multi.2}) follows from the inequalities (\ref{prop.multi.passaggio.6}) and (\ref{prop.multi.passaggio.7}).
\end{proof}
\vspace{0.5cm}
\noindent The proposition \ref{prop.multi} infers that in the thermodynamic limit
\begin{align*}
\lim_{N\rightarrow\infty}p_{N}(\mathbf{J},\mathbf{h})&=\lim_{N\rightarrow\infty}\bigg(\frac{1}{N}\ln\bigg(\frac{N\det\widetilde{\mathbf{J}}}{(2\pi)^{n}}\bigg)+\frac{1}{N}\ln\int_{\mathbb{R}}\exp(Nf(\mathbf{x}))d\mathbf{x}\bigg)\nonumber\\
&=\max_{\mathbf{x}}f(\mathbf{x}).
\end{align*}
The extremal conditions of $f$ give again the same mean-field equations (\ref{campomedio.multi}).

\section{Asymptotic behaviour of the sums of spins}
In this section we generalize the results obtained by Ellis, Newman an Rosen in \cite{ellis1978limit} and \cite{ellis1980limit}. In particular denoted by
\begin{equation*}
 S_{l}(\boldsymbol{\sigma})=\sum_{i\in P_{l}}\sigma_{i}
\end{equation*}
the sum of the spins belong to the set $P_{l}$ we determinate the suitable normalization of the random vector $(S_{1}(\boldsymbol{\sigma}),\dots,S_{n}(\boldsymbol{\sigma}))$ such that in the thermodynamic limit it converges to a well define $n$-dimensional random variable.

Before to go on we introduce some notations.
Considering $\mathbf{x},\mathbf{y}\in\mathbb{R}^{n}$ and $\gamma\in\mathbb{R}$ we define
\begin{itemize}
\item $\mathbf{x}^{\gamma}=(x_{1}^{\gamma},\dots,x_{n}^{\gamma})$;
\item $\mathbf{x}\mathbf{y}=(x_{1}y_{1},\dots,x_{n}y_{n})$;
\item $\dfrac{\mathbf{x}}{\mathbf{y}}=\Big(\dfrac{x_{1}}{y_{1}},\dots,\dfrac{x_{n}}{y_{n}}\Big)$ where $y_{l}\neq 0$ for $l=1,\dots,n$.
\end{itemize}

We shall see that the behaviour of the limiting distribution of the random vector of sums of spins depends crucially on the number and the type of the maximum points of the function $f$ exploited in the Thompson method to compute the thermodynamic limit. We recall it:
\begin{equation}\label{funzionale.pressione.multi.replica}
 f(x_{1},\dots,x_{n})=-\frac{1}{2}\sum_{l, s=1}^{n}\alpha_{l}\alpha_{s}J_{ls}x_{l}x_{s}+\!\sum_{l=1}^{n}\alpha_{l}\ln\bigg(\!\cosh\bigg(\sum_{s=1}^{n}\alpha_{s}J_{ls}x_{s}+h_{l}\bigg)\bigg).
\end{equation}

Let $\boldsymbol{\mu}^{1},\dots,\boldsymbol{\mu}^{P}$ be global maxima points of the function $f$ defined in (\ref{funzionale.pressione.multi.replica}). For each $p$ there exist the functions $f_{2j}^{\boldsymbol{\mu}^{p}}(\mathbf{x})\leq 0$, such that around $\boldsymbol{\mu}^{p}$ we can write $f$ as:
\begin{equation}\label{espansione.f.multi}
f(\mathbf{x}) =f(\boldsymbol{\mu}^{p})+\sum_{j=0}^{d}f_{2j}^{\boldsymbol{\mu}^{p}}(\mathbf{x}-\boldsymbol{\mu}^{p})+o\bigg(\Big(|\mathbf{x}'-\boldsymbol{\mu}^{p'}|^{2}+|\mathbf{x}''-\boldsymbol{\mu}^{p''}|^{2/q}\Big)^{d}\bigg)
\end{equation}
where $(\mathbf{x}',\mathbf{x}'')$ is a partition of the coordinate $\mathbf{x}$ and $q$ is a positive rational number such that $1/q\in\mathbb{N}$ and
\begin{equation*}
 f_{2j}^{\boldsymbol{\mu}^{p}}(t\mathbf{x}',t^{q}\mathbf{x}'')=t^{2j}f_{2j}^{\boldsymbol{\mu}^{p}}(\mathbf{x}',\mathbf{x}'')\quad\quad\text{all}\;\;t>0.
\end{equation*}

We define the type $k_{p}$ of the maximum point $\boldsymbol{\mu}^{p}$ as the minimum $d$ such that $f_{2d}^{\boldsymbol{\mu}^{p}}(\mathbf{x}-\boldsymbol{\mu}^{p})\neq 0$ as $\mathbf{x}\neq 0$ and $f_{2j}^{\boldsymbol{\mu}^{p}}(\mathbf{x}-\boldsymbol{\mu}^{p})=0$ for $j=1,\dots,d-1$.

We observe that when $q=1$ the expression (\ref{espansione.f.multi}) is the Taylor expansion of the function $f$. In this case $k_{p}$ is called the homogeneous type of the maximum point $\boldsymbol{\mu}^{p}$. In particular if a maximum points $\boldsymbol{\mu}^{p}$ has homogeneous type equal to $1$, around $\boldsymbol{\mu}$ we have:\\
\begin{equation*}
f(\mathbf{x})=f(\boldsymbol{\mu}^{p})+\frac{1}{2}\langle\boldsymbol{\mathcal{H}}_{f}(\boldsymbol{\mu}^{p})(\mathbf{x}-\boldsymbol{\mu}^{p}),(\mathbf{x}-\boldsymbol{\mu}^{p})\rangle+o\Big(||(\mathbf{x}-\boldsymbol{\mu}^{p})^{2}||\Big)
\end{equation*}\\
\noindent where $\boldsymbol{\mathcal{H}}_{f}(\boldsymbol{\mu}^{p})$ is the Hessian matrix of $f$ computed in the maximum point $\boldsymbol{\mu}^{p}$. We define the maximal type $k^{*}$ of the function $f$ as the largest of the $k_{p}$. 

Define the function
\begin{equation}\label{funzione.B.multi}
B(\mathbf{x};\mathbf{y})=f(\mathbf{x}+\mathbf{y})-f(\mathbf{y}).
\end{equation} 
For each $p=1,\dots,P$ if $\boldsymbol{\mu}^{p}$ is a maximum point of homogeneous type $k_{p}$ there exists $\delta_{p}>0$ sufficiently small so that, as $N\rightarrow\infty$ for $||\mathbf{x}/\mathbf{N}^{1/2k_{p}}||<\delta_{p}$\\
\begin{align}\label{proprieta.B.multi}
N\cdot B\bigg(\dfrac{\mathbf{x}}{\mathbf{N}^{1/2k_{p}}},\boldsymbol{\mu}^{p}\bigg)&=f_{2k}^{\boldsymbol{\mu}^{p}}\bigg(\frac{\mathbf{x}}{\boldsymbol{\alpha}^{1/2k_{p}}}\bigg)+o(1)P_{2k_{p}}(\mathbf{x})\nonumber\\\\
N\cdot B\bigg(\dfrac{\mathbf{x}}{\mathbf{N}^{1/2k_{p}}},\boldsymbol{\mu}^{p}\bigg)&\leq\frac{1}{2}f_{2k}^{\boldsymbol{\mu}^{p}}\bigg(\frac{\mathbf{x}}{\boldsymbol{\alpha}^{1/2k_{p}}}\bigg)+o(1)P_{2k_{p}+1}(\mathbf{x})\nonumber
\end{align}
\noindent  where $\mathbf{N}=(N_{1},\dots,N_{n})$, $\boldsymbol{\alpha}=(\alpha_{1},\dots,\alpha_{n})$, $P_{2k_{p}}(\mathbf{x})$ is a polynomial of $2k_{p}$ degree and $P_{2k_{p}+1}(\mathbf{x})$ is a polynomial of $2k_{p}+1$ degree.\\

Now we can state our main results. The following describes the thermodynamic limiting distribution of the random vector of magnetizations.
\begin{teorema}\label{teo.multi.0}
 	Consider a system described by the mean-field Hamiltonian $H_{N}=-Ng(m_{1}(\boldsymbol{\sigma}),\dots,m_{n}(\boldsymbol{\sigma}))$ where $g$ is the convex function defined in (\ref{funzione.g}). Let $\boldsymbol{\mu^{1}},\dots,\boldsymbol{\mu^{P}}$ be the global maxima points of homogeneous maximal type $k^{*}$ of the function $f$ given by (\ref{funzionale.pressione.multi.replica}). 
        Then as $N\rightarrow\infty$
	\begin{equation*}
	(m_{1}(\boldsymbol{\sigma}),\dots,m_{n}(\boldsymbol{\sigma}))\overset{\mathscr{D}}{\rightarrow}\dfrac{\sum\limits_{p=1}^{P}b_{p}\delta(\mathbf{x}-\boldsymbol{\mu}^{p})}{\sum\limits_{p=1}^{P}b_{p}}
	\end{equation*}
	where $b_{p}=\displaystyle{\int_{\mathbb{R}}}\exp\bigg(f_{2k^{*}}^{\boldsymbol{\mu}^{p}}\Big(\dfrac{\mathbf{x}}{\boldsymbol{\alpha}^{1/2k^{*}}}\Big)\bigg)d\mathbf{x}$.
\end{teorema}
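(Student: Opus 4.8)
The plan is to follow the one-species argument of Theorem \ref{teorema.1}, replacing its scalar Laplace analysis by an anisotropic multidimensional one governed by the componentwise rescaling $\mathbf{N}^{1/2k^{*}}$. First I would reduce the convergence in distribution to a Laplace-type statement. Exactly as in Lemma \ref{lemma.2}, the Gaussian transform (\ref{trasformazione.gaussiana.multi}) shows that if $\mathbf{W}$ is a centered Gaussian vector with covariance $\tfrac{1}{N}\widetilde{\mathbf{J}}^{-1}$ (which exists since $g$ convex forces $\widetilde{\mathbf{J}}$ positive definite), independent of the spins, then $\mathbf{W}+(m_{1}(\boldsymbol{\sigma}),\dots,m_{n}(\boldsymbol{\sigma}))$ has distribution $e^{Nf(\mathbf{x})}d\mathbf{x}\,/\int_{\mathbb{R}^{n}}e^{Nf(\mathbf{x})}d\mathbf{x}$, finite by Proposition \ref{prop.multi}. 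Since $\mathbf{W}\to 0$ in probability and its characteristic function never vanishes, the $\mathbb{R}^{n}$-version of Lemma \ref{lemma.1} (which holds verbatim, weak convergence being equivalent to pointwise convergence of characteristic functions) transfers the limit, so it suffices to prove that for every bounded continuous $\phi:\mathbb{R}^{n}\to\mathbb{R}$
\begin{equation*}
\frac{\int_{\mathbb{R}^{n}}e^{Nf(\mathbf{x})}\phi(\mathbf{x})\,d\mathbf{x}}{\int_{\mathbb{R}^{n}}e^{Nf(\mathbf{x})}\,d\mathbf{x}}\longrightarrow\frac{\sum_{p=1}^{P}b_{p}\,\phi(\boldsymbol{\mu}^{p})}{\sum_{p=1}^{P}b_{p}}.
\end{equation*}

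Next I would carry out the Laplace asymptotics. Setting $F=\max_{\mathbf{x}}f(\mathbf{x})$ and fixing $\bar{\delta}>0$ small enough that the balls $B(\boldsymbol{\mu}^{p},\bar{\delta})$ are disjoint and (\ref{proprieta.B.multi}) holds on each, a multidimensional analog of Lemma \ref{lemma.3} gives $e^{-NF}\int_{V}e^{Nf}\phi\,d\mathbf{x}=O(e^{-N\epsilon})$ on the complement $V$ of the union of these balls, so only neighborhoods of the maxima contribute. On each ball I translate $\mathbf{u}=\mathbf{x}-\boldsymbol{\mu}^{p}$ and rescale anisotropically $w_{l}=u_{l}N_{l}^{1/2k^{*}}$, whose Jacobian produces $\prod_{l}N_{l}^{-1/2k^{*}}$; using $f(\boldsymbol{\mu}^{p})=F$ and the definition (\ref{funzione.B.multi}) of $B$, the localized contribution becomes $\int e^{N B(\mathbf{w}/\mathbf{N}^{1/2k^{*}};\,\boldsymbol{\mu}^{p})}\,\phi(\boldsymbol{\mu}^{p}+\mathbf{w}/\mathbf{N}^{1/2k^{*}})\,d\mathbf{w}$ over an expanding domain. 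The first line of (\ref{proprieta.B.multi}) identifies the pointwise limit of the exponent as $f_{2k^{*}}^{\boldsymbol{\mu}^{p}}(\mathbf{w}/\boldsymbol{\alpha}^{1/2k^{*}})$, the second line supplies an integrable dominating bound, and dominated convergence yields
\begin{equation*}
\lim_{N\to\infty}\Big(\prod_{l}N_{l}^{1/2k^{*}}\Big)e^{-NF}\int_{B(\boldsymbol{\mu}^{p},\bar{\delta})}e^{Nf(\mathbf{x})}\phi(\mathbf{x})\,d\mathbf{x}=\phi(\boldsymbol{\mu}^{p})\,b_{p},
\end{equation*}
the integral $b_{p}$ being finite because $f_{2k^{*}}^{\boldsymbol{\mu}^{p}}\le 0$ and is strictly negative off the origin by the definition of the type.

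Finally I would show that a global maximum $\boldsymbol{\mu}^{q}$ of homogeneous type $k_{q}<k^{*}$ contributes, after the same procedure with its own exponent $1/2k_{q}$, a factor of order $\prod_{l}N_{l}^{-1/2k_{q}}$; multiplied by the common normalization $\prod_{l}N_{l}^{1/2k^{*}}$ this is $O\big(\prod_{l}N_{l}^{1/2k^{*}-1/2k_{q}}\big)\to 0$, so only the maximal-type maxima survive. Summing the localized limits in the numerator, and taking $\phi\equiv 1$ for the denominator, then gives the displayed ratio. I expect the main obstacle to be the rescaling step: one must verify that the anisotropic change of variables together with the homogeneity $f_{2k^{*}}^{\boldsymbol{\mu}^{p}}(t\mathbf{x})=t^{2k^{*}}f_{2k^{*}}^{\boldsymbol{\mu}^{p}}(\mathbf{x})$ (the case $q=1$ of (\ref{espansione.f.multi})) produces exactly the limiting integrand $\exp\big(f_{2k^{*}}^{\boldsymbol{\mu}^{p}}(\mathbf{w}/\boldsymbol{\alpha}^{1/2k^{*}})\big)$, and that the polynomial majorant in (\ref{proprieta.B.multi}) is genuinely integrable against this higher-order exponential so that dominated convergence applies uniformly on the expanding region.
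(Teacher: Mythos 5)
Your proposal is correct and follows essentially the same route as the paper: the reduction via the multidimensional Gaussian transform and the convolution lemma (the paper's Lemmas \ref{lemma.multi.1} and \ref{lemma.multi.2} with $\gamma=0$, $\mathbf{m}=\mathbf{0}$), localization away from the maxima by the multidimensional analogue of Lemma \ref{lemma.3} (Lemma \ref{lemma.multi.3} with $\mathbf{t}=\mathbf{0}$), and the anisotropic rescaling plus dominated convergence via (\ref{proprieta.B.multi}) (packaged in the paper as Lemma \ref{lemma.multi.4}). Your extra observation that maxima of homogeneous type strictly below $k^{*}$ are suppressed by the normalization $\prod_{l}N_{l}^{1/2k^{*}}$ is a sound addition that the paper leaves implicit.
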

The following theorem solves the problem of the correct normalization of the random vector of the sums of spins whenever the function $f$ admits a unique maximum point.
\begin{teorema}\label{teo.multi.1}
	Consider a system described by the mean-field Hamiltonian $H_{N}=-Ng(m_{1}(\boldsymbol{\sigma}),\dots,m_{n}(\boldsymbol{\sigma}))$ where $g$ is the convex function defined in (\ref{funzione.g}). Let $\boldsymbol{\mu}=(\mu_{1},\dots,\mu_{n})$ be the unique global maximum point of the function $f$ given from (\ref{funzionale.pressione.multi.replica}). Let $k$ be the homogeneous type of the maximum point. 
	\begin{enumerate}
	\item If $k=1$ the asymptotic behaviour of the random vector\\
	\begin{equation}\label{vettore.S.1}
	\bar{\mathbf{S}}^{1}(\boldsymbol{\sigma})=\bigg(\dfrac{S_{1}(\boldsymbol{\sigma})-N_{1}\mu_{1}}{\sqrt{N_{1}}}, \dots,\dfrac{S_{n}(\boldsymbol{\sigma})-N_{n}\mu_{n}}{\sqrt{N_{n}}}\bigg)
	\end{equation}\\
	as $N_{1}\rightarrow\infty$, $\dots$, $N_{n}\rightarrow\infty$, for fixed values of $\alpha_{1},\dots, \alpha_{n}$, is given by a normal multivariate distribution whose covariance matrix is:\newpage
	\begin{equation}\label{covarianza}
	\widetilde{\boldsymbol{\chi}}=\begin{pmatrix}
	\dfrac{\partial\mu_{1}}{\partial h_{1}} & \sqrt{\dfrac{\partial \mu_{1}}{\partial h_{2}}\;\dfrac{\partial \mu_{2}}{\partial h_{1}}} & \dots & \sqrt{\dfrac{\partial \mu_{1}}{\partial h_{n}}\;\dfrac{\partial \mu_{n}}{\partial h_{1}}}\\\\
	\sqrt{\dfrac{\partial \mu_{1}}{\partial h_{2}}\;\dfrac{\partial \mu_{2}}{\partial h_{1}}} & \dfrac{\partial \mu_{2}}{\partial h_{2}} & \dots & \sqrt{\dfrac{\partial \mu_{2}}{\partial h_{n}}\;\dfrac{\partial \mu_{n}}{\partial h_{2}}}\\\\
	\vdots&\vdots&&\vdots \\\\
	\sqrt{\dfrac{\partial \mu_{1}}{\partial h_{n}}\;\dfrac{\partial \mu_{n}}{\partial h_{1}}} & \sqrt{\dfrac{\partial \mu_{2}}{\partial h_{n}}\;\dfrac{\partial \mu_{n}}{\partial h_{2}}} & \dots & \dfrac{\partial \mu_{n}}{\partial h_{n}}
	\end{pmatrix}\begin{matrix}
	\\\\\\\\\\\\\\.
	\end{matrix}
	\end{equation}
	\item If $k>1$ the asymptotic behaviour of the random vector\\
	\begin{equation}\label{vettore.S.k}
	\bar{\mathbf{S}}^{k}(\boldsymbol{\sigma})=\bigg(\dfrac{S_{1}(\boldsymbol{\sigma})-N_{1}\mu_{1}}{(N_{1})^{1-1/2 k}}, \dots,\dfrac{S_{n}(\boldsymbol{\sigma})-N_{n}\mu_{n}}{(N_{n})^{1-1/2 k}}\bigg)
	\end{equation}\\
	\noindent as $N_{1}\rightarrow\infty$, $\dots$, $N_{n}\rightarrow\infty$, for fixed values of $\alpha_{1}$, $\dots$, $\alpha_{n}$, has density proportional to:
	\begin{equation*}
	\exp\bigg(f_{2k}^{\boldsymbol{\mu}}\Big(\dfrac{\mathbf{x}}{\boldsymbol{\alpha}^{1/2k}}\Big)\bigg)
	\end{equation*}
	\noindent where $\boldsymbol{\alpha}=(\alpha_{1},\dots ,\alpha_{n})$.
	\end{enumerate}
\end{teorema}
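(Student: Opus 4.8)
The plan is to transcribe the scalar argument used for Theorem~\ref{teorema.2} into $n$ dimensions, the three pillars being the multivariate Gaussian (Hubbard--Stratonovich) transform (\ref{trasformazione.gaussiana.multi}), a vector version of Lemma~\ref{lemma.2} rewriting the law of the Gaussian-smoothed, rescaled spin sums through the functional $f$ of (\ref{funzionale.pressione.multi.replica}), and the Laplace-type localization of Lemma~\ref{lemma.3} fed by the expansion (\ref{proprieta.B.multi}) of $B$ around the unique maximum $\boldsymbol{\mu}$. First I would introduce a centered Gaussian vector $\mathbf{W}\sim N(\mathbf{0},\widetilde{\mathbf{J}}^{-1})$, independent of $\boldsymbol{\sigma}$, and apply (\ref{trasformazione.gaussiana.multi}) exactly as in Lemma~\ref{lemma.2}: the quadratic interaction is linearized, the spin integral factorizes over sites, and summing over the spins one obtains that the density of $\mathbf{W}/\mathbf{N}^{1/2-\gamma}+\bar{\mathbf{S}}^{k}$, with $\gamma=1/2k$ and centering $\boldsymbol{\mu}$, is proportional to $\exp(Nf(\mathbf{x}/\mathbf{N}^{\gamma}+\boldsymbol{\mu}))$ up to the usual change of variables. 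The only new bookkeeping is the diagonal scaling $\mathbf{D}_{\boldsymbol{\alpha}}$ and the replacement of $N$ by the vector $\mathbf{N}=(N_1,\dots,N_n)=N\boldsymbol{\alpha}$ in each coordinate, which is precisely what will turn the argument of $f_{2k}^{\boldsymbol{\mu}}$ into $\mathbf{x}/\boldsymbol{\alpha}^{1/2k}$; finiteness of all the integrals involved is guaranteed by Proposition~\ref{prop.multi}.

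Next I would localize. By the multivariate analogue of Lemma~\ref{lemma.3}, since $\boldsymbol{\mu}$ is the \emph{unique} global maximum, the mass outside a small ball $B(\boldsymbol{\mu},\delta)$ is $O(e^{-N\epsilon})$ and may be discarded; inside the ball I substitute $\mathbf{x}=\mathbf{w}/\mathbf{N}^{1/2k}$ and invoke (\ref{proprieta.B.multi}), namely $N\,B(\mathbf{w}/\mathbf{N}^{1/2k};\boldsymbol{\mu})\to f_{2k}^{\boldsymbol{\mu}}(\mathbf{w}/\boldsymbol{\alpha}^{1/2k})$ together with the dominating bound in its second line, so that dominated convergence gives convergence of the law of $\mathbf{W}/\mathbf{N}^{1/2-\gamma}+\bar{\mathbf{S}}^{k}$ to the probability measure with density proportional to $\exp(f_{2k}^{\boldsymbol{\mu}}(\mathbf{x}/\boldsymbol{\alpha}^{1/2k}))$, which is integrable because $f_{2k}^{\boldsymbol{\mu}}\le 0$ and nondegenerate at the maximum. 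When $k>1$ one has $\gamma=1/2k<1/2$, hence $\mathbf{W}/\mathbf{N}^{1/2-\gamma}\to\mathbf{0}$ and the smoothing term is irrelevant in the limit, so $\bar{\mathbf{S}}^{k}$ itself has the stated limiting density; this is part~2.

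The delicate case is $k=1$, where $\gamma=1/2$ and the smoothing Gaussian survives. Here $f_{2}^{\boldsymbol{\mu}}$ is the quadratic form $\tfrac12\langle\boldsymbol{\mathcal{H}}_{f}(\boldsymbol{\mu})(\cdot),(\cdot)\rangle$, so the limit of $\mathbf{W}+\bar{\mathbf{S}}^{1}$ is the centered Gaussian whose covariance is the inverse of $-\boldsymbol{\mathcal{H}}_{f}(\boldsymbol{\mu})$ read in the $\boldsymbol{\alpha}^{1/2}$-rescaled coordinates. By the vector form of Lemma~\ref{lemma.1} (pointwise convergence of characteristic functions, the Gaussian factor never vanishing) I may deconvolve, obtaining that $\bar{\mathbf{S}}^{1}$ is asymptotically Gaussian with covariance equal to that limit minus the known covariance $\widetilde{\mathbf{J}}^{-1}$ of $\mathbf{W}$.

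The main obstacle is then purely algebraic: to identify this difference with the matrix $\widetilde{\boldsymbol{\chi}}$ of (\ref{covarianza}). I would carry this out by differentiating the mean-field equations (\ref{campomedio.multi}) with respect to the fields $h_s$ to express $\partial\mu_l/\partial h_s$ through $\boldsymbol{\mathcal{H}}_{f}(\boldsymbol{\mu})$ and $\widetilde{\mathbf{J}}$, using the reciprocity relation $\alpha_s\,\partial\mu_l/\partial h_s=\alpha_l\,\partial\mu_s/\partial h_l$ (itself a consequence of the symmetry of the second derivatives of the pressure) to make the symmetric square-root off-diagonal entries well defined and correctly signed, and finally checking positive-definiteness of the result, the analogue of the scalar inequality $(-\lambda)^{-1}-J^{-1}>0$, which follows from $\boldsymbol{\mathcal{H}}_{f}(\boldsymbol{\mu})<0$ at a nondegenerate maximum together with positivity of $\widetilde{\mathbf{J}}$.
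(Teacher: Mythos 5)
Your outline reproduces the paper's own proof: Lemma \ref{lemma.multi.2} (the vector version of Lemma \ref{lemma.2}), the localization Lemma \ref{lemma.multi.3}, the expansion (\ref{proprieta.B.multi}) with dominated convergence, the observation that the smoothing Gaussian is negligible when $\gamma=1/2k<1/2$, and deconvolution of characteristic functions via Lemma \ref{lemma.multi.1} in the case $k=1$. Two points in the $k=1$ case, however, are stated incorrectly and would derail the final identification if carried out as written.

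First, the auxiliary Gaussian does not have covariance $\widetilde{\mathbf{J}}^{-1}=(\mathbf{D}_{\boldsymbol{\alpha}}^{2}\mathbf{J}\mathbf{D}_{\boldsymbol{\alpha}}^{2})^{-1}$. Because each component of $\bar{\mathbf{S}}^{1}$ is normalized by $\sqrt{N_{l}}$ rather than $\sqrt{N}$, the Hubbard--Stratonovich step forces $(\sqrt{N_{1}}W_{1},\dots,\sqrt{N_{n}}W_{n})$ to have precision matrix $\mathbf{J}/N$, hence $\mathbf{W}\sim N(\mathbf{0},\mathbf{A}^{-1})$ with $\mathbf{A}=\mathbf{D}_{\boldsymbol{\alpha}}\mathbf{J}\mathbf{D}_{\boldsymbol{\alpha}}$, which is the choice made in Lemma \ref{lemma.multi.2}. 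With your normalization the convolution does not reduce to $\exp(Nf(\cdot))$, and the matrix you subtract at the end is $\widetilde{\mathbf{J}}^{-1}$ instead of $\mathbf{A}^{-1}$, so the result would not equal $\widetilde{\boldsymbol{\chi}}$; the correct identity is $-\widetilde{\boldsymbol{\mathcal{H}}}_{f}^{-1}(\boldsymbol{\mu})-\mathbf{A}^{-1}=\widetilde{\boldsymbol{\chi}}$ with $\widetilde{\boldsymbol{\mathcal{H}}}_{f}=\mathbf{D}_{\boldsymbol{\alpha}}^{-1}\boldsymbol{\mathcal{H}}_{f}\mathbf{D}_{\boldsymbol{\alpha}}^{-1}$. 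Second, positive definiteness of $\widetilde{\boldsymbol{\chi}}$ does not follow merely from $\boldsymbol{\mathcal{H}}_{f}(\boldsymbol{\mu})<0$ together with positivity of the interaction matrix: a difference of inverses of two positive definite matrices has no definite sign in general (already in one dimension, $(-\lambda)^{-1}-J^{-1}<0$ for $\lambda=-10$, $J=1$). The needed input, exactly as in the scalar inequality $\lambda+J=J\Phi''(\mu)>0$, is the strict convexity of the log-cosh functional $\Phi$ of (\ref{funzione.fi.multi}): writing $-\widetilde{\boldsymbol{\mathcal{H}}}_{f}=\mathbf{A}-\widetilde{\boldsymbol{\mathcal{H}}}_{\Phi}$ one gets $-\widetilde{\boldsymbol{\mathcal{H}}}_{f}(\boldsymbol{\mu})\,\widetilde{\boldsymbol{\chi}}=\widetilde{\boldsymbol{\mathcal{H}}}_{\Phi}(\boldsymbol{\mu})\mathbf{A}^{-1}$, and positivity of $\widetilde{\boldsymbol{\mathcal{H}}}_{\Phi}$ is what closes the argument. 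Both defects are repairable, but as stated the covariance identification and the definiteness claim are gaps.
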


The following last theorem handles the case in which the function reaches the maximum in more than one point.
\begin{teorema}\label{teo.multi.2}
	Consider a system described by the mean-field Hamiltonian $H_{N}=-Ng(m_{1}(\boldsymbol{\sigma}),\dots,m_{n}(\boldsymbol{\sigma}))$ where $g$ is the convex function defined in (\ref{funzione.g}). Let $\boldsymbol{\mu}=(\mu_{1},\dots,\mu_{n})$ be a nonunique global maximum point of the function $f$ given from (\ref{funzionale.pressione.multi.replica}). Let $k$ be the homogeneous type of the maximum point. Define $\bar{\delta}$ to be the minimum distance between all distinct pair of global maximum points of the function $f$. 
	
	Then for any $d\in(0,\bar{\delta})$ when the random vector of the magnetizations $(m_{1}(\boldsymbol{\sigma}),\dots,m_{n}(\boldsymbol{\sigma}))$ is inside the ball $B(\boldsymbol{\mu},d)$ centered in $\boldsymbol{\mu}$ of radius $d$
	\begin{enumerate}
	\item  if $k=1$ the asymptotic behaviour of the random vector $\bar{\mathbf{S}}^{1}(\boldsymbol{\sigma})$ defined in (\ref{vettore.S.1}) as $N_{1}\rightarrow\infty$, $\dots$, $N_{n}\rightarrow\infty$, for fixed values of $\alpha_{1}$, $\dots$, $\alpha_{n}$, is given by a normal multivariate distribution whose covariance matrix is given by (\ref{covarianza});
	\item if $k>1$ the asymptotic behaviour of the random vector $\bar{{S}}_{k}(\boldsymbol{\sigma})$ defined in (\ref{vettore.S.k}) as $N_{1}\rightarrow\infty$, $\dots$, $N_{n}\rightarrow\infty$, for fixed values of $\alpha_{1}$, $\dots$, $\alpha_{n}$, has density proportional to:
	\begin{equation*}
	\exp\bigg(f_{2k}^{\boldsymbol{\mu}}\Big(\dfrac{\mathbf{x}}{\boldsymbol{\alpha}^{1/2k}}\Big)\bigg)
	\end{equation*}
	where $\boldsymbol{\alpha}=(\alpha_{1},\dots ,\alpha_{n})$.
	\end{enumerate}
\end{teorema}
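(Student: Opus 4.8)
The plan is to follow the conditioned characteristic function scheme that underlies Theorem \ref{teo.multi.1}, adapting it to the fact that conditioning the magnetization vector on the ball $B(\boldsymbol{\mu},d)$ with $d<\bar{\delta}$ isolates the single maximum point $\boldsymbol{\mu}$ from all the others. Concretely, for a fixed vector $\mathbf{r}\in\mathbb{R}^{n}$ I would write the Gibbs expectation
\[
\Big\langle e^{i\langle\mathbf{r},\bar{\mathbf{S}}^{k}(\boldsymbol{\sigma})\rangle}\,\Big|\,(m_{1}(\boldsymbol{\sigma}),\dots,m_{n}(\boldsymbol{\sigma}))\in B(\boldsymbol{\mu},d)\Big\rangle_{BG}
\]
as a ratio of two integrals over the region $\{(m_{1},\dots,m_{n})\in B(\boldsymbol{\mu},d)\}$, and tilt each single-spin measure to the centered measures $\rho_{\boldsymbol{\mu}}$ given by (\ref{misura.ro.x}) with effective field $\sum_{s}\alpha_{s}J_{ls}\mu_{s}+h_{l}$ on species $l$, so that the linear part of $g$ is absorbed and the magnetizations are centered at $\boldsymbol{\mu}$. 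I would then linearize the quadratic interaction $\langle\widetilde{\mathbf{J}}\mathbf{m},\mathbf{m}\rangle$ by the multidimensional Gaussian transform (\ref{trasformazione.gaussiana.multi}), introducing an auxiliary vector $\mathbf{x}\in\mathbb{R}^{n}$, exactly as in the Thompson computation of the partition function.

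The decisive step, where the argument departs from the one-species proof, is the treatment of the oscillatory factor. Since the Legendre transform of the $n$-variable analogue of $\Phi_{\rho}$ is not a usable tool, I cannot invoke a transfer principle of the type of Lemma \ref{trasferimento}. Instead, following \cite{ellis1990limit}, I would perform the complex shift $\mathbf{w}=\mathbf{x}+i\,\widetilde{\mathbf{J}}^{-1}\mathbf{r}/\mathbf{N}^{1-1/2k}$, the multidimensional analogue of (\ref{cambio.variabili.teo.2}), justified by the analyticity of the integrand in each $w_{l}$ and its rapid decay as $|\mathrm{Re}(w_{l})|\to\infty$ on the relevant strip. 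After this shift the integration over the spins produces an exponential of $N$ times the function $B(\,\cdot\,;\boldsymbol{\mu})$ of (\ref{funzione.B.multi}), and combining it with the residual Gaussian factor reproduces, after the species-wise blow-up $\mathbf{x}=\boldsymbol{\mu}+\mathbf{s}/\mathbf{N}^{1/2k}$, precisely the quantity $N\,B(\mathbf{s}/\mathbf{N}^{1/2k};\boldsymbol{\mu})$ controlled by (\ref{proprieta.B.multi}).

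The role played in the one-species case by the large-deviation tail bound is now played by the conditioning itself. Because $d<\bar{\delta}$, inside $\overline{B(\boldsymbol{\mu},d)}$ the point $\boldsymbol{\mu}$ is the \emph{unique} global maximum of $f$, so a Laplace-type concentration estimate on this bounded region, a local multidimensional version of Lemma \ref{lemma.3}, shows that the mass of both numerator and denominator concentrates in a neighbourhood of $\boldsymbol{\mu}$ shrinking at scale $\mathbf{N}^{-1/2k}$, while the contribution of $B(\boldsymbol{\mu},d)$ away from $\boldsymbol{\mu}$ is $O(e^{-N\epsilon})$. Inside the neighbourhood I would apply (\ref{proprieta.B.multi}): the first line supplies the pointwise limit $f_{2k}^{\boldsymbol{\mu}}(\mathbf{s}/\boldsymbol{\alpha}^{1/2k})$ of the exponent, and the second line, together with $f_{2k}^{\boldsymbol{\mu}}\le 0$, furnishes an integrable dominating function so that the dominated convergence theorem applies. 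The numerator then converges to $\int_{\mathbb{R}^{n}}e^{i\langle\mathbf{r},\mathbf{s}\rangle}\exp(f_{2k}^{\boldsymbol{\mu}}(\mathbf{s}/\boldsymbol{\alpha}^{1/2k}))\,d\mathbf{s}$ and the denominator to the same integral without the oscillatory factor, yielding the stated density proportional to $\exp(f_{2k}^{\boldsymbol{\mu}}(\mathbf{x}/\boldsymbol{\alpha}^{1/2k}))$.

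For $k=1$ the form $f_{2}^{\boldsymbol{\mu}}$ is the quadratic form of the Hessian of $f$, so the concentration produces a Gaussian limit; exactly as in the one-species Theorems \ref{teorema.2} and \ref{teorema.3}, the auxiliary Gaussian vector introduced by (\ref{trasformazione.gaussiana.multi}), of covariance $\widetilde{\mathbf{J}}^{-1}$, enters as a convolution factor and must be deconvolved. Differentiating the mean-field equations (\ref{campomedio.multi}) expresses the relevant inverse Hessian through the susceptibilities $\partial\mu_{l}/\partial h_{s}$; subtracting the $\widetilde{\mathbf{J}}^{-1}$ contribution and inserting the species-wise normalisation $\sqrt{N_{l}}=\sqrt{\alpha_{l}N}$, which is responsible for the geometric-mean off-diagonal entries, recovers the covariance matrix $\widetilde{\boldsymbol{\chi}}$ of (\ref{covarianza}). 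I expect the main obstacle to be the rigorous justification of the multidimensional complex contour shift together with the concentration estimate on $B(\boldsymbol{\mu},d)$ that replaces the transfer principle: controlling the $n$-dimensional oscillatory integral uniformly while simultaneously discarding the mass away from $\boldsymbol{\mu}$ is considerably more delicate than the scalar contour argument used for Theorem \ref{teorema.3}.
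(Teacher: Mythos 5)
Your overall architecture (condition on $B(\boldsymbol{\mu},d)$, tilt to $\rho_{\boldsymbol{\mu}}$, apply the Gaussian transform (\ref{trasformazione.gaussiana.multi}), blow up at scale $\mathbf{N}^{1/2k}$, and conclude by (\ref{proprieta.B.multi}) and dominated convergence) matches the paper's, but there is a genuine gap at the one step that actually carries the proof. After the Gaussian transform and your contour shift you are left with an inner integral of the form $\int_{\{\mathbf{s}/\mathbf{N}\in B(\boldsymbol{\mu},d)\}}\exp(\langle\mathbf{J}\boldsymbol{\alpha}\mathbf{w},\mathbf{s}\rangle+\langle\mathbf{h},\mathbf{s}\rangle)\,d\nu_{S}(\mathbf{s})$, with $\mathbf{w}$ ranging over all of $\mathbb{R}^{n}$. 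Because of the restriction on $\mathbf{s}$ this integral does \emph{not} factorize over the spins and cannot be written as $\exp(N\Phi(\mathbf{w}))$; so you cannot reassemble the exponent into $N\,B(\cdot;\boldsymbol{\mu})$, and no Laplace or concentration argument on $f$ over $B(\boldsymbol{\mu},d)$ can be started. Your claim that ``the conditioning itself'' replaces the tail bound is exactly backwards: the conditioning is what creates the need for a transfer-principle step, namely replacing the restricted inner integral by the unrestricted one (up to $O(e^{-N\epsilon})$) while simultaneously truncating $\mathbf{w}$ to a bounded set. That replacement must beat the factor $\exp(N\langle\mathbf{J}\boldsymbol{\alpha}\mathbf{w},\mathbf{s}\rangle)$, which is exponentially large in $N$ on the discarded region, so a genuine large-deviation estimate on the tilted magnetization is indispensable and is not supplied by your outline.

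The paper fills this gap with Lemma \ref{lemma.multi.5} and Lemma \ref{ultimo.lemma}: it writes $I_{N}=e^{N\Phi(\mathbf{x})}-I_{N}^{c}$ and bounds the complementary piece $I_{N}^{c}$ by an exponential Chebyshev inequality applied species by species, using the mean-field equations (\ref{campomedio.multi}) to control $\ln\cosh$ near $\boldsymbol{\mu}$ --- this is the multi-species substitute for the Legendre-transform machinery of Lemmas \ref{lemma.legendre} and \ref{trasferimento}, and it is precisely the ingredient your proposal omits. A second, smaller divergence: the paper works with the moment generating function $\langle e^{\langle\mathbf{t},\mathbf{V}_{\gamma}\rangle}\rangle_{BG_{d}}$ for real $\mathbf{t}$ rather than the characteristic function, which removes the need for the multidimensional contour shift you identify as your main obstacle; if you keep the characteristic-function route you take on that additional burden for no gain. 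To repair the proof, state and prove the analogue of Lemma \ref{ultimo.lemma} (a Chebyshev bound on $I_{N}^{c}(\mathbf{x},\boldsymbol{\mu},\theta)$ valid uniformly for $\mathbf{x}$ near $\boldsymbol{\mu}$), and handle the region of the $\mathbf{w}$-integral outside a neighbourhood of $\boldsymbol{\mu}$ by bounding $I_{N}\le e^{N\Phi}$ there and invoking Lemma \ref{lemma.multi.3}, as the paper does with the terms $K_{N}$ and $M_{N}$.
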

The proofs of theorems \ref{teo.multi.0}, \ref{teo.multi.1} and \ref{teo.multi.2} need the following results.
\begin{lemma}\label{lemma.multi.1}
	Suppose that for each $N$, $\mathbf{X}_{N}=(X_{N_{1}},\dots, X_{N_{1}})$ and $\mathbf{Y}_{N}=(Y_{N_{1}},\dots, Y_{N_{n}})$ are independent random vectors. Suppose that $\mathbf{X}_{N}$ weakly converges to a distribution $\nu$ such that
	\begin{equation*}
	\int_{\mathbb{R}^{n}} e^{i\langle\mathbf{r},\mathbf{x}\rangle}d\nu(\mathbf{x})\neq 0 \quad\quad\text{for all}\;\;\mathbf{r}\in\mathbb{R}^{n}\;.
	\end{equation*}
	Then $\mathbf{Y}_{N}$ weakly converges to $\mu$ if and only if $\mathbf{X}_{N}+\mathbf{Y}_{N}$ weakly converges to the convolution $\nu *\mu$ of the distributions $\nu$ and $\mu$.
\end{lemma}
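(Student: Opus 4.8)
The plan is to reduce the statement to the behaviour of characteristic functions and to invoke the multivariate L\'evy continuity theorem, exactly mirroring the one-dimensional Lemma \ref{lemma.1}. First I would introduce the notation $\phi_{\mathbf{X}_N}(\mathbf{r})=\int_{\mathbb{R}^n}e^{i\langle\mathbf{r},\mathbf{x}\rangle}\,d\nu_N(\mathbf{x})$ for the characteristic function associated with $\mathbf{X}_N$, and similarly $\phi_{\mathbf{Y}_N},\phi_\nu,\phi_\mu$. The two structural facts I would record at the outset are: (i) by independence of $\mathbf{X}_N$ and $\mathbf{Y}_N$ the characteristic function of the sum factorizes, $\phi_{\mathbf{X}_N+\mathbf{Y}_N}(\mathbf{r})=\phi_{\mathbf{X}_N}(\mathbf{r})\,\phi_{\mathbf{Y}_N}(\mathbf{r})$, with the analogous identity $\phi_{\nu*\mu}=\phi_\nu\,\phi_\mu$ for the convolution; and (ii) since $\mathbf{X}_N\overset{\mathscr{D}}{\rightarrow}\nu$, the continuity theorem gives $\phi_{\mathbf{X}_N}(\mathbf{r})\to\phi_\nu(\mathbf{r})$ for every $\mathbf{r}\in\mathbb{R}^n$.

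For the forward implication I would assume $\mathbf{Y}_N\overset{\mathscr{D}}{\rightarrow}\mu$, so that $\phi_{\mathbf{Y}_N}\to\phi_\mu$ pointwise, and simply multiply the two convergent sequences to get $\phi_{\mathbf{X}_N+\mathbf{Y}_N}(\mathbf{r})\to\phi_\nu(\mathbf{r})\phi_\mu(\mathbf{r})=\phi_{\nu*\mu}(\mathbf{r})$ for each $\mathbf{r}$; a second application of the continuity theorem then yields $\mathbf{X}_N+\mathbf{Y}_N\overset{\mathscr{D}}{\rightarrow}\nu*\mu$. Note that this direction never uses the nonvanishing hypothesis.

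The converse is where the assumption $\phi_\nu(\mathbf{r})\neq 0$ becomes essential, and it is the step I expect to be the main obstacle. Starting from $\mathbf{X}_N+\mathbf{Y}_N\overset{\mathscr{D}}{\rightarrow}\nu*\mu$, I would have $\phi_{\mathbf{X}_N}(\mathbf{r})\phi_{\mathbf{Y}_N}(\mathbf{r})\to\phi_\nu(\mathbf{r})\phi_\mu(\mathbf{r})$ pointwise. Fixing $\mathbf{r}$ and using $\phi_{\mathbf{X}_N}(\mathbf{r})\to\phi_\nu(\mathbf{r})\neq 0$, the factor $\phi_{\mathbf{X}_N}(\mathbf{r})$ is nonzero for all large $N$, so I may divide and conclude $\phi_{\mathbf{Y}_N}(\mathbf{r})=\phi_{\mathbf{X}_N+\mathbf{Y}_N}(\mathbf{r})/\phi_{\mathbf{X}_N}(\mathbf{r})\to\phi_\mu(\mathbf{r})$. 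Since $\mathbf{r}$ is arbitrary this gives pointwise convergence $\phi_{\mathbf{Y}_N}\to\phi_\mu$, and because $\phi_\mu$ is the characteristic function of the probability measure $\mu$ (hence continuous at the origin with value $1$ there) the continuity theorem applies once more and delivers $\mathbf{Y}_N\overset{\mathscr{D}}{\rightarrow}\mu$.

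The delicate points I would take care to verify are precisely the two that make the converse work: that nonvanishing of the limit $\phi_\nu$ forces eventual nonvanishing of $\phi_{\mathbf{X}_N}(\mathbf{r})$ at each fixed $\mathbf{r}$ (so that the quotient is well defined for large $N$), and that pointwise convergence of $\phi_{\mathbf{Y}_N}$ to a genuine characteristic function is enough to conclude weak convergence. The latter is immediate because continuity of the limit at $0$ is automatic here, so no separate tightness or equicontinuity argument is needed.
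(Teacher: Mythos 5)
Your argument is correct and is exactly the one the paper compresses into its single-sentence proof (``weak convergence of measures is equivalent to pointwise convergence of characteristic functions''): factorize the characteristic function of the sum by independence, pass to the limit, and in the converse direction divide by $\phi_{\mathbf{X}_N}(\mathbf{r})$, which is eventually nonzero because its limit $\phi_\nu(\mathbf{r})$ is assumed nonvanishing. You have simply written out in full the L\'evy continuity argument that the paper leaves implicit, so there is no substantive difference in approach.
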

\begin{proof}
	Weak convergence of measures is equivalent to pointwise convergence of characteristic functions.
\end{proof}

\begin{lemma}\label{lemma.multi.2}
	Let $\mathbf{A}=\mathbf{D}_{\boldsymbol{\alpha}}\mathbf{J}\mathbf{D}_{\boldsymbol{\alpha}}$ be a positive defined matrix where the matrix $\mathbf{D}_{\boldsymbol{\alpha}}=diag\{\sqrt{\alpha_{1}},\dots,\sqrt{\alpha_{n}}\}$ and the matrix  $\mathbf{J}$ is defined in (\ref{interazioneridotta}). Given the random vector $(W_{1},\dots, W_{n})$ whose joint distribution is the normal multivariate
	\begin{equation}\label{normalebivariata}
	\bigg(\frac{\det \mathbf{A}}{(2\pi)^{n}}\bigg)^{\frac{1}{2}}\; \exp\bigg(\!-\frac{1}{2}\langle\mathbf{A}\mathbf{w},\mathbf{w}\rangle\bigg)
	\end{equation}\\
	\noindent if $(W_{1},\dots, W_{n})$ is independent of $(S_{1}(\boldsymbol{\sigma}),\dots, S_{n}(\boldsymbol{\sigma}))$ then for $(m_{1},\dots,m_{n}) \in \mathbb{R}^{n}$ and $\gamma\in \mathbb{R}$ the joint distribution of\\
	\begin{equation}\label{sommavettori}
	\bigg(\dfrac{W_{1}}{(N_{1})^{1/2-\gamma}},\dots,\dfrac{W_{n}}{(N_{n})^{1/2-\gamma}}\bigg)+\bigg(\dfrac{S_{1}(\boldsymbol{\sigma})-N_{1}m_{1}}{(N_{1})^{1-\gamma}},\dots, \dfrac{S_{n}(\boldsymbol{\sigma})-N_{n}m_{n}}{(N_{n})^{1-\gamma}}\bigg)
	\end{equation}\newpage
	\noindent is given by
	\begin{equation}\label{distribuzione.tesi}
	\dfrac{\exp\bigg(\!Nf\Big(\dfrac{x_{1}}{N_{1}^{\;\gamma}}+m_{1},\dots,\dfrac{x_{n}}{N_{n}^{\;\gamma}}+m_{n}\Big)\bigg)dx_{1}\dots dx_{n}}{\displaystyle{\int_{\mathbb{R}^{n}}} \exp\bigg(\!Nf\Big(\dfrac{x_{1}}{N_{1}^{\;\gamma}}+m_{1},\dots,\dfrac{x_{n}}{N_{n}^{\;\gamma}}+m_{n}\Big)\bigg)dx_{1}\dots dx_{n}}
	\end{equation}\\
	\noindent where $f(x_{1},\dots,x_{n})$ is the function defined in (\ref{funzionale.pressione.multi.replica}).
\end{lemma}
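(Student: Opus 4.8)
The plan is to mirror directly the one-species argument of Lemma \ref{lemma.2}, replacing scalars by the appropriate vectors and diagonal matrices. First I would compute the joint cumulative distribution function of (\ref{sommavettori}): since the $l$-th component multiplied by $N_l^{1-\gamma}$ equals $\sqrt{N_l}\,W_l + S_l(\boldsymbol{\sigma}) - N_l m_l$, the event that (\ref{sommavettori}) is bounded by $\theta_l$ in each coordinate coincides with $\{(\sqrt{N_1}W_1 + S_1, \dots, \sqrt{N_n}W_n + S_n) \in E\}$, where $E = \prod_{l=1}^n(-\infty, \theta_l N_l^{1-\gamma} + N_l m_l]$. Thus it suffices to identify the law of the vector $\mathbf{D}_N\mathbf{W} + \mathbf{S}$, with $\mathbf{D}_N = \mathrm{diag}\{\sqrt{N_1}, \dots, \sqrt{N_n}\}$ and $\mathbf{S} = (S_1(\boldsymbol{\sigma}), \dots, S_n(\boldsymbol{\sigma}))$.

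Second, I would write this law as the convolution of the law of $\mathbf{D}_N\mathbf{W}$ with the Gibbs distribution of $\mathbf{S}$. Rewriting (\ref{hamiltoniana.multi.2}) in terms of $S_l = N\alpha_l m_l$ gives $-H_N = \frac{1}{2N}\langle\mathbf{J}\mathbf{S},\mathbf{S}\rangle + \langle\mathbf{h},\mathbf{S}\rangle$, so the latter distribution is $Z_N^{-1}\exp(\frac{1}{2N}\langle\mathbf{J}\mathbf{s},\mathbf{s}\rangle + \langle\mathbf{h},\mathbf{s}\rangle)\,d\nu_{\mathbf{S}}(\mathbf{s})$. The key algebraic fact I would verify is that $\mathbf{D}_N\mathbf{W}$ has covariance $\mathbf{D}_N\mathbf{A}^{-1}\mathbf{D}_N$ and hence precision matrix $\mathbf{D}_N^{-1}\mathbf{A}\mathbf{D}_N^{-1} = \frac{1}{N}\mathbf{J}$ (using $\alpha_l = N_l/N$, so that $\sqrt{\alpha_l}/\sqrt{N_l} = 1/\sqrt{N}$); its density is therefore proportional to $\exp(-\frac{1}{2N}\langle\mathbf{J}\mathbf{t},\mathbf{t}\rangle)$. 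This is exactly what makes the scheme work: expanding $-\frac{1}{2N}\langle\mathbf{J}(\mathbf{t}-\mathbf{s}),(\mathbf{t}-\mathbf{s})\rangle$ in the convolution, the term $-\frac{1}{2N}\langle\mathbf{J}\mathbf{s},\mathbf{s}\rangle$ cancels the quadratic Gibbs weight, leaving an exponent linear in $\mathbf{s}$.

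Third, I would factorize the remaining integral against $d\nu_{\mathbf{S}}$ across the spins, as in the computation preceding (\ref{funzionale.pressione.multi}), producing $\prod_l[\cosh(\frac{1}{N}(\mathbf{J}\mathbf{t})_l + h_l)]^{N_l}$, and then apply the diagonal change of variables $t_l = N_l m_l + N_l^{1-\gamma}x_l$, equivalently $t_l = N\alpha_l(x_l/N_l^\gamma + m_l)$. A short check gives $\frac{1}{N}(\mathbf{J}\mathbf{t})_l + h_l = \sum_s\alpha_s J_{ls}(x_s/N_s^\gamma + m_s) + h_l$ and $-\frac{1}{2N}\langle\mathbf{J}\mathbf{t},\mathbf{t}\rangle = -\frac{N}{2}\sum_{l,s}\alpha_l\alpha_s J_{ls}(x_l/N_l^\gamma + m_l)(x_s/N_s^\gamma + m_s)$, so the entire exponent collapses to $Nf(x_1/N_1^\gamma + m_1, \dots, x_n/N_n^\gamma + m_n)$ with $f$ as in (\ref{funzionale.pressione.multi.replica}); the Jacobian $\prod_l N_l^{1-\gamma}$ and the Gaussian prefactor are constants independent of the integration variables.

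Finally, letting every $\theta_l \to +\infty$ forces the left-hand probability to $1$, yielding an explicit expression for $Z_N$ in terms of $\int_{\mathbb{R}^n}\exp(Nf(\cdots))\,d\mathbf{x}$ (finite by (\ref{proprieta.f.multi.1})); substituting it back cancels all constant prefactors and produces exactly the normalized density (\ref{distribuzione.tesi}). I expect the only genuine obstacle to be the bookkeeping of the diagonal scalings, in particular confirming the identity $\mathbf{D}_N^{-1}\mathbf{A}\mathbf{D}_N^{-1} = N^{-1}\mathbf{J}$ and checking that the single change of variables simultaneously reproduces both the quadratic and the $\ln\cosh$ parts of $Nf$; the positive-definiteness of $\mathbf{A}$ guarantees that the Gaussian (\ref{normalebivariata}) and the completion of the square are legitimate.
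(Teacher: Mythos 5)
Your proposal is correct and follows essentially the same route as the paper's own proof: reduce to the law of $\mathbf{D}_N\mathbf{W}+\mathbf{S}$, identify the precision matrix of $\mathbf{D}_N\mathbf{W}$ as $N^{-1}\mathbf{J}$, complete the square in the convolution so the quadratic Gibbs weight cancels, factorize over spins, apply the diagonal change of variables, and recover $Z_N$ by letting the $\theta_l\to\infty$. All the scaling identities you flag (in particular $\mathbf{D}_N^{-1}\mathbf{A}\mathbf{D}_N^{-1}=N^{-1}\mathbf{J}$ and the collapse of the exponent to $Nf$) check out exactly as in the paper.
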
\vspace{0.3cm}
\begin{proof}
Given $\theta_{1},\dots, \theta_{n}$ real\\
\begin{multline*}
\!\!P\bigg\{\!\dfrac{W_{1}}{(N_{1})^{1/2-\gamma}}+\dfrac{S_{1}(\boldsymbol{\sigma})\!-\!N_{1}m_{1}}{(N_{1})^{1-\gamma}}\!\leq\theta_{1},\!\dots ,\!\dfrac{W_{n}}{(N_{n})^{1/2-\gamma}}+\dfrac{S_{n}(\boldsymbol{\sigma})\!-\!N_{n}m_{n}}{(N_{n})^{1-\gamma}}\leq\theta_{n}\!\bigg\}\\\\\quad\quad=
P\;\Big\{\sqrt{N_{1}}W_{1}+S_{1}(\boldsymbol{\sigma})\in E_{1},\;\dots,\;\sqrt{N_{n}}W_{n}+S_{n}(\boldsymbol{\sigma})\in E_{n}\Big\}
\end{multline*}\\
where $E_{l}=(-\infty,\;(N_{l})^{1-\gamma}\theta_{l}+N_{l}m_{l}]$. The distribution of the random vector $(\sqrt{N_{1}}\;W_{1},\dots ,\sqrt{N_{n}}\;W_{n})$ is 
\begin{equation}\label{normale.bivariata.2}
\bigg(\frac{\det\widetilde{\mathbf{A}}}{(2\pi)^{n}}\bigg)^{\frac{1}{2}} \;\exp\Big(\!-\frac{1}{2}\langle\widetilde{\mathbf{A}}\mathbf{x},\mathbf{x}\rangle\Big)
\end{equation}
\noindent where it is easy to verify that $\widetilde{\mathbf{A}}=1/N\mathbf{J}$. We claim that since the matrix $\mathbf{A}$ is positive define also $\widetilde{\mathbf{A}}$ has this property. 
The joint distribution of the random vector $(S_{1}(\boldsymbol{\sigma}),\dots ,S_{n}(\boldsymbol{\sigma}))$ is:\\
\begin{equation}\label{distribuzione.somme}
\frac{1}{Z_{N}(\mathbf{J},\mathbf{h})}\exp\bigg(\frac{1}{2N}\langle\mathbf{J}\mathbf{s},\mathbf{s}\rangle+\langle\mathbf{h},\mathbf{s}\rangle\bigg)d\nu_{S}(\mathbf{s})
\end{equation}\\
\noindent where $\nu_{S}(\mathbf{s})$ is the distribution of $(S_{1}(\boldsymbol{\sigma}),\dots ,S_{n}(\boldsymbol{\sigma}))$ on $(\mathbb{R}^{N},\prod_{i=1}^{N}\rho(\sigma_{i}))$. 
The distribution of the random vector (\ref{sommavettori}) is given by the convolution of the distribution (\ref{normale.bivariata.2}) with the distribution (\ref{distribuzione.somme}).\newpage 
\noindent Thus:
\begin{align*}
P&\Big\{\sqrt{N_{1}}\;W_{1}+S_{1}\in E_{1},\dots ,\sqrt{N_{n}}\;W_{n}+S_{n}\in E_{n}\Big\} \nonumber\\\nonumber\\
&=\frac{1}{Z_{N}(\mathbf{J},\mathbf{h})}\bigg(\frac{\det\widetilde{\mathbf{A}}}{(2\pi)^{n}}\bigg)^{\frac{1}{2}}\nonumber\\
&\quad\times\!\!\iint_{\bigotimes\limits_{l=1}^{n} E_{l}\times\mathbb{R}^{n}}\!\!\!\!\!\!\exp\bigg(\!\frac{1}{2N}\Big(\!-\langle\mathbf{J}(\mathbf{w}-\mathbf{s}),(\mathbf{w}-\mathbf{s})\rangle+\langle\mathbf{J}\mathbf{s},\mathbf{s}\rangle\Big)+\langle\mathbf{h},\mathbf{s}\rangle\bigg)d\nu_{S}(\mathbf{s})d\mathbf{x}\nonumber\\\nonumber\\
&=\frac{1}{Z_{N}(\mathbf{J},\mathbf{h})}\bigg(\frac{\det\widetilde{\mathbf{A}}}{(2\pi)^{n}}\bigg)^{\frac{1}{2}}\nonumber\\
&\quad\times\!\!\int_{\bigotimes\limits_{l=1}^{n} E_{l}}\!\!\!\!\!\exp\bigg(\!\!-\frac{1}{2N}\langle\mathbf{J}\mathbf{w},\mathbf{w}\rangle\bigg)\!\!\int_{\mathbb{R}^{n}}\exp\bigg(\frac{1}{N}\langle\mathbf{J}\mathbf{w},\mathbf{s}\rangle+\langle\mathbf{h},\mathbf{s}\rangle\bigg)d\nu_{S}(\mathbf{s})d\mathbf{w}
\end{align*}
where $\bigotimes\limits_{l=1}^{n} E_{l}$ denotes the Cartesian product of the sets $E_{l}$. 

\noindent Since
\begin{align*}
\int_{\mathbb{R}^{n}}\exp\bigg(\frac{1}{N}\langle\mathbf{J}&\mathbf{w},\mathbf{s}\rangle+\langle\mathbf{h},\mathbf{s}\rangle\bigg)d\nu_{S}(\mathbf{s})\\&=\prod_{l=1}^{n}\int_{\mathbb{R}^{N_{1}}}\!\!\!\exp\bigg(\sum_{i\in P_{l}}\sigma_{i}\bigg(h_{l}+\frac{1}{N}\sum_{p=1}^{n}J_{lp}w_{p}\bigg)\bigg)\prod_{i\in P_{l}}d\rho(\sigma_{i})\nonumber\\
&=\prod_{l=1}^{n}\prod_{i\in P_{l}}\int_{\mathbb{R}}\exp\bigg(\sigma_{i}\bigg(h_{l}+\frac{1}{N}\sum_{p=1}^{n}J_{lp}w_{p}\bigg)\bigg)d\rho(\sigma_{i})
\end{align*}
making the following change of variables
\begin{equation*}
x_{l}=\dfrac{w_{l}-N_{l}m_{l}}{(N_{l})^{1-\gamma}}\quad\quad l=1,\dots, n
\end{equation*}
\noindent and integrating over $\mathbf{s}$, we obtain:

\begin{align}\label{calcoloneproseguo}
P&\Big\{\sqrt{N_{1}}\;W_{1}+S_{1}\in E_{1},\dots ,\sqrt{N_{n}}\;W_{n}+S_{n}\in E_{n}\Big\}\nonumber\\\nonumber\\
&=\dfrac{\prod\limits_{l=1}^{n}(N_{l})^{1-\gamma}}{Z_{N}(\mathbf{J},\mathbf{h})}\bigg(\frac{\det\widetilde{\mathbf{A}}}{(2\pi)^{n}}\bigg)^{\frac{1}{2}}\nonumber\\
&\quad\times\int_{-\infty}^{\theta_{1}}\dots\int\limits_{-\infty}^{\theta_{n}}\exp\bigg(\!-\frac{N}{2}\sum_{l, p=1}^{n}\alpha_{l}\alpha_{p}J_{lp}\Big(\dfrac{x_{l}}{N_{l}^{\;\gamma}}+m_{l}\Big)\Big(\dfrac{x_{p}}{N_{p}^{\;\gamma}}+m_{p}\Big)+\nonumber\\
&\quad+\sum_{l=1}^{n}N_{l}\ln\Big(\cosh\Big(h_{l}\sum_{p=1}^{n}\alpha_{p}J_{lp}\Big(\dfrac{x_{p}}{N_{p}^{\;\gamma}}+m_{p}\Big)\Big)\Big)\bigg)dx_{1}\dots dx_{n}\nonumber\\\nonumber\\
&=\dfrac{\prod\limits_{l=1}^{n}(N_{l})^{1-\gamma}}{Z_{N}(\mathbf{J},\mathbf{h})}\bigg(\frac{\det\widetilde{\mathbf{A}}}{(2\pi)^{n}}\bigg)^{\frac{1}{2}}\nonumber\\
&\quad\times\int_{-\infty}^{\theta_{1}}\dots\int\limits_{-\infty}^{\theta_{n}}\exp\bigg(\!Nf\Big(\dfrac{x_{1}}{N_{1}^{\;\gamma}}+m_{1},\dots,\dfrac{x_{n}}{N_{n}^{\;\gamma}}+m_{n}\Big)\bigg)dx_{1}\dots dx_{n}.
\end{align}\\
Taking $\theta_{1}\rightarrow\infty,\dots,\theta_{n}\rightarrow\infty$ the (\ref{calcoloneproseguo}) gives an equation for $Z_{N}(\mathbf{J},\mathbf{h})$ which when substituted back yields the result (\ref{distribuzione.tesi}). The integral in the last expression is finite by (\ref{proprieta.f.multi.1}).
\end{proof}
We remark that as $\gamma\!<\!1/2$, the random vector $(W_{1},\dots, W_{n})$ does not contribute to the limit of (\ref{distribuzione.tesi}) as $N_{1}\rightarrow\infty$, $\dots$, $N_{n}\rightarrow\infty$.
\begin{lemma}\label{lemma.multi.3}
	Defined $F=\max\{f(\mathbf{x})|\mathbf{x}\in\mathbb{R}^{n}\}$, let $V$ be any closed (possibly unbounded) subset of $\mathbb{R}^{n}$ which contains no global maxima of the function $f$ defined in $\ref{funzionale.pressione.multi.replica}$. Then for any $\mathbf{t}\in\mathbb{R}^{n}$ there exists $\epsilon>0$ such that\\
	\begin{equation}\label{tesi.lemma.3.multi}
	e^{-NF}\int_{V} \exp(Nf(\mathbf{x})+\langle \mathbf{t},\mathbf{N}^{\gamma}\mathbf{x}\rangle)d\mathbf{x}=O(e^{-N\epsilon}) \quad\quad\quad N\rightarrow\infty.
	\end{equation}
\end{lemma}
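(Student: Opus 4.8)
The plan is to mimic the peeling argument of Lemma \ref{lemma.3}, the only new feature being the linear factor $\exp(\langle\mathbf{t},\mathbf{N}^{\gamma}\mathbf{x}\rangle)$. First I would record that, since $\mathbf{N}=(N_1,\dots,N_n)=(\alpha_1 N,\dots,\alpha_n N)$, the notation introduced before the theorems lets me write $\langle\mathbf{t},\mathbf{N}^{\gamma}\mathbf{x}\rangle=N^{\gamma}\langle\mathbf{c},\mathbf{x}\rangle$ with $\mathbf{c}=\mathbf{t}\boldsymbol{\alpha}^{\gamma}$ a \emph{fixed} vector. Because $V$ is closed, $f$ is continuous and tends to $-\infty$, the supremum $\sup_{\mathbf{x}\in V}f(\mathbf{x})$ is attained, and as $V$ contains no global maximum it is strictly below $F$; set $\delta:=F-\sup_{V}f>0$. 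Pulling out $e^{NF}$, the claim becomes
\begin{equation*}
\int_{V}\exp\!\big(N h_N(\mathbf{x})\big)\,d\mathbf{x}=O(e^{-N\epsilon}),\qquad h_N(\mathbf{x})=f(\mathbf{x})-F+N^{\gamma-1}\langle\mathbf{c},\mathbf{x}\rangle .
\end{equation*}
The decisive point is that the perturbation is now written as $N^{\gamma-1}\langle\mathbf{c},\mathbf{x}\rangle$ \emph{per unit of} $f$, and since $\gamma<1$ in every application ($\gamma=1/2k$) the coefficient $N^{\gamma-1}\to 0$.

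Next I would prove the uniform estimate $\sup_{\mathbf{x}\in V}h_N(\mathbf{x})\le-\delta/2$ for all large $N$, splitting $V$ according to $|\mathbf{x}|\le R$ or $|\mathbf{x}|>R$. On the ball one has $f-F\le-\delta$ while $N^{\gamma-1}\langle\mathbf{c},\mathbf{x}\rangle\le N^{\gamma-1}|\mathbf{c}|R\le\delta/2$ once $N$ is large, so $h_N\le-\delta/2$ there. For $|\mathbf{x}|>R$ I would invoke the upper bound $f(\mathbf{x})\le-\tfrac12\langle\widetilde{\mathbf{J}}\mathbf{x},\mathbf{x}\rangle+\sum_{l}\alpha_{l}\big|\sum_{s}\alpha_{s}J_{ls}x_{s}+h_{l}\big|$ established in the proof of Proposition \ref{prop.multi}: since $\widetilde{\mathbf{J}}$ is positive definite this gives a genuine quadratic coercivity $f(\mathbf{x})\le-c|\mathbf{x}|^{2}+C(1+|\mathbf{x}|)$, and for $N^{\gamma-1}\le 1$ the term $N^{\gamma-1}\langle\mathbf{c},\mathbf{x}\rangle\le|\mathbf{c}|\,|\mathbf{x}|$ is absorbed, so that $h_N(\mathbf{x})\le-c|\mathbf{x}|^{2}+(C+|\mathbf{c}|)|\mathbf{x}|+(C-F)\le-\delta/2$ as soon as $R$ (chosen first, independently of $N$) is large enough. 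This quadratic-versus-linear comparison is the heart of the matter and the step I expect to be the main obstacle: a careless global ``completing the square'' would estimate $\int_{\mathbb{R}^{n}}\exp(f+N^{\gamma}\langle\mathbf{c},\mathbf{x}\rangle)$ by something of size $\exp(O(N^{2\gamma}))$, which for $\gamma=1/2$ competes with $e^{-N\delta}$; distributing the linear term as $N^{\gamma-1}\langle\mathbf{c},\mathbf{x}\rangle$ per copy of $f$ is precisely what avoids this and keeps the bound clean for every $\gamma<1$.

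Finally I would peel off one factor exactly as in Lemma \ref{lemma.3}: writing $Nh_N=(N-1)h_N+h_N$ and using $h_N\le-\delta/2$ on $V$ for the first term,
\begin{equation*}
\int_{V}e^{Nh_N(\mathbf{x})}\,d\mathbf{x}\le e^{-(N-1)\delta/2}\int_{V}e^{h_N(\mathbf{x})}\,d\mathbf{x}\le e^{-(N-1)\delta/2}\,e^{-F}\!\int_{\mathbb{R}^{n}}\exp\!\big(f(\mathbf{x})+|\mathbf{c}|\,|\mathbf{x}|\big)\,d\mathbf{x},
\end{equation*}
where in the last step I again used $N^{\gamma-1}\le1$ to dominate the remaining linear term by the fixed one $|\mathbf{c}|\,|\mathbf{x}|$. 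The integral on the right is a finite constant independent of $N$, its convergence following from the same quadratic decay of $f$ (equivalently, from condition (\ref{condizione.ro})-type integrability built into Proposition \ref{prop.multi}). Taking $\epsilon=\delta/2$ then yields $e^{-NF}\int_{V}\exp(Nf(\mathbf{x})+\langle\mathbf{t},\mathbf{N}^{\gamma}\mathbf{x}\rangle)\,d\mathbf{x}=O(e^{-N\epsilon})$, which is the assertion (\ref{tesi.lemma.3.multi}).
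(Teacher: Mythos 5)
Your proof is correct and follows essentially the same route as the paper's: both exploit the gap $\sup_{V}f\le F-\delta$ on a bounded region and control the unbounded part of $V$ via the quadratic coercivity $f(\mathbf{x})\le-\tfrac12\langle\widetilde{\mathbf{J}}\mathbf{x},\mathbf{x}\rangle+O(\|\mathbf{x}\|)$ established in the proof of Proposition \ref{prop.multi}, with the perturbation $\langle\mathbf{t},\mathbf{N}^{\gamma}\mathbf{x}\rangle$ subleading because $\gamma<1$. The only organizational difference is that you derive a uniform bound $h_{N}\le-\delta/2$ on all of $V$ and then peel off one factor as in Lemma \ref{lemma.3}, whereas the paper bounds the two regions separately (using $F\ge f(\mathbf{0})\ge 0$ on the unbounded piece); your variant is, if anything, slightly cleaner.
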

\begin{proof}
	$V$ contains no global maxima of $f(\mathbf{x})$, thus there exists $\bar{\epsilon}>0$ such that:
	\begin{equation*}
	\sup_{\mathbf{x}\in V}f(\mathbf{x})\leq\sup_{\mathbf{x}\in\mathbb{R}}f(\mathbf{x})-\bar{\epsilon}=F-\bar{\epsilon}.
	\end{equation*}\\
	Pick $M>0$ so large that for every $N$, whenever $||\mathbf{x}||\geq M$
	\begin{equation*}
	Nf(\mathbf{x})+\langle \mathbf{t},\mathbf{N}^{\gamma}\mathbf{x}\rangle\leq -\frac{1}{5}N\langle\widetilde{\mathbf{J}}\mathbf{x},\mathbf{x}\rangle\leq -\frac{1}{5}\langle\widetilde{\mathbf{J}}\mathbf{x},\mathbf{x}\rangle
	\end{equation*}
	we can write
	\begin{align}\label{dim.3}
	e^{-NF}&\int_{V}e^{Nf(\mathbf{x})+\langle \mathbf{t},\mathbf{N}^{\gamma}\mathbf{x}\rangle}d\mathbf{x}\nonumber\\ &=e^{-NF}\bigg(\int_{V\cap\{||\mathbf{x}||\geq M\}}\!\!\!\!\!\!\!e^{Nf(\mathbf{x})+\langle \mathbf{t},\mathbf{N}^{\gamma}\mathbf{x}\rangle}d\mathbf{x}+\!\!\int_{V\cap\{||\mathbf{x}||<M\}}\!\!\!\!\!\!\!e^{Nf(\mathbf{x})+\langle \mathbf{t},\mathbf{N}^{\gamma}\mathbf{x}\rangle}d\mathbf{x}\bigg)\nonumber\\
	&\leq e^{-NF}\int_{V\cap\{||\mathbf{x}||\geq M\}}\!\!\!\!\!\!e^{-\frac{1}{5}\langle\widetilde{\mathbf{J}}\mathbf{x},\mathbf{x}\rangle}d\mathbf{x}+e^{-NF}e^{-N\bar{\epsilon}}e^{\mathbf{N}^{\gamma}M||\mathbf{t}||}\int_{V\cap\{||\mathbf{x}||<M\}}\!\!\!d\mathbf{x}.
	\end{align}
	Since $F\geq f(0,\dots,0)\geq 0$ and $\widetilde{\mathbf{J}}$ is positive define the latter passage of (\ref{dim.3}) as $N\rightarrow\infty$ is $O(e^{-N\epsilon})$ where $\epsilon=\min\{F,\bar{\epsilon}\}$. This proves the (\ref{tesi.lemma.3.multi})
\end{proof}\vspace{0.3cm}
\begin{lemma}\label{lemma.multi.4}
Let $\boldsymbol{\mu}$ be a global maximum point of the function $f$ given by (\ref{funzionale.pressione.multi.replica}). Let $k$ be the homogeneous type of $\boldsymbol{\mu}$. Define $F=\max\{f(\mathbf{x})|\mathbf{x}\in\mathbb{R}^{n}\}$. Then there exists a positive number $\delta_{\boldsymbol{\mu}}$ such that for any $\mathbf{t}\in\mathbb{R}^{n}$, any $k\in\mathbb{N}$, any $\delta\in(0,\delta_{\boldsymbol{\mu}}]$ and any bounded continuous function $\phi:\mathbb{R}^{n}\rightarrow\mathbb{R}$
\begin{multline}\label{tesi.lemma.multi.4}
\lim_{N\rightarrow\infty}e^{-\langle\mathbf{t},\mathbf{N}^{1/2k}\boldsymbol{\mu}\rangle}\Big(\prod_{l=1}^{n}N_{l}\Big)^{1/2k}e^{-NF}\int_{B(\boldsymbol{\mu},\delta)}e^{Nf(\mathbf{x})+\langle\mathbf{t},\mathbf{N}^{1/2k}\mathbf{x}}\rangle\phi(\mathbf{x})d\mathbf{x}\\=
\phi(\boldsymbol{\mu})\int_{\mathbb{R}^{n}}\exp\bigg(f^{\boldsymbol{\mu}}_{2k}\Big(\dfrac{\mathbf{x}}{\boldsymbol{\alpha}^{1/2k}}\Big)+\langle\mathbf{t},\mathbf{x}\rangle\bigg)d\mathbf{x}.
\end{multline}
\end{lemma}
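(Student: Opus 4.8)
The plan is to follow the pattern of the one-species proof of Theorem \ref{teorema.2}: localize the integral around $\boldsymbol{\mu}$, perform an anisotropic rescaling that cancels all the prefactors, and then pass to the limit by dominated convergence using the two estimates collected in (\ref{proprieta.B.multi}). Throughout I use that $\boldsymbol{\mu}$ is a global maximum, so that $F=f(\boldsymbol{\mu})$, and that $N_{l}=\alpha_{l}N$.

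First I would translate to the origin by setting $\mathbf{u}=\mathbf{x}-\boldsymbol{\mu}$. Since $e^{-NF}e^{Nf(\mathbf{x})}=e^{N(f(\boldsymbol{\mu}+\mathbf{u})-f(\boldsymbol{\mu}))}=e^{NB(\mathbf{u};\boldsymbol{\mu})}$ by (\ref{funzione.B.multi}), and $\langle\mathbf{t},\mathbf{N}^{1/2k}\mathbf{x}\rangle=\langle\mathbf{t},\mathbf{N}^{1/2k}\boldsymbol{\mu}\rangle+\langle\mathbf{t},\mathbf{N}^{1/2k}\mathbf{u}\rangle$, the leading prefactor $e^{-\langle\mathbf{t},\mathbf{N}^{1/2k}\boldsymbol{\mu}\rangle}$ removes exactly the constant part. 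Then I rescale each coordinate by $w_{l}=N_{l}^{1/2k}u_{l}$; the Jacobian is $\prod_{l}N_{l}^{-1/2k}$, which cancels the factor $(\prod_{l}N_{l})^{1/2k}$, while $\langle\mathbf{t},\mathbf{N}^{1/2k}\mathbf{u}\rangle$ becomes precisely $\langle\mathbf{t},\mathbf{w}\rangle$. The ball $B(\boldsymbol{\mu},\delta)$ becomes, in the $\mathbf{w}$ variables, a region $D_{N}$ expanding to fill all of $\mathbb{R}^{n}$ as $N\to\infty$ (comparable to $|\mathbf{w}|\lesssim\delta N^{1/2k}$, since $N_{l}=\alpha_{l}N$). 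After these substitutions the left-hand side of (\ref{tesi.lemma.multi.4}) equals
\[
\int_{D_{N}}\exp\left(N B\left(\frac{\mathbf{w}}{\mathbf{N}^{1/2k}};\boldsymbol{\mu}\right)+\langle\mathbf{t},\mathbf{w}\rangle\right)\phi\left(\boldsymbol{\mu}+\frac{\mathbf{w}}{\mathbf{N}^{1/2k}}\right)d\mathbf{w}.
\]

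With the integral in this form I would conclude by dominated convergence. The first line of (\ref{proprieta.B.multi}) together with the continuity of $\phi$ gives the pointwise limit of the integrand, namely $\exp(f_{2k}^{\boldsymbol{\mu}}(\mathbf{w}/\boldsymbol{\alpha}^{1/2k})+\langle\mathbf{t},\mathbf{w}\rangle)\,\phi(\boldsymbol{\mu})$, which is the integrand of the right-hand side of (\ref{tesi.lemma.multi.4}); the domain $D_{N}$ filling $\mathbb{R}^{n}$ then produces the integral over all of $\mathbb{R}^{n}$. Integrability of this limit is guaranteed because $f_{2k}^{\boldsymbol{\mu}}$ is negative and homogeneous of degree $2k\ge 2$, so it dominates the linear term $\langle\mathbf{t},\mathbf{w}\rangle$ at infinity.

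The main obstacle will be producing an $N$-independent integrable dominating function from the second line of (\ref{proprieta.B.multi}). The difficulty is that the correction $o(1)P_{2k+1}(\mathbf{w})$ involves a polynomial of degree $2k+1$, one higher than the degree of the controlling term $f_{2k}^{\boldsymbol{\mu}}$. This is exactly where the smallness of $\delta_{\boldsymbol{\mu}}$ enters: on $D_{N}$ one has $|\mathbf{w}|\lesssim\delta N^{1/2k}$, and since the $o(1)$ coefficient is of order $N^{-1/2k}$, the product $o(1)\,|\mathbf{w}|^{2k+1}\lesssim\delta\,|\mathbf{w}|^{2k}$ is absorbed into a fraction of the negative degree-$2k$ term $\tfrac{1}{2}f_{2k}^{\boldsymbol{\mu}}(\mathbf{w}/\boldsymbol{\alpha}^{1/2k})$ once $\delta$ is chosen small enough. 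This yields a bound of the form $\exp(c\,f_{2k}^{\boldsymbol{\mu}}(\mathbf{w}/\boldsymbol{\alpha}^{1/2k})+\langle\mathbf{t},\mathbf{w}\rangle)$ with $c>0$, which is integrable and independent of $N$, completing the dominated convergence argument.
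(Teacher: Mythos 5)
Your proposal is correct and follows essentially the same route as the paper: translate by $\boldsymbol{\mu}$, rescale by $\mathbf{N}^{1/2k}$ so that the prefactors cancel, and conclude by dominated convergence using the two estimates in (\ref{proprieta.B.multi}). In fact you supply more detail than the paper does on the one delicate point, namely how the small-$\delta$ choice lets the $o(1)P_{2k+1}$ correction be absorbed into the negative homogeneous term to produce an $N$-independent integrable dominating function.
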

\vspace{0.3cm}
\begin{proof}
To make the notation easier we define $\gamma=1/2k$. Making the change of variable
	\begin{equation*}
	x_{l}=\mu_{l}+\dfrac{u_{l}}{N_{l}}\quad l=1,\dots,n
	\end{equation*}
	the left-hand side of (\ref{tesi.lemma.multi.4}) becomes
	\begin{equation}\label{passaggio.lemma.multi.4}
	\lim_{N\rightarrow\infty}\int_{||\frac{\mathbf{u}}{\mathbf{N}^{\gamma}}||\leq d}\phi\Big(\boldsymbol{\mu}+\frac{\mathbf{u}}{\mathbf{N}^{\gamma}}\Big)\exp\bigg(NB\Big(\boldsymbol{\mu}+\frac{\mathbf{u}}{\mathbf{N}^{\gamma}};\boldsymbol{\mu}\Big)+\langle\mathbf{t},\mathbf{x}\rangle\bigg)d\mathbf{x}
	\end{equation}
	where $B$ is the function defined in (\ref{funzione.B.multi}). By the conditions expressed in (\ref{proprieta.B.multi}) and the dominate convergence theorem the limit (\ref{passaggio.lemma.multi.4}) is equal to
	\begin{equation}\label{passaggio2.lemma.multi.4}
	\phi(\boldsymbol{\mu})\int_{\mathbb{R}^{n}}\exp\bigg(f^{\boldsymbol{\mu}}_{2k}\Big(\dfrac{\mathbf{u}}{\boldsymbol{\alpha}^{1/2k}}\Big)+\langle\mathbf{t},\mathbf{u}\rangle\bigg)d\mathbf{u}.
	\end{equation}
	Since $f^{\boldsymbol{\mu}^{p}}_{2k}(\frac{\mathbf{x}}{\boldsymbol{\alpha}^{1/2k}})<0$ for every $\mathbf{x}$ different from zero, the integral in (\ref{passaggio2.lemma.multi.4}) is finite. 
	This completes the proof of the lemma. 
\end{proof}\vspace{0.3cm}
\noindent {\it Proof of Theorem \ref{teo.multi.0}}.
	By definition $m_{l}(\boldsymbol{\sigma})=S_{l}(\boldsymbol{\sigma})/N$, thus by lemmas \ref{lemma.multi.1} and \ref{lemma.multi.2} (with $\gamma=0$ and $\mathbf{m}=\mathbf{0}$) we know that consider a random vector $(W_{1},\dots,W_{n})$ whose distribution is the multivariate Gaussian given by (\ref{normalebivariata}), if $(W_{1},\dots,W_{n})$ is independent from $(S_{1}(\boldsymbol{\sigma}),\dots,S_{n}(\boldsymbol{\sigma}))$ the distribution of
	\begin{equation*}
	\bigg(\dfrac{W_{1}}{\sqrt{N_{1}}},\dots,\dfrac{W_{1}}{\sqrt{N_{1}}}\bigg)+\bigg(\dfrac{S_{1}(\boldsymbol{\sigma})}{N_{1}},\dots,\dfrac{S_{n}(\boldsymbol{\sigma})}{N_{n}}\bigg)
	\end{equation*}\\
	is given by
	\begin{equation*}
	\dfrac{\exp(Nf(\mathbf{x}))d\mathbf{x}}{\displaystyle{\int_{\mathbb{R}^{n}}}\exp(Nf(\mathbf{x}))d\mathbf{x}}.
	\end{equation*}\\
	Thus to prove the theorem it is enough to show that for any bounded continuous function $\phi:\mathbb{R}^{n}\rightarrow\mathbb{R}$
	\begin{equation}\label{teo.0.passaggio.1}
	\dfrac{\displaystyle{\int_{\mathbb{R}}}e^{Nf(\mathbf{x})}\phi(\mathbf{x})d\mathbf{x}}{\displaystyle{\int_{\mathbb{R}}}e^{Nf(\mathbf{x})}d\mathbf{x}}\rightarrow\dfrac{\sum\limits_{p=1}^{P}\phi(\mu_{p})b_{p}}{\sum\limits_{p=1}^{P}b_{p}}.
	\end{equation}
	Consider $\delta_{1},\dots,\delta_{P}$ such that conditions given in (\ref{proprieta.B.multi}) are satisfied. We choose $\bar{\delta}=\min\{\delta_{p}\;|\;p=1,\dots,P\}$ decreasing it (if necessary) to assure that $0<\bar{\delta}<\min\{|\boldsymbol{\mu}^{p}-\boldsymbol{\mu}^{q}|:1\leq p\neq q\leq P\}$. Denoted by $V$ the closet set\\
	\begin{equation*}
	V=\mathbb{R}-\bigcup_{p=1}^{P}B(\boldsymbol{\mu}_{p},\bar{\delta})
	\end{equation*}\\
	where $B(\boldsymbol{\mu}_{p},\bar{\delta})$ is the ball centered in $\boldsymbol{\mu}^{p}$ of radius $\bar{\delta}$, by lemma \ref{lemma.multi.3} with $\mathbf{t}=\mathbf{0}$ there exists $\epsilon>0$ such that as $N\rightarrow\infty$\\
	\begin{equation}\label{teo.0.passaggio.2}
	e^{-NF}\int_{V}e^{Nf(\mathbf{x})}\phi(\mathbf{x})d\mathbf{x}=O(e^{-N\epsilon}). 
	\end{equation}\\
	Moreover, for each $p=1,\dots,P$ by lemma \ref{lemma.multi.4} we have\\
	\begin{multline}\label{teo.0.passaggio.3}
	\lim_{N\rightarrow\infty}\Big(\prod_{l=1}^{n} N_{l}\Big)^{\frac{1}{2k^{*}}}e^{-NF}\int_{B(\boldsymbol{\mu}^{p},\bar{\delta})}e^{Nf(\mathbf{x})}\phi(\mathbf{x})d\mathbf{x} \\=\phi(\boldsymbol{\mu}^{p})\int_{\mathbb{R}}\exp\bigg(f^{\boldsymbol{\mu}^{p}}_{2k^{*}}\Big(\dfrac{\mathbf{x}}{\boldsymbol{\alpha}^{1/2k^{*}}}\Big)\bigg)d\mathbf{x}.
	\end{multline}
	By (\ref{teo.0.passaggio.2}) and (\ref{teo.0.passaggio.3}) we get
	\begin{multline}\label{teo.0.passaggio.6}
	\lim_{N\rightarrow\infty}\Big(\prod_{l=1}^{n} N_{l}\Big)^{\frac{1}{2k^{*}}}e^{-NF}\int_{\mathbb{R}^{n}}e^{Nf(\mathbf{x})}\phi(\mathbf{x})d\mathbf{x} \\=\sum_{p=1}^{P}\phi(\boldsymbol{\mu}^{p})\int_{\mathbb{R}}\exp\bigg(f^{\boldsymbol{\mu}_{p}}_{2k^{*}}\Big(\frac{\mathbf{x}}{\boldsymbol{\alpha}^{1/2k^{*}}}\Big)\bigg)d\mathbf{x}.
	\end{multline}\\
	In a similar way for the denominator we have:
	\begin{multline}\label{teo.0.passaggio.7}
	\lim_{N\rightarrow\infty}\Big(\prod_{l=1}^{n} N_{l}\Big)^{\frac{1}{2k^{*}}}e^{-NF}\int_{\mathbb{R}^{n}}e^{Nf(\mathbf{x})}d\mathbf{x} \\=\sum_{p=1}^{P}\int_{\mathbb{R}}\exp\bigg(f^{\boldsymbol{\mu}^{p}}_{2k^{*}}\Big(\frac{\mathbf{x}}{\boldsymbol{\alpha}^{1/2k^{*}}}\Big)\bigg)d\mathbf{x}.
	\end{multline}\\
	Now the statement (\ref{teo.0.passaggio.1}) follows from (\ref{teo.0.passaggio.6}) and (\ref{teo.0.passaggio.7}).
	\qed
	\vspace{0.5cm}\\
	\noindent{\it Proof of Theorem \ref{teo.multi.1}}.
	For $k>1$, by lemmas \ref{lemma.multi.1} and \ref{lemma.multi.2} (with $\gamma=1/2k$ and $\mathbf{m}=\boldsymbol{\mu}$), we have to prove that, for any bounded continuous function $\psi:\mathbb{R}^{n}\rightarrow\mathbb{R}$
	\begin{equation*}
	\dfrac{\displaystyle{\int_{\mathbb{R}^{n}}} \exp\bigg(\!Nf\Big(\dfrac{\mathbf{x}}{\mathbf{N}^{1/2k}}+\boldsymbol{\mu}\Big)\bigg)\psi(\mathbf{x})d\mathbf{x}}{\displaystyle{\int_{\mathbb{R}^{n}}} \exp\bigg(\!Nf\Big(\dfrac{\mathbf{x}}{\mathbf{N}^{1/2k}}+\boldsymbol{\mu}\Big)\bigg)d\mathbf{x}}
	\rightarrow \dfrac{\displaystyle{\int_{\mathbb{R}^{n}}} \exp\bigg(f_{2k}^{\boldsymbol{\mu}}\bigg(\frac{\mathbf{x}}{\boldsymbol{\alpha}^{1/2k}}\bigg)\bigg)\psi(\mathbf{x})d\mathbf{x}}{\displaystyle{\int_{\mathbb{R}^{n}}} \exp\bigg(f_{2k}^{\boldsymbol{\mu}}\bigg(\frac{\mathbf{x}}{\boldsymbol{\alpha}^{1/2k}}\bigg)\bigg)d\mathbf{x}}.
	\end{equation*}
	To make the notation easier we set $\gamma=1/2k$. 
	We pick $\delta >0$ such as conditions (\ref{proprieta.B.multi}) are verified. Defined $F=f(\mathbf{\boldsymbol{\mu}})$, by lemma \ref{lemma.multi.3} with $\mathbf{t}=\mathbf{0}$ there exists $\epsilon >0$ so that
	\begin{equation}\label{archelo}
	e^{-NF}\int_{||\frac{\mathbf{x}}{\mathbf{N}^{\gamma}}||\geq\delta}\exp\bigg(\!Nf\Big(\dfrac{\mathbf{x}}{\mathbf{N}^{\gamma}}+\boldsymbol{\mu}\Big)\bigg)\psi(\mathbf{x})d\mathbf{x}=O\Big(\Big(\prod_{l=1}^{n} N_{l}\Big)^{\gamma}e^{-N\epsilon}\Big).
	\end{equation}
	On the other hand
	\begin{multline}\label{chenometido}
	e^{-NF}\int_{||\frac{\mathbf{x}}{\mathbf{N}^{\gamma}}||<\delta} \exp\bigg(\!Nf\Big(\dfrac{\mathbf{x}}{\mathbf{N}^{\gamma}}+\boldsymbol{\mu}\Big)\bigg)\psi(\mathbf{x})d\mathbf{x}\\
	=e^{-N(F-f(\boldsymbol{\mu}))}\int_{||\frac{\mathbf{x}}{\mathbf{N}^{\gamma}}||<\delta} \exp\bigg(\!NB\Big(\dfrac{\mathbf{x}}{\mathbf{N}^{\gamma}},\boldsymbol{\mu}\Big)\bigg)\psi(\mathbf{x})d\mathbf{x}.	
	\end{multline}
	Thus by conditions (\ref{proprieta.B.multi}) and the dominated convergence theorem the right-hand side of (\ref{chenometido}) as $N\rightarrow\infty$ tends to
	\begin{equation*}
	\int_{||\frac{\mathbf{x}}{\mathbf{N}^{\gamma}}||<\delta} \exp\bigg(f_{2k}^{\boldsymbol{\mu}}\bigg(\frac{\mathbf{x}}{\boldsymbol{\alpha}^{1/2k}}\bigg)\bigg)\psi(\mathbf{x})d\mathbf{x}.
	\end{equation*}
	\noindent This with (\ref{archelo}) proves the statement $(2)$ of the theorem.

	We observe that for $k=1$ the following identity holds
	\begin{equation*}
	N\cdot f_{2k}^{\boldsymbol{\mu}}\bigg(\frac{\mathbf{x}}{\boldsymbol{\alpha}^{1/2k}}\bigg)=-\frac{1}{2}\langle\boldsymbol{-\widetilde{\mathcal{H}}}_{f}(\boldsymbol{\mu})\mathbf{x},\mathbf{x}\rangle
	\end{equation*}
	\noindent where $\boldsymbol{\widetilde{\mathcal{H}}}_{f}=\mathbf{D}_{\alpha}^{-1}\boldsymbol{\mathcal{H}}_{f}\mathbf{D}_{\alpha}^{-1}$ is a negative define matrix. Following a similar procedure, by (\ref{proprieta.B.multi}) and the dominate convergence theorem, we prove that for any bounded continuous function $\psi:\mathbb{R}^{n}\rightarrow\mathbb{R}$ the ratio\newpage
	\begin{equation*}
	\dfrac{\displaystyle{\int_{\mathbb{R}^{n}}} \exp\bigg(\!Nf\Big(\dfrac{\mathbf{x}}{\sqrt{\mathbf{N}}}+\boldsymbol{\mu}\Big)\bigg)\psi(\mathbf{x})d\mathbf{x}}{\displaystyle{\int_{\mathbb{R}^{n}}}\exp\bigg(\!Nf\Big(\dfrac{\mathbf{x}}{\sqrt{\mathbf{N}}}+\boldsymbol{\mu}\Big)\bigg)d\mathbf{x}}
	\end{equation*}
	converges to\\
	\begin{equation*}
	\bigg(\dfrac{-\det \boldsymbol{\widetilde{\mathcal{H}}}_{f}(\boldsymbol{\mu})}{(2\pi)^{n}}\bigg)^{\frac{1}{2}}\int_{\mathbb{R}^{n}} \exp\Big(-\frac{1}{2}\langle\boldsymbol{-\widetilde{\mathcal{H}}}_{f}(\boldsymbol{\mu})\mathbf{x},\mathbf{x}\rangle\Big)\psi(\mathbf{x})d\mathbf{x}.
	\end{equation*}\\
	\noindent  The multivariate Gaussian obtained
	\begin{equation*}
	\bigg(\dfrac{-\det \boldsymbol{\widetilde{\mathcal{H}}}_{f}(\boldsymbol{\mu})}{(2\pi)^{n}}\bigg)^{\frac{1}{2}} \exp\Big(\!-\frac{1}{2}\langle-\boldsymbol{\widetilde{\mathcal{H}}}_{f}(\boldsymbol{\mu})\mathbf{x},\mathbf{x}\rangle\Big)d\mathbf{x}
	\end{equation*}
	\noindent is the convolution of the distribution of the random vector $(W_{1},\dots,W_{n})$ with the distribution of the random vector $\bar{\mathbf{S}}^{1}(\boldsymbol{\sigma})$ given by (\ref{vettore.S.1}). 

	Indicated with $\phi_{\mathbf{W}}(\boldsymbol{\lambda})$, $\phi_{\bar{\mathbf{S}}^{1}}(\boldsymbol{\lambda})$ and $\phi_{\mathbf{W}+\bar{\mathbf{S}}^{1}}(\boldsymbol{\lambda})$ respectively the characteristic function of the random vectors $(W_{1},\dots,W_{n})$, of the random vector (\ref{vettore.S.1}) and of their sum the following equality holds:
	\begin{equation}\label{prodcaratteristiche}
	\phi_{\mathbf{W}+\bar{\mathbf{S}}^{1}}(\boldsymbol{\lambda})=\phi_{\mathbf{W}}(\boldsymbol{\lambda})\;\phi_{\bar{\mathbf{S}}^{1}}(\boldsymbol{\lambda})\;.
	\end{equation}
	We remember that the characteristic function of a random vector $(X_{1},\dots,X_{n})$ whose joint distribution is a multivariate Gaussian $$\sqrt{\frac{\det \mathbf{B}}{(2\pi)^{n}}}\; \exp\Big(\!-\frac{1}{2}\langle\mathbf{B}\mathbf{x},\mathbf{x}\rangle\Big)$$ is $\phi(\boldsymbol{\lambda})=\exp(-1/2\langle\mathbf{B}^{-1}\boldsymbol{\lambda},\boldsymbol{\lambda}\rangle)$ where $\mathbf{B}^{-1}$ is the covariance matrix of the vector $(X_{1},\dots,X_{n})$. 

	Thus the equality (\ref{prodcaratteristiche}) allows to determinate the covariance matrix of the vector $\bar{\mathbf{S}}^{1}(\boldsymbol{\sigma})$ taking off the matrix $\mathbf{A}^{-1}$ from $-\boldsymbol{\widetilde{\mathcal{H}}}_{f}^{-1}$. By calculus it is easy to verify that $-\boldsymbol{\widetilde{\mathcal{H}}}_{f}^{-1}-\mathbf{A}^{-1}=\boldsymbol{\widetilde{\chi}}$. 

	To complete the proof we have to show that the matrix $\boldsymbol{\widetilde{\chi}}$ is positive define. Considering the convex function
	\begin{align}\label{funzione.fi.multi}
	\Phi(x_{1},\dots,x_{n})&=\sum_{l=1}^{n}\alpha_{l}\ln\bigg(\!\cosh\bigg(\sum_{s=1}^{n}\alpha_{s}J_{ls}x_{s}+h_{l}\bigg)\bigg)\nonumber\\
	&=f(x_{1},\dots,x_{n})+\frac{1}{2}\sum_{l, s=1}^{n}\alpha_{l}\alpha_{s}J_{ls}x_{l}x_{s}
	\end{align}
	we can write $\boldsymbol{\mathcal{H}}_{f}=\boldsymbol{\mathcal{H}}_{\Phi}-\mathbf{D}_{\boldsymbol{\alpha}}\mathbf{D}_{\boldsymbol{\alpha}}\mathbf{J}\mathbf{D}_{\boldsymbol{\alpha}}\mathbf{D}_{\boldsymbol{\alpha}}$ where $\boldsymbol{\mathcal{H}}_{\Phi}$ is the Hessian matrix of the function $\Phi$. Since $\mathbf{A}=\mathbf{D}_{\boldsymbol{\alpha}}\mathbf{J}\mathbf{D}_{\boldsymbol{\alpha}}$ we get  $-\boldsymbol{\widetilde{\mathcal{H}}}_{f}=\mathbf{A}-\boldsymbol{\widetilde{\mathcal{H}}}_{\Phi}$ where $\boldsymbol{\widetilde{\mathcal{H}}}_{\Phi}=\mathbf{D}_{\alpha}^{-1}\boldsymbol{\mathcal{H}}_{\Phi}\mathbf{D}_{\alpha}^{-1}$. Multiplying $\boldsymbol{\widetilde{\chi}}$ by the positive define matrix $-\boldsymbol{\widetilde{\mathcal{H}}}_{f}$ we obtain
\begin{equation}\label{convessita.fi.multi}
 -\boldsymbol{\widetilde{\mathcal{H}}}_{f}(\boldsymbol{\mu})\boldsymbol{\widetilde{\chi}}=\Big(\mathbf{A}-\boldsymbol{\widetilde{\mathcal{H}}}_{\Phi}(\boldsymbol{\mu})\Big)\Big(\Big(\mathbf{A}-\boldsymbol{\widetilde{\mathcal{H}}}_{\Phi}(\boldsymbol{\mu})\Big)^{-1}-\mathbf{A}^{-1}\Big)=\boldsymbol{\widetilde{\mathcal{H}}}_{\Phi}(\boldsymbol{\mu})\mathbf{A}^{-1}.
\end{equation}
Since all matrices involved in (\ref{convessita.fi.multi}) are symmetric and $\boldsymbol{\widetilde{\mathcal{H}}}_{\Phi}(\boldsymbol{\mu})\mathbf{A}^{-1}$ is positive define it follows that also $\boldsymbol{\widetilde{\chi}}$ is positive define.
Hence the random vector (\ref{vettore.S.1}), converges to a multivariate Gaussian which covariance matrix is $\boldsymbol{\widetilde{\chi}}$.
\qed\\

In order to prove theorem \ref{teo.multi.2} we introduce the conditional joint distribution of a configuration $\boldsymbol{\sigma}$ conditioned on the event $(m_{1}(\boldsymbol{\sigma}),\dots,m_{n}(\boldsymbol{\sigma}))\in B(\boldsymbol{\mu},d)$
\begin{equation}\label{BG.condizionata}
 P_{N,\mathbf{J},\mathbf{h},d}\{\boldsymbol{\sigma}\}=\frac{1}{Z_{N}(\mathbf{J},\mathbf{h},d)}\exp\Big(\frac{1}{2N}\langle\mathbf{J}\mathbf{s},\mathbf{s}\rangle+\langle\mathbf{h},\mathbf{s}\rangle\Big)1_{B(\boldsymbol{\mu},d)}\Big(\frac{\mathbf{s}}{\mathbf{N}}\Big)d\nu_{S}(\mathbf{s})
\end{equation}
where $1_{B(\boldsymbol{\mu},d)}$ is the indicator function of the ball $B(\boldsymbol{\mu},d)$, 
$\nu_{S}$ denotes the distribution of the random vector $(S_{1}(\boldsymbol{\sigma}),\dots,S_{n}(\boldsymbol{\sigma}))$ on $(\mathbb{R}^{N},\prod_{i=1}^{N}\rho(\sigma_{i}))$ and $Z_{N}(\mathbf{J},\mathbf{h},d)$ is the normalizing constant.

We also need the following results:
\begin{lemma}\label{lemma.multi.5}
Let $P_{N,\mathbf{J},\mathbf{h},\delta}\{\boldsymbol{\sigma}\}$ be the joint distribution of $\boldsymbol{\sigma}=(\sigma_{1},\dots,\sigma_{N})$. Let $V_{\gamma}$ be the random vector
\begin{equation}\label{vettorone}
	\bigg(\dfrac{W_{1}}{(N_{1})^{1/2-\gamma}},\dots,\dfrac{W_{n}}{(N_{n})^{1/2-\gamma}}\bigg)+\bigg(\dfrac{S_{1}(\boldsymbol{\sigma})-N_{1}m_{1}}{(N_{1})^{1-\gamma}},\dots, \dfrac{S_{n}(\boldsymbol{\sigma})-N_{n}m_{n}}{(N_{n})^{1-\gamma}}\bigg)
	\end{equation}
where $\mathbf{W}\sim N(0,\mathbf{A}^{-1})$ and $\mathbf{A}=\mathbf{D}_{\boldsymbol{\alpha}}\mathbf{J}\mathbf{D}_{\boldsymbol{\alpha}}$ is a positive define matrix.\\ 
Then
\begin{equation}\label{tesi.lemma.multi.5}
\langle e^{\langle\mathbf{t},\mathbf{V}_{\gamma}\rangle}\rangle_{BG_{d}}=
\dfrac{e^{-\langle\mathbf{t},\mathbf{N}^{\gamma}\boldsymbol{\mu}\rangle}\displaystyle{\int_{\mathbb{R}^{n}}}\exp\Big(-\frac{N}{2}\langle\widetilde{\mathbf{J}}\mathbf{x},\mathbf{x}\rangle+\langle\mathbf{t},\mathbf{N}^{\gamma}\mathbf{x}\rangle\Big)I_{N}(\mathbf{x},\boldsymbol{\mu},d)d\mathbf{x}}{\displaystyle{\int_{\mathbb{R}^{n}}}\exp\Big(-\frac{N}{2}\langle\widetilde{\mathbf{J}}\mathbf{x},\mathbf{x}\rangle+\langle\mathbf{t},\mathbf{N}^{\gamma}\mathbf{x}\rangle\Big)I_{N}(\mathbf{x},\boldsymbol{\mu},d)d\mathbf{x}}\end{equation}
where $\langle\cdot\rangle_{BG_{d}}$ denotes the expectation value with respect to the conditional distribution (\ref{BG.condizionata}) and
\begin{equation}\label{I.grande}
 I_{N}(\mathbf{x},\boldsymbol{\mu},d)=\int_{\{\frac{\mathbf{s}}{\mathbf{N}}\in B(\boldsymbol{\mu},d)\}}\exp(\langle\mathbf{J}\boldsymbol{\alpha}\mathbf{x},\mathbf{s}\rangle+\langle\mathbf{h},\mathbf{s}\rangle)d\nu_{S}(\mathbf{s}).
\end{equation}
\end{lemma}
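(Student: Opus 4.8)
The plan is to treat Lemma \ref{lemma.multi.5} as the conditional analogue of Lemma \ref{lemma.multi.2}: I would repeat the convolution and Gaussian-transform computation of that lemma, but carry the indicator $1_{B(\boldsymbol{\mu},d)}$ and the source factor $e^{\langle\mathbf{t},\mathbf{V}_\gamma\rangle}$ through every step. First I would set $\mathbf{v}=\sqrt{\mathbf{N}}\mathbf{W}+\mathbf{S}(\boldsymbol{\sigma})$, so that the $l$-th component of $\mathbf{V}_\gamma$ is $(v_l-N_l\mu_l)/N_l^{1-\gamma}$, and recall from the proof of Lemma \ref{lemma.multi.2} that $\sqrt{\mathbf{N}}\mathbf{W}$ is Gaussian with precision matrix $\widetilde{\mathbf{A}}=\frac{1}{N}\mathbf{J}$. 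Writing the conditional Gibbs expectation as a joint average over $\mathbf{W}$ and $\boldsymbol{\sigma}$ and using their independence, the numerator of $\langle e^{\langle\mathbf{t},\mathbf{V}_\gamma\rangle}\rangle_{BG_d}$ becomes the convolution of this Gaussian with the conditional spin law (\ref{BG.condizionata}), now averaged against $e^{\langle\mathbf{t},\mathbf{V}_\gamma\rangle}$.

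Next I would complete the square exactly as in Lemma \ref{lemma.multi.2}, using $-\langle\mathbf{J}(\mathbf{v}-\mathbf{s}),(\mathbf{v}-\mathbf{s})\rangle+\langle\mathbf{J}\mathbf{s},\mathbf{s}\rangle=-\langle\mathbf{J}\mathbf{v},\mathbf{v}\rangle+2\langle\mathbf{J}\mathbf{v},\mathbf{s}\rangle$, which cancels the quadratic self-interaction of the spins and leaves only the linear coupling $\frac{1}{N}\langle\mathbf{J}\mathbf{v},\mathbf{s}\rangle$. The crucial observation is that the constraint $1_{B(\boldsymbol{\mu},d)}(\mathbf{s}/\mathbf{N})$ depends on $\mathbf{s}$ alone, so it remains inside the spin integral and is untouched by the completing-the-square; this is precisely what prevents the spin integral from factorizing into a $\cosh$ product and forces it to be recorded as $I_N$. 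Splitting $\langle\mathbf{t},\mathbf{V}_\gamma\rangle=\langle\mathbf{t},\mathbf{v}/\mathbf{N}^{1-\gamma}\rangle-\langle\mathbf{t},\mathbf{N}^\gamma\boldsymbol{\mu}\rangle$ then lets me extract the constant $e^{-\langle\mathbf{t},\mathbf{N}^\gamma\boldsymbol{\mu}\rangle}$, which accounts for the prefactor in (\ref{tesi.lemma.multi.5}).

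The final step is the change of variable $\mathbf{x}=\mathbf{v}/\mathbf{N}$ (componentwise $x_l=v_l/N_l$). Under it the Gaussian term becomes $-\frac{N}{2}\langle\widetilde{\mathbf{J}}\mathbf{x},\mathbf{x}\rangle$, the source term becomes $\langle\mathbf{t},\mathbf{N}^\gamma\mathbf{x}\rangle$, and the linear coupling becomes $\frac{1}{N}\langle\mathbf{J}\mathbf{v},\mathbf{s}\rangle=\langle\mathbf{J}\boldsymbol{\alpha}\mathbf{x},\mathbf{s}\rangle$, so the surviving spin integral is exactly $I_N(\mathbf{x},\boldsymbol{\mu},d)$ of (\ref{I.grande}). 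The Jacobian $\prod_l N_l$ and the Gaussian normalization are constants independent of $\mathbf{t}$ and cancel against the same factors in the denominator $Z_N(\mathbf{J},\mathbf{h},d)$, which is the $\mathbf{t}=\mathbf{0}$ instance of the same expression; this produces the identity (\ref{tesi.lemma.multi.5}).

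I would flag two items to check rather than genuine difficulties: the interchange of the $\mathbf{W}$- and $\mathbf{s}$-integrations (Fubini) and the finiteness of the resulting $\mathbf{x}$-integral, both of which follow from the positive-definiteness of $\mathbf{A}=\mathbf{D}_{\boldsymbol{\alpha}}\mathbf{J}\mathbf{D}_{\boldsymbol{\alpha}}$ (equivalently of $\widetilde{\mathbf{J}}$) together with the integrability already established in Proposition \ref{prop.multi}. The one place that really needs care is the bookkeeping of the three scalings — the $O(1)$ field $\mathbf{W}$, its $O(\sqrt{N})$ rescaling $\sqrt{\mathbf{N}}\mathbf{W}$, and the magnetization-scale variable $\mathbf{x}=\mathbf{v}/\mathbf{N}$ — so that the cross term $\frac{1}{N}\langle\mathbf{J}\mathbf{v},\mathbf{s}\rangle$ lands exactly on $\langle\mathbf{J}\boldsymbol{\alpha}\mathbf{x},\mathbf{s}\rangle$; this matching of the several species-dependent normalizations is the multi-species feature absent from the single-species Lemma \ref{lemma.2}, and it is where a sign or $\alpha_l$ misplacement would be easiest to make.
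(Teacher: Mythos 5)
Your proposal is correct and follows essentially the same route as the paper: the paper likewise re-runs the convolution/Gaussian-transform computation of Lemma \ref{lemma.multi.2} with the conditional measure (\ref{BG.condizionata}), observes that the indicator $1_{B(\boldsymbol{\mu},d)}$ stays inside the spin integral and yields $I_{N}$ in place of the $\cosh$ product, and then integrates $e^{\langle\mathbf{t},\cdot\rangle}$ against the resulting density and changes variables to reach (\ref{tesi.lemma.multi.5}). Your only deviation is cosmetic --- you carry the source factor through from the start and use the single substitution $\mathbf{x}=\mathbf{v}/\mathbf{N}$ instead of the paper's two-step change of variables --- and your scaling checks ($-\tfrac{N}{2}\langle\widetilde{\mathbf{J}}\mathbf{x},\mathbf{x}\rangle$, $\langle\mathbf{t},\mathbf{N}^{\gamma}\mathbf{x}\rangle$, $\langle\mathbf{J}\boldsymbol{\alpha}\mathbf{x},\mathbf{s}\rangle$) are all consistent with the stated formula.
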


\begin{proof}
By following the same proof of lemma \ref{lemma.multi.2} we have that the distribution of the random vector $V_{\gamma}$ is given by
\begin{equation*}
 \dfrac{\exp\Big(-\frac{N}{2}\Big\langle\widetilde{\mathbf{J}}\Big(\boldsymbol{\mu}+\dfrac{\mathbf{x}}{\mathbf{N}^{\gamma}}\Big),\boldsymbol{\mu}+\dfrac{\mathbf{x}}{\mathbf{N}^{\gamma}}\Big\rangle\Big)I_{N}(\boldsymbol{\mu}+\frac{\mathbf{x}}{\mathbf{N}^{\gamma}},\boldsymbol{\mu},d)d\mathbf{x}}{\displaystyle{\int_{\mathbb{R}^{n}}}\exp\Big(-\frac{N}{2}\Big\langle\widetilde{\mathbf{J}}\Big(\boldsymbol{\mu}+\frac{\mathbf{x}}{\mathbf{N}^{\gamma}}\Big),\boldsymbol{\mu}+\frac{\mathbf{x}}{\mathbf{N}^{\gamma}}\Big\rangle\Big)d\mathbf{x}}
\end{equation*}
where
\begin{equation*}
I_{N}\Big(\boldsymbol{\mu}+\frac{\mathbf{x}}{\mathbf{N}^{\gamma}},\boldsymbol{\mu},d\Big)=\int_{\{\frac{\mathbf{s}}{\mathbf{N}}\in B(\boldsymbol{\mu},d)\}}\!\!\!\exp\Big(\Big\langle \mathbf{J}\boldsymbol{\alpha}\Big(\boldsymbol{\mu}+\frac{\mathbf{x}}{\mathbf{N}^{\gamma}}\Big),\mathbf{s}\Big\rangle+\langle \mathbf{h}, \mathbf{s}\rangle\Big)d\nu_{S}(\mathbf{s}).
\end{equation*}
Thus 
\begin{align*}
\langle e^{\langle\mathbf{t},\mathbf{V}_{\gamma}\rangle}&\rangle_{BG_{d}}\nonumber\\
&=\dfrac{\displaystyle{\int_{\mathbb{R}^{n}}}\!\!e^{\langle\mathbf{t},\mathbf{x}\rangle}\exp\Big(\!-\frac{N}{2}\Big\langle\widetilde{\mathbf{J}}\Big(\boldsymbol{\mu}+\frac{\mathbf{x}}{\mathbf{N}^{\gamma}}\Big),\boldsymbol{\mu}+\frac{\mathbf{x}}{\mathbf{N}^{\gamma}}\Big\rangle\Big)I_{N}\Big(\boldsymbol{\mu}+\frac{\mathbf{x}}{\mathbf{N}^{\gamma}},\boldsymbol{\mu},d\Big)d\mathbf{x}}{\displaystyle{\int_{\mathbb{R}^{n}}}\!\!\exp\Big(\!-\frac{N}{2}\Big\langle\widetilde{\mathbf{J}}\Big(\boldsymbol{\mu}+\frac{\mathbf{x}}{\mathbf{N}^{\gamma}}\Big),\boldsymbol{\mu}+\frac{\mathbf{x}}{\mathbf{N}^{\gamma}}\Big\rangle\Big)d\mathbf{x}}.
\end{align*}
Making the change of variable
\begin{equation*}
u_{l}=\mu_{l}+\dfrac{x_{l}}{N_{l}^{\gamma}}\quad l=1,\dots,n
\end{equation*}
the statement (\ref{tesi.lemma.multi.5}) holds.
\end{proof}
\begin{lemma}\label{ultimo.lemma}
Let $\boldsymbol{\mu}^{1},\dots,\boldsymbol{\mu}^{P}$ be global maxima points of the function $f$ given by (\ref{funzionale.pressione.multi.replica}). Let $k_{1},\dots,k_{p}$ be their homogeneous type. Define
\begin{multline}\label{K.grande}
K_{N}(\mathbf{t},\delta,\boldsymbol{\mu},\theta,k)=\Big(\prod_{l=1}^{n}N_{l}\Big)^{\frac{1}{2k}}\exp(-NF-\langle\mathbf{t},\mathbf{N}^{1/2k}\boldsymbol{\mu}\rangle)\\
\times\int_{B(\boldsymbol{\mu},\delta)}\!\!\!\exp\Big(-\frac{N}{2}\langle\widetilde{\mathbf{J}}\mathbf{x},\mathbf{x}\rangle+\langle\mathbf{t},\mathbf{N}^{\gamma}\mathbf{x}\rangle\Big)I_{N}^{c}(\mathbf{x},\boldsymbol{\mu},\theta)d\mathbf{x}
\end{multline}
where
\begin{equation}\label{I.grande.complementare}
I_{N}^{c}(\mathbf{x},\boldsymbol{\mu},\theta)=\int_{\{\frac{\mathbf{s}}{\mathbf{N}}\in B^{c}(\boldsymbol{\mu},\theta)\}}\exp(\langle\mathbf{J}\boldsymbol{\alpha}\mathbf{x},\mathbf{s}\rangle+\langle\mathbf{h},\mathbf{s}\rangle)d\nu_{S}(\mathbf{s}).
\end{equation}\\
For any $\theta>0$ there exists $\epsilon>0$ such that for each $p=1,\dots,P$
\begin{equation}\label{tesi.lemma.funzione.k}
 K_{N}(\mathbf{t},\delta_{p},\boldsymbol{\mu}^{p},\theta,k_{p})=O(e^{-N\epsilon}) \quad\text{as}\;N\rightarrow\infty.
\end{equation}
\end{lemma}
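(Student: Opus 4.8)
The plan is to bound the inner integral $I_N^c(\mathbf{x},\boldsymbol{\mu}^p,\theta)$ by a large--deviation estimate in the magnetization variable, and then to show that the constraint $\mathbf{s}/\mathbf{N}\in B^c(\boldsymbol{\mu}^p,\theta)$ produces a strict exponential loss that survives the Laplace integration over $\mathbf{x}\in B(\boldsymbol{\mu}^p,\delta_p)$. Throughout I write $\gamma=1/2k_p$ and $b_l(\mathbf{x})=\sum_s\alpha_s J_{ls}x_s+h_l$.

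First I would rewrite $I_N^c$ as a sum over admissible magnetization vectors $\mathbf{m}=\mathbf{s}/\mathbf{N}$. Since $\nu_S$ assigns to each value of $\mathbf{m}$ the mass $2^{-N}\prod_l A_{m_l}$ and the exponent $\langle\mathbf{J}\boldsymbol{\alpha}\mathbf{x},\mathbf{s}\rangle+\langle\mathbf{h},\mathbf{s}\rangle$ equals $N\sum_l\alpha_l m_l b_l(\mathbf{x})$, the upper bound $A_{m_l}\le 2^{N_l}e^{-N_l\mathscr{I}(m_l)}$ of Lemma \ref{lemma.talagrand} gives
$$I_N^c(\mathbf{x},\boldsymbol{\mu}^p,\theta)\le\Big(\prod_{l=1}^n(N_l+1)\Big)\exp\Big(N\sup_{\mathbf{m}\in \overline{B^c(\boldsymbol{\mu}^p,\theta)}\cap[-1,1]^n}\Psi_{\mathbf{x}}(\mathbf{m})\Big),$$
where $\Psi_{\mathbf{x}}(\mathbf{m})=\sum_l\alpha_l\big(m_l b_l(\mathbf{x})-\mathscr{I}(m_l)\big)$ and the polynomial factor counts the admissible values of $\mathbf{m}$. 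The map $\mathbf{m}\mapsto\Psi_{\mathbf{x}}(\mathbf{m})$ is strictly concave, its unconstrained maximiser is $m_l^*(\mathbf{x})=\tanh b_l(\mathbf{x})$, and the duality $\sup_{m}(mb-\mathscr{I}(m))=\ln\cosh b$ already exploited in (\ref{talagrand.passaggio.4})--(\ref{talagrand.passaggio.6}) yields $\sup_{\mathbf{m}}\Psi_{\mathbf{x}}=\Phi(\mathbf{x})$, with $\Phi$ the function in (\ref{funzione.fi.multi}).

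Next I would introduce
$$G(\mathbf{x})=-\tfrac12\langle\widetilde{\mathbf{J}}\mathbf{x},\mathbf{x}\rangle+\sup_{\mathbf{m}\in \overline{B^c(\boldsymbol{\mu}^p,\theta)}\cap[-1,1]^n}\Psi_{\mathbf{x}}(\mathbf{m}),$$
so that the bound above collapses $K_N$ into a single Laplace integral of $e^{NG(\mathbf{x})+\langle\mathbf{t},\mathbf{N}^\gamma\mathbf{x}\rangle}$ over $B(\boldsymbol{\mu}^p,\delta_p)$, times polynomial prefactors and $e^{-NF-\langle\mathbf{t},\mathbf{N}^\gamma\boldsymbol{\mu}^p\rangle}$. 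The crucial claim is then $G(\mathbf{x})<F$ for every $\mathbf{x}$ in the compact ball $\overline{B}(\boldsymbol{\mu}^p,\delta_p)$. Since $-\tfrac12\langle\widetilde{\mathbf{J}}\mathbf{x},\mathbf{x}\rangle+\Phi(\mathbf{x})=f(\mathbf{x})$, one always has $G(\mathbf{x})\le f(\mathbf{x})\le F$. If the unconstrained maximiser $\mathbf{m}^*(\mathbf{x})$ lies in $B(\boldsymbol{\mu}^p,\theta)$, excising this neighbourhood of the unique strict maximiser strictly lowers the constrained supremum, so $G(\mathbf{x})<f(\mathbf{x})\le F$; if instead $\mathbf{m}^*(\mathbf{x})\notin B(\boldsymbol{\mu}^p,\theta)$ the constrained supremum equals $\Phi(\mathbf{x})$ and $G(\mathbf{x})=f(\mathbf{x})$, which is $<F$ because $\boldsymbol{\mu}^p$ is the only global maximiser of $f$ in $\overline{B}(\boldsymbol{\mu}^p,\delta_p)$ (here I shrink $\delta_p$ below half the minimal distance between distinct maxima). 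At $\mathbf{x}=\boldsymbol{\mu}^p$ the mean--field equations (\ref{campomedio.multi}) give $\mathbf{m}^*(\boldsymbol{\mu}^p)=\boldsymbol{\mu}^p$, the centre of the excluded ball, so the first case applies and $G(\boldsymbol{\mu}^p)<F$ as well.

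Finally, since $G$ is continuous (a supremum of a jointly continuous function over a fixed compact set) and $G<F$ on the compact ball, I obtain $\max_{\overline{B}(\boldsymbol{\mu}^p,\delta_p)}G=F-\eta$ for some $\eta>0$. Bounding $e^{N(G-F)}\le e^{-N\eta}$ and noting that $\langle\mathbf{t},\mathbf{N}^\gamma(\mathbf{x}-\boldsymbol{\mu}^p)\rangle=O(N^\gamma)=o(N)$ on the ball, the whole expression is at most $\mathrm{poly}(N)\,e^{-N\eta}e^{O(N^\gamma)}$, hence $O(e^{-N\epsilon})$ for any $\epsilon<\eta$, uniformly in $p$. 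The main obstacle is precisely the uniform strict--gap estimate $\max_{\overline{B}}G<F$: it must simultaneously use the strict concavity of $\Psi_{\mathbf{x}}$ (to gain from the excised ball near the centre) and the isolation of $\boldsymbol{\mu}^p$ as a maximiser of $f$ (to gain away from the centre), and glue these two mechanisms into a single continuous function on a compact set.
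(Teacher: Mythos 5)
Your proof is correct, but it follows a genuinely different route from the paper's. The paper bounds $I_{N}^{c}$ by first covering the event $\{\mathbf{s}/\mathbf{N}\in B^{c}(\boldsymbol{\mu}^{1},\theta)\}$ with $2n$ one-coordinate deviation events, applying the exponential Chebyshev inequality with a small tilt $\tau$ to each, and then using the mean-field equations (\ref{campomedio.multi}) together with a first-order Taylor estimate to show that, on a possibly smaller ball $B(\boldsymbol{\mu}^{1},\delta)$ with $\delta<\delta_{1}$, one gains an explicit factor $e^{-\frac{N}{2}\alpha_{1}^{2}J_{11}\tau\bar{\theta}}$ relative to $e^{N\Phi(\mathbf{x})}$; the leftover annulus $B(\boldsymbol{\mu}^{1},\delta_{1})\smallsetminus B(\boldsymbol{\mu}^{1},\delta)$ is then treated separately with the crude bound $I_{N}^{c}\leq e^{N\Phi(\mathbf{x})}$ and the decay of $f$ away from its maximum. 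You instead use the exact Cram\'er representation of $\nu_{S}$ through the counting estimate of Lemma \ref{lemma.talagrand}, which collapses $I_{N}^{c}$ into $\mathrm{poly}(N)\exp(N\sup_{K}\Psi_{\mathbf{x}})$ with $\Psi_{\mathbf{x}}(\mathbf{m})=\sum_{l}\alpha_{l}(m_{l}b_{l}(\mathbf{x})-\mathscr{I}(m_{l}))$, and then prove a single uniform strict-gap estimate $\max_{\overline{B}(\boldsymbol{\mu}^{p},\delta_{p})}G<F$ by the dichotomy ``either the entropy loses (strict concavity plus the excised ball around the unique maximiser $\tanh b_{l}(\mathbf{x})$, which at $\mathbf{x}=\boldsymbol{\mu}^{p}$ is the centre by the mean-field equations) or $f$ itself loses (isolation of $\boldsymbol{\mu}^{p}$)'', glued by continuity and compactness. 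Your version is coordinate-free, avoids the auxiliary tilt $\tau$ and the two-region split of the $\mathbf{x}$-integral, and dispatches the term $\langle\mathbf{t},\mathbf{N}^{\gamma}\mathbf{x}\rangle$ simply as $o(N)$ on a bounded ball; the price is that the rate $\epsilon$ is non-explicit (the paper's argument produces the concrete rate $\frac{1}{2}\alpha_{1}^{2}J_{11}\tau\bar{\theta}$), and you must shrink $\delta_{p}$ below the separation of the maxima so that $\boldsymbol{\mu}^{p}$ is the only global maximiser of $f$ in the closed ball — a harmless adjustment consistent with how the paper itself chooses these radii, but worth stating explicitly since the lemma is quoted with the $\delta_{p}$ fixed by (\ref{proprieta.B.multi}).
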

\begin{proof}
We prove the theorem for $p=1$. The proofs for other $p$ are similar. We observe that
\begin{align*}
 \bigg\{\bigg(\frac{S_{1}(\boldsymbol{\sigma})}{N_{1}},\dots,&\frac{S_{n}(\boldsymbol{\sigma})}{N_{n}}\bigg)\in B^{c}(\boldsymbol{\mu}^{1},\theta)\bigg\}\nonumber\\
&\subset\bigcup_{l=1}^{n}\bigg\{\bigg|\frac{S_{l}(\boldsymbol{\sigma})}{N_{l}}-\mu_{l}^{1}\bigg|\geq \bar{\theta}\bigg\}\nonumber\\
&=\bigcup_{l=1}^{n}\bigg(\bigg\{\bigg|\frac{S_{l}(\boldsymbol{\sigma})}{N_{l}}\bigg|\leq\mu_{l}^{1}-\bar{\theta}\bigg\}\cup\bigg\{\bigg|\frac{S_{l}(\boldsymbol{\sigma})}{N_{l}}\bigg|\geq\mu_{l}^{1}+\bar{\theta}\bigg\}\bigg)
\end{align*}
where $\bar{\theta}=\sqrt{n}\theta$. Thus
\begin{multline}\label{momento.te}
 I_{N}^{c}(\mathbf{x},\boldsymbol{\mu}^{1},\theta)\leq\sum_{l=1}^{n}\bigg(\int_{\{-\frac{s_{l}}{N_{l}}\geq-\mu_{l}^{1}+\bar{\theta}\}}\exp(\langle\mathbf{J}\boldsymbol{\alpha}\mathbf{x},\mathbf{s}\rangle+\langle\mathbf{h},\mathbf{s}\rangle)d\nu_{S}(\mathbf{s})\\
+\quad\int_{\{\frac{s_{l}}{N_{l}}\geq\mu_{l}^{1}+\bar{\theta}\}}\exp(\langle\mathbf{J}\boldsymbol{\alpha}\mathbf{x},\mathbf{s}\rangle+\langle\mathbf{h},\mathbf{s}\rangle)d\nu_{S}(\mathbf{s})\bigg).
\end{multline}
Consider one of the integrals of expression (\ref{momento.te})
\begin{align*}
\int_{\{\frac{s_{l}}{N_{l}}\geq\mu_{l}^{1}+\bar{\theta}\}}&\exp(\langle\mathbf{J}\boldsymbol{\alpha}\mathbf{x},\mathbf{s}\rangle+\langle\mathbf{h},\mathbf{s}\rangle)d\nu_{S}(\mathbf{s})\nonumber\\
&=\prod_{l=2}^{n}\cosh\Big(\sum_{q=1}^{n}\alpha_{q}J_{lq}x_{q}+h_{l}\Big)^{N_{l}}\nonumber\\
&\quad\times\int_{\{S_{1}(\boldsymbol{\sigma})\geq N_{1}(\mu_{1}^{1}+\bar{\theta})\}}\exp\Big(S_{1}(\boldsymbol{\sigma})\Big(\sum_{q=1}^{n}\alpha_{q}J_{1q}x_{q}+h_{1}\Big)\Big)\prod_{i\in P_{1}}d\rho(\sigma_{i}).
\end{align*}\\
By Chebishev's inequality for any $\tau>0$
\begin{align}\label{bamba1}
 &\int_{\{S_{1}(\boldsymbol{\sigma})\geq N_{1}(\mu_{1}^{1}+\bar{\theta})\}}\!\!\!\!\exp\Big(S_{1}(\boldsymbol{\sigma})\Big(\sum_{q=1}^{n}\alpha_{q}J_{1q}x_{q}+h_{1}\Big)\Big)\prod_{i\in P_{1}}d\rho(\sigma_{i})\nonumber\\
 &\leq\exp(-\alpha_{1}J_{11}\tau N_{1}(\mu_{1}^{1}+\bar{\theta}))\nonumber\\
 &\quad\times\int_{\mathbb{R}^{N_{1}}}\exp(\alpha_{1}J_{11}\tau\sum_{i\in P_{1}}\sigma_{i})\exp(\sum_{i\in P_{1}}\sigma_{i}(\alpha_{1}J_{11}x_{1}+\alpha_{2}J_{12}x_{2}+h_{1}))\nonumber\\
&=\exp(N_{1}(-\alpha_{1}J_{11}\tau(\mu_{1}^{1}+\bar{\theta})+\ln\cosh(\alpha_{1}J_{11}(x_{1}+\tau)+\alpha_{2}J_{12}x_{2}+h_{1}))).
\end{align}
By the mean field equations (\ref{campomedio.multi}) we have:
\begin{equation*}
 \frac{\partial}{\partial x_{1}}\Big(\ln\cosh(\alpha_{1}J_{11}x_{1}+\alpha_{2}J_{12}x_{2}+h_{1})\Big)(\boldsymbol{\mu}^{1})=\alpha_{1}J_{11}\mu_{1}^{1}.
\end{equation*}
Thus we can choose $\delta>0$ and $\tau>0$ sufficiently small such that $\delta<\delta_{1}$ and 
\begin{multline}\label{bamba2}
\ln\cosh(\alpha_{1}J_{11}(x_{1}+\tau)+\alpha_{2}J_{12}x_{2}+h_{1})\\\leq\ln\cosh(\alpha_{1}J_{11}x_{1}+\alpha_{2}J_{12}x_{2}+h_{1})+\alpha_{1}J_{11}\mu_{1}^{1}\tau+\frac{1}{2}\alpha_{1}J_{11}\tau\bar{\theta}
\end{multline}
for each $\mathbf{x}\in B(\boldsymbol{\mu}^{1},\delta)$. The other integrals in (\ref{momento.te}) are handled in a similar way. 
At last applying the bounds (\ref{bamba1}) and (\ref{bamba2}) to (\ref{momento.te}), for all $\mathbf{x}\in B(\boldsymbol{\mu}^{1},\delta)$ we obtain:
\begin{equation*}
I_{N}^{c}(\mathbf{x},\boldsymbol{\mu}^{1},\theta)\leq 2n\exp\Big(N\Big(-\frac{1}{2}\alpha_{1}^{2}J_{11}\tau\bar{\theta}+\Phi(\mathbf{x})\Big)\Big)
\end{equation*}
where $\Phi$ is given by (\ref{funzione.fi.multi}). Hence
\begin{align}\label{pera}
K_{N}(\mathbf{t},\delta,\boldsymbol{\mu}^{1},\theta,k_{1})&\leq 2n (N_{l})^{1/2k_{1}}\exp\Big(-NF-\langle\mathbf{t},\mathbf{N}^{1/2k_{1}}\boldsymbol{\mu}^{1}\rangle-\frac{N}{2}\alpha_{1}^{2}J_{11}\tau\bar{\theta}\Big)\nonumber\\
&\quad\times\int_{B(\boldsymbol{\mu}^{1},\delta)}\exp(Nf(\mathbf{x})+\langle\mathbf{t},\mathbf{N}^{1/2k_{1}}\mathbf{x}\rangle)d\mathbf{x}.
\end{align}
Applying the lemma \ref{lemma.multi.4} to the right-hand side of (\ref{pera}) we obtain 
\begin{equation}\label{stima1}
K_{N}(\mathbf{t},\delta,\boldsymbol{\mu}^{1},\theta,k_{1})=O(e^{-\frac{N}{2}\alpha_{1}^{2}J_{11}\tau\bar{\theta}})\quad\quad\text{as}\; N\rightarrow\infty.
\end{equation}
We now bound
\begin{align*}
K_{N}(\mathbf{t},&\delta_{1},\boldsymbol{\mu}^{1},\theta,k_{1})-K_{N}(\mathbf{t},\delta,\boldsymbol{\mu}^{1},\theta,k_{1})\nonumber\\
&= (N_{l})^{1/2k_{1}}\exp(-NF-\langle\mathbf{t},\mathbf{N}^{1/2k_{1}}\boldsymbol{\mu}^{1}\rangle)
\nonumber\\
&\quad\times\int_{B(\boldsymbol{\mu}^{1},\delta_{1})\smallsetminus B(\boldsymbol{\mu}^{1},d)}\!\!\!\!\exp\Big(-\frac{N}{2}\langle\widetilde{\mathbf{J}}\mathbf{x},\mathbf{x}\rangle+\langle\mathbf{t},\mathbf{N}^{1/2k_{1}}\mathbf{x}\rangle\Big)I_{N}^{c}(\mathbf{x},\boldsymbol{\mu}^{1},\theta)d\mathbf{x}.
\end{align*}\\
Considering the functions $I_{N}$ and $I_{N}^{c}$ given respectively by (\ref{I.grande}) and (\ref{I.grande.complementare}), it is easy to verify that:
\begin{align}\label{tortainarrivo}
I_{N}(\mathbf{x},\boldsymbol{\mu}^{1},\theta)+I_{N}^{c}(\mathbf{x},\boldsymbol{\mu}^{1},\theta)&=\exp(N\Phi(\mathbf{x})).
\end{align}
Thus:
\begin{equation*}
I_{N}^{c}(\mathbf{x},\boldsymbol{\mu}^{1},\theta)\leq\exp(N\Phi(\mathbf{x})).
\end{equation*}\\
By definition of the function $\Phi$ we get
\begin{align*}
K_{N}(\mathbf{t},\delta_{1},&\boldsymbol{\mu}^{1},\theta,k_{1})-K_{N}(\mathbf{t},\delta,\boldsymbol{\mu}^{1},\theta,k_{1})\nonumber\\
&\leq (N_{l})^{1/2k_{1}}e^{-NF-\langle\mathbf{t},\mathbf{N}^{1/2k_{1}}\boldsymbol{\mu}^{1}\rangle}\nonumber\\
&\quad\times\int_{B(\boldsymbol{\mu}^{1},\delta_{1})\smallsetminus B(\boldsymbol{\mu}^{1},d)}\!\!\!\!\exp(Nf(\mathbf{x})+\langle\mathbf{t},\mathbf{N}^{1/2k_{1}}\mathbf{x}\rangle)d\mathbf{x}.
\end{align*}\\
Making the change of variable
\begin{equation*}
u_{l}=\mu_{l}+\dfrac{x_{l}}{N_{l}^{1/2k_{1}}}\quad l=1,\dots,n
\end{equation*}
we obtain
\begin{multline*}
K_{N}(\mathbf{t},\delta_{1},\boldsymbol{\mu}^{1},\theta,k_{1})-K_{N}(\mathbf{t},\delta,\boldsymbol{\mu}^{1},\theta,k_{1})\\\leq\int_{E}\exp\bigg(N\bigg(B\Big(\frac{\mathbf{u}}{\mathbf{N}^{1/2k_{1}}};\boldsymbol{\mu}^{1}\Big)\bigg)+\langle\mathbf{t},\mathbf{u}\rangle\bigg)d\mathbf{u}
\end{multline*}
where\\
\begin{equation*}
E=\Big\{\Big|\Big|\frac{\mathbf{u}}{\mathbf{N}^{1/2k_{1}}}\Big|\Big|<\delta_{1}\Big\}\setminus\Big\{\Big|\Big|\frac{\mathbf{u}}{\mathbf{N}^{1/2k_{1}}}\Big|\Big|<\delta\Big\}.
\end{equation*}\\
Observing that
\begin{equation*}
E\subset\Big\{\Big|\Big|\frac{\mathbf{u}}{\mathbf{N}^{1/2k_{1}}}\Big|\Big|<\delta\Big\}^{c}
\end{equation*}\\
it follows that as $N\rightarrow\infty$, for some $\epsilon_{0}>0$
\begin{equation}\label{stima2}
 K_{N}(\mathbf{t},\delta_{1},\boldsymbol{\mu}^{1},\theta,k_{1})-K_{N}(\mathbf{t},\delta,\boldsymbol{\mu}^{1},\theta,k_{1})=O(e^{N\epsilon_{0}})
\end{equation}
The statement (\ref{tesi.lemma.funzione.k}) follows by (\ref{stima1}) and (\ref{stima2}). 
\end{proof}
\vspace{0.5cm}
\noindent {\it Proof of Theorem \ref{teo.multi.2}}. We give the proof for $\boldsymbol{\mu}=\boldsymbol{\mu}^{1}$. The other global maximum points are handled identically. Fix $\mathbf{t}\in\mathbb{R}^{n}$ we choose the number $\delta_{p}>0$, $p=1,\dots,P$ according to lemma \ref{lemma.multi.4}. For all $\delta\in(0,\delta_{p}]$
\begin{multline}\label{iniziamente}
\lim_{N\rightarrow\infty}e^{-\langle\mathbf{t},\mathbf{N}^{1/2k_{p}}\boldsymbol{\mu}^{p}\rangle}\Big(\prod_{l=1}^{n}N_{l}\Big)^{1/2k_{p}}e^{-NF}\int_{B(\boldsymbol{\mu}^{p},\delta)}e^{Nf(\mathbf{x})+\langle\mathbf{t},\mathbf{N}^{1/2k_{p}}\mathbf{x}\rangle}d\mathbf{x}\\
=\int_{\mathbb{R}^{n}}\exp\bigg(f^{\boldsymbol{\mu}^{p}}_{2k_{p}}\Big(\frac{\mathbf{x}}{\boldsymbol{\alpha}^{1/2k_{p}}}\Big)+\langle\mathbf{t},\mathbf{x}\rangle\bigg)d\mathbf{x}
\end{multline}\\
Since $\mathbb{R}^{n}=B(\boldsymbol{\mu}^{1},\delta_{1})\cup B^{c}(\boldsymbol{\mu}^{1},\delta_{1})$ and $I_{N}(\mathbf{x},\boldsymbol{\mu}^{1},\delta)=e^{N\Phi(\mathbf{x})}-I_{N}^{c}(\mathbf{x},\boldsymbol{\mu}^{1},\delta)$ by lemma \ref{lemma.multi.5} after have
multiplied numerator and denominator by the term $(\prod_{l=1}^{n}N_{l})^{1/2k_{1}}e^{-NF}$ we obtain
\begin{multline*}
\langle e^{\langle\mathbf{t},\mathbf{V}_{1/2k_{1}}\rangle}\rangle_{BG_{d}}\\=
\dfrac{L_{N}(\mathbf{t},\delta_{1},\boldsymbol{\mu}^{1},k_{1})-K_{N}(\mathbf{t},\delta_{1},\boldsymbol{\mu}^{1},d,k_{1})+M_{N}(\mathbf{t},\delta_{1},\boldsymbol{\mu}^{1},d,k_{1})}{L_{N}(\mathbf{0},\delta_{1},\boldsymbol{\mu}^{1},k_{1})-K_{N}(\mathbf{0},\delta_{1},\boldsymbol{\mu}^{1},d,k_{1})+M_{N}(\mathbf{0},\delta_{1},\boldsymbol{\mu}^{1},d,k_{1})}
\end{multline*}
where the random vector $\mathbf{V}_{1/2k_{1}}$ is defined by (\ref{vettorone}), the function $K_{N}$ is defined by (\ref{K.grande}),
\begin{align*}
L_{N}(\mathbf{t},\delta_{1},\boldsymbol{\mu}^{1},k_{1})&=\Big(\prod_{l=1}^{n}N_{l}\Big)^{1/2k_{1}}\exp(-NF-\langle\mathbf{t},\mathbf{N}^{1/2k_{1}}\boldsymbol{\mu}^{1}\rangle)\nonumber\\
&\quad\times\int_{B(\boldsymbol{\mu}^{1},\delta_{1})}\!\!\!\!\exp\Big(\!-\frac{N}{2}\langle\widetilde{\mathbf{J}}\mathbf{x},\mathbf{x}\rangle+\langle\mathbf{t},\mathbf{N}^{1/2k_{1}}\mathbf{x}\rangle+N\Phi(\mathbf{x})\Big)d\mathbf{x}\nonumber\\
&=\Big(\prod_{l=1}^{n}N_{l}\Big)^{1/2k_{1}}\exp(-NF-\langle\mathbf{t},\mathbf{N}^{1/2k_{1}}\boldsymbol{\mu}^{1}\rangle)\nonumber\\
&\quad\times\int_{B^{c}(\boldsymbol{\mu}^{1},\delta_{1})}\exp(Nf(\mathbf{x})+\langle\mathbf{t},\mathbf{N}^{1/2k_{1}}\mathbf{x}\rangle)d\mathbf{x}
\end{align*}
and 
\begin{align*}
M_{N}(\mathbf{t},&\delta_{1},\boldsymbol{\mu}^{1},d,k_{1})\\
&=\Big(\prod_{l=1}^{n}N_{l}\Big)^{1/2k_{1}}\exp(-NF-\langle\mathbf{t},\mathbf{N}^{1/2k_{1}}\boldsymbol{\mu}^{1}\rangle)\nonumber\\
&\quad\times\int_{B(\boldsymbol{\mu}^{1},\delta_{1})}\!\!\!\!\!\exp\Big(-\frac{N}{2}\langle\widetilde{\mathbf{J}}\mathbf{x},\mathbf{x}\rangle+\langle\mathbf{t},\mathbf{N}^{1/2k_{1}}\mathbf{x}\rangle\Big)I_{N}(\mathbf{x},\boldsymbol{\mu}^{1},d)d\mathbf{x}
\end{align*}
By lemma \ref{ultimo.lemma} (with $p=1$ and $\theta=d$) there exists $\epsilon_{1}>0$ such that
\begin{equation}\label{cavolini}
 K_{N}(\mathbf{t},\delta_{1},\boldsymbol{\mu}^{1},d,k_{1})=O(e^{-N\epsilon_{1}}).
\end{equation}
Now we prove that there exists also $\epsilon_{2}>0$ such that
\begin{equation}\label{checavoli}
 M_{N}(\mathbf{t},\delta_{1},\boldsymbol{\mu}^{1},d,k_{1})=O(e^{-N\epsilon_{2}}).
\end{equation}
Define the closet set 
\begin{equation*}
 V=\mathbb{R}^{n}-\bigcup_{p=1}^{P}B(\boldsymbol{\mu}^{p},\delta_{p}).
\end{equation*}
Then 
\begin{equation*}
 B^{c}(\boldsymbol{\mu}^{1},\delta_{1})\subset V\cup\bigcup_{p=2}^{P}B(\boldsymbol{\mu}^{p},\delta_{p}).
\end{equation*}
Thus we can write
\begin{align}\label{micaela}
M_{N}(\mathbf{t},&\delta_{1},\boldsymbol{\mu}^{1},d,k_{1})\nonumber\\
&=\Big(\prod_{l=1}^{n}N_{l}\Big)^{1/2k_{1}}\exp(-NF-\langle\mathbf{t},\mathbf{N}^{1/2k_{1}}\boldsymbol{\mu}^{1}\rangle)\nonumber\\
&\quad\times\int_{V\cup\bigcup_{p=2}^{P}B(\boldsymbol{\mu}^{p},\delta_{p})}\!\!\!\!\!\exp\Big(-\frac{N}{2}\langle\widetilde{\mathbf{J}}\mathbf{x},\mathbf{x}\rangle+\langle\mathbf{t},\mathbf{N}^{1/2k_{1}}\mathbf{x}\rangle\Big)I_{N}(\mathbf{x},\boldsymbol{\mu}^{1},d)d\mathbf{x}.
\end{align}
Since $d<\bar{\delta}$ and $||\boldsymbol{\mu}^{1}-\boldsymbol{\mu}^{p}||\geq\bar{\delta}$ for $p=2,\dots,P$we have $B(\boldsymbol{\mu}^{1},\delta_{1})\subset B^{c}(\boldsymbol{\mu}^{p},\bar{\delta}-d)$ hence for each $\mathbf{x}\in\mathbb{R}^{n}$ and $p=2,\dots,P$
\begin{equation}\label{quasiletto}
 I_{N}(\mathbf{x},\boldsymbol{\mu}^{1},d)\leq I_{N}^{c}(\mathbf{x},\boldsymbol{\mu}^{p},\bar{\delta}-d).
\end{equation}
Moreover by (\ref{tortainarrivo})
\begin{equation}\label{letto}
I_{N}(\mathbf{x},\boldsymbol{\mu}^{1},d)\leq\exp(N\Phi(\mathbf{x})).
\end{equation}
Using (\ref{quasiletto}) in the integrals over $B(\boldsymbol{\mu}^{p},\delta_{p})$, $p=2,\dots,P$ and (\ref{letto}) in the integral over $V$ of the expression (\ref{micaela}), we obtain
\begin{align}\label{ultimissime cose}
M_{N}(\mathbf{t},&\delta_{1},\boldsymbol{\mu}^{1},d,k_{1})\nonumber\\
&=\Big(\prod_{l=1}^{n}N_{l}\Big)^{1/2k_{1}}\exp(-NF-\langle\mathbf{t},\mathbf{N}^{1/2k_{1}}\boldsymbol{\mu}^{1}\rangle)\nonumber\\
&\quad\times\int_{V}\exp(Nf(\mathbf{x})+\langle\mathbf{t},\mathbf{N}^{1/2k_{1}}\mathbf{x}\rangle)d\mathbf{x}\nonumber\\
&\quad+\sum_{p=2}^{P}\exp\Big(\langle\mathbf{t},\mathbf{N}^{1/2k_{1}}(\boldsymbol{\mu}^{p}-\boldsymbol{\mu}^{1})\rangle\Big)K_{N}(\mathbf{t},\delta_{p},\boldsymbol{\mu}^{p},\bar{\delta}-d,k_{1}).
\end{align}
Applying the lemma \ref{lemma.multi.3} to the first term of the right-hand side of (\ref{ultimissime cose}) and lemma \ref{ultimo.lemma} to each term of the sum, the result (\ref{checavoli}) holds.

By (\ref{cavolini}) and (\ref{checavoli}) it follows
\begin{align*}
 \lim_{N\rightarrow\infty}\langle e^{\langle\mathbf{t},\mathbf{V}_{1/2k_{1}}\rangle}\rangle_{BG_{d}}&=\dfrac{L_{N}(\mathbf{t},\delta_{1},\boldsymbol{\mu}^{1},k_{1})}{L_{N}(\mathbf{0},\delta_{1},\boldsymbol{\mu}^{1},k_{1})}\nonumber\\\nonumber\\
&=\dfrac{\displaystyle{\int_{\mathbb{R}^{n}}}\exp\bigg(f_{2k_{1}}^{\boldsymbol{\mu}^{1}}\Big(\frac{\mathbf{x}}{\boldsymbol{\alpha}^{1/2k_{p}}}\Big)+\langle\mathbf{t},\mathbf{x}\rangle\bigg)d\mathbf{x}}{\displaystyle{\int_{\mathbb{R}^{n}}}\exp\bigg(f_{2k_{p}}^{\boldsymbol{\mu}^{1}}\Big(\frac{\mathbf{x}}{\boldsymbol{\alpha}^{1/2k_{p}}}\Big)\bigg)d\mathbf{x}}
\end{align*}
where in the last identity we use (\ref{iniziamente}). By the assumption on the random vector $W$ the theorem \ref{teo.multi.2} is proved. 
\clearpage{\pagestyle{empty}\cleardoublepage}
\chapter[The Inverse Problem]{The Inverse Problem}
\lhead[\fancyplain{}{\bfseries\thepage}]{\fancyplain{}{\bfseries\rightmark}}
So far we have seen that once the probability distribution of a configuration of spins is assigned it is possible to compute the Gibbs state in the thermodynamic limit of any observable of interest and to related them to the parameters of the model.

Now we want to analyze the inverse problem, that is the determination of the parameters of a Boltzmann-Gibbs's distribution knowing the average value and the correlations of the spins at the equilibrium.\\
 
We solve the inverse problem for the Curie-Weiss model as well as for its multi-species generalization with naive-mean-field method. This technique is used in \cite{tanaka1998mean,roudi2009ising,roudi2009statistical} to solve the inverse problem for the Ising model.
\section{Inverse problem for the curie-Weiss model}
We recall that the model of Curie-Weiss is defined by the Hamiltonian
\begin{equation*}
H_{N}(\boldsymbol{\sigma})=-\frac{J}{2N}\sum_{i,j=1}^{N}\sigma_{i}\sigma_{j}-h\sum_{i=1}^{N}\sigma_{i}
\end{equation*}
and the distribution 
\begin{equation}\label{misura.BG.replica}
P_{N,J,h}\{\boldsymbol{\sigma}\}=\frac{\exp(- H_{N}(\boldsymbol{\sigma}))}{Z_{N}(J,h)}\prod_{i=1}^{N}d\rho(\sigma_{i})
\end{equation}
where
\begin{equation}\label{ro.curie.replica}
\rho(x)=\frac{1}{2}\Big(\delta(x-1)+\delta(x+1)\Big).
\end{equation}\\
By (\ref{ro.curie.replica}) and the definition of magnetization the distribution (\ref{misura.BG.replica}) can be written as
\begin{align}\label{BG.curie}
P_{N,J,h}\{\boldsymbol{\sigma}\}&=\dfrac{\exp(-H_{N}(\boldsymbol{\sigma}))}{\sum\limits_{\boldsymbol{\sigma}\in\Omega_{N}}\exp(-H_{N}(\boldsymbol{\sigma}))}\nonumber\\
&=\dfrac{\exp\bigg(N\Big(\frac{J}{2}m_{N}^{2}(\boldsymbol{\sigma})+hm_{N}(\boldsymbol{\sigma})\Big)\bigg)}{\sum\limits_{\boldsymbol{\sigma}\in\Omega_{N}}\exp\bigg(N\Big(\frac{J}{2}m_{N}^{2}(\boldsymbol{\sigma})+hm_{N}(\boldsymbol{\sigma})\Big)\bigg)}
\end{align}\\
where $\Omega_{N}=\{-1,1\}^{N}$.

Since $\langle m_{N}(\boldsymbol{\sigma})\rangle_{BG}=\langle\sigma_{i}\rangle_{BG}$ for each $i=1,\dots,N$ to solve the inverse problem we have to know the average value and the variance of the magnetization with respect to the Boltzman-Gibbs measure (\ref{BG.curie}).\\ 

As $h\neq 0$ and $J>0$ or $h=0$ and $J<1$, by theorem \ref{teorema.1} the following equality holds
\begin{equation}\label{identita}
\lim_{N\rightarrow\infty}\langle m_{N}(\boldsymbol{\sigma})\rangle_{BG}=\mu
\end{equation}\\
where $\mu$ is a solution of the mean-field equation (\ref{campo.medio}) that maximaze the function $f$ given by (\ref{funzione.f.curie}). Differentiating the identity (\ref{identita}) with respect to $h$ we obtain:
\begin{equation*}
\lim_{N\rightarrow\infty}\frac{\partial}{\partial h}\langle m_{N}(\boldsymbol{\sigma})\rangle_{BG}=\chi
\end{equation*}\newpage
\noindent where by definition $\chi=\partial \mu/\partial h$ and\\
\begin{align}\label{chi.1}
\frac{\partial}{\partial h}\langle m_{N}(\boldsymbol{\sigma})\rangle_{BG} &=\frac{\partial}{\partial h}\bigg(\dfrac{\sum_{\boldsymbol{\sigma}\in\Omega_{N}}m_{N}(\boldsymbol{\sigma})\exp(-H_{N}(\boldsymbol{\sigma}))}{\sum_{\boldsymbol{\sigma}\in\Omega_{N}}\exp(- H_{N}(\boldsymbol{\sigma}))}\bigg)\nonumber\\\nonumber\\
&=N\dfrac{\sum_{\boldsymbol{\sigma}\in\Omega_{N}}m_{N}^{2}(\boldsymbol{\sigma})\exp(-H_{N}(\boldsymbol{\sigma}))}{\sum_{\boldsymbol{\sigma}\in\Omega_{N}}\exp(- H_{N}(\boldsymbol{\sigma}))}\nonumber\\ &\quad -N\bigg(\dfrac{\sum_{\boldsymbol{\sigma}\in\Omega_{N}}m_{N}(\boldsymbol{\sigma})\exp(-H_{N}(\boldsymbol{\sigma}))}{\sum_{\boldsymbol{\sigma}\in\Omega_{N}}\exp(- H_{N}(\boldsymbol{\sigma}))}\bigg)^{2}\nonumber\\\nonumber\\
&=N\Big(\langle m_{N}^{2}(\boldsymbol{\sigma})\rangle_{BG}-\langle m_{N}(\boldsymbol{\sigma})\rangle_{BG}^{2}\Big).
\end{align}\\
On the other hand we have seen that 
\begin{equation}\label{chi.2}
 \chi=\dfrac{1-\mu^{2}}{1-J(1-\mu^{2})}.
\end{equation}
Thus by (\ref{chi.1}) and (\ref{chi.2}) in the thermodynamic limit we can compute the parameter $J$ from the average value and the variance of the magnetization:\\
\begin{equation}\label{J.naive}
 J=\frac{1}{1-\langle m_{N}(\boldsymbol{\sigma})\rangle_{BG}^{2}}-\frac{1}{N\Big(\langle m_{N}^{2}(\boldsymbol{\sigma})\rangle_{BG}-\langle m_{N}(\boldsymbol{\sigma})\rangle_{BG}^{2}\Big)}.
\end{equation}\\
Determinate $J$, we obtain an espression for the field $h$, in the thermodynamic limit, inverting the mean-field equation 
\begin{equation}\label{h.naive}
h=\tanh^{-1}(\langle m_{N}(\boldsymbol{\sigma})\rangle_{BG})-J\langle m_{N}(\boldsymbol{\sigma})\rangle_{BG}
\end{equation}
where $J$ is given by (\ref{J.naive}).
This solve the inverse problem for the Curie-Weiss model as $h\neq 0$ and $J>0$ or $h=0$ and $J<1$.\\

We have seen that if $h=0$ and $J>1$ the function $f$ reaches the maximum in two different points $\pm\mu_{0}$ both of maximal type. Thus the identity (\ref{identita}) is not verified. In spite of it we can solve the inverse problem observing that there exixts $\epsilon>0$ such that whenever $m_{N}(\boldsymbol{\sigma})\in(\pm\mu_{0}-\epsilon,\pm\mu_{0}+\epsilon)$ the following holds
\begin{equation}\label{identita.2}
\lim_{N\rightarrow\infty}\langle m_{N}(\boldsymbol{\sigma})\rangle_{BG}=\pm\mu_{0}
\end{equation}\\
and then applying to (\ref{identita.2}) the same procedure we have shown.\\

The inverse problem is very useful when we deal with the phenomenological test of the Curie-Weiss model. In this case we would like to estimate the unknown parameters $J$ and $h$ of the distribution of Boltzmann-Gibbs from a sample of empirical data. One possibility to reach this aim is to use the maximum likelihood estimation. This method determinates the parameters that maximize the probability to obtain the given sample under the condition that the sample is the realization of random variables indipendent and identically distributed.

Since the spins of a single configuration $\boldsymbol{\sigma}$ are dependent, to apply the maximum likelihood estimation we have to consider a sample constituted by $M$ configurations of spins $\boldsymbol{\sigma}^{(1)},\dots,\boldsymbol{\sigma}^{(M)}$ indipendent and identically distribuited according to the Boltzmann-Gibbs measure (\ref{BG.curie}). Our aim is to maximize with respect to $J$ and $h$ the joint probability of the sample\\
\begin{equation*}
 L(J,h)=P_{N,J,h}\Big\{\boldsymbol{\sigma}^{(1)},\dots,\boldsymbol{\sigma}^{(M)}\Big\}.
\end{equation*}\\
The function $L(J,h)$ is called maximum likelihood function. Since the sample is indipendent and identically distributed we have\\
\begin{align*}
L(J,h)&=\prod_{m=1}^{M}P_{N,J,h}\Big\{\boldsymbol{\sigma}^{(m)}\Big\}\nonumber\\
&=\prod_{m=1}^{M}\frac{\exp(- H_{N}(\boldsymbol{\sigma}^{(m)}))}{\sum_{\boldsymbol{\sigma}\in\Omega_{N}}\exp(- H_{N}(\boldsymbol{\sigma}))}.
\end{align*}\\
To maximize the function $L(J,h)$ we should compute the derivative of a product. Since a function and its logarithm reach the maximum in the same point we consider the logarithm of the maximum likelihood function
\begin{equation*}
\ln L(J,h)=\sum_{m=1}^{M}\bigg(-H_{N}(\boldsymbol{\sigma}^{(m)})-\sum_{\boldsymbol{\sigma}\in\Omega_{N}}\exp(- H_{N}(\boldsymbol{\sigma})\bigg)
\end{equation*}
and differentiate it with respect to $h$ and $J$:
\begin{align*}
\frac{\partial\ln L(J,h)}{\partial h}&=\sum_{m=1}^{M}\bigg(Nm_{N}(\boldsymbol{\sigma}^{(m)})-N\dfrac{\sum_{\boldsymbol{\sigma}\in\Omega_{N}}\exp(-H_{N}(\boldsymbol{\sigma}))m_{N}(\boldsymbol{\sigma})}{\sum_{\boldsymbol{\sigma}\in\Omega_{N}}\exp(- H_{N}(\boldsymbol{\sigma}))}\bigg)\nonumber\\
&=N\sum_{m=1}^{M}\bigg(m_{N}(\boldsymbol{\sigma}^{(m)})-\langle m_{N}(\boldsymbol{\sigma})\rangle_{BG}\bigg)\\\nonumber\\
\frac{\partial\ln L(J,h)}{\partial J}&=\sum_{m=1}^{M}\bigg(\frac{N}{2}m_{N}^{2}(\boldsymbol{\sigma}^{(m)})-\frac{N}{2}\dfrac{\sum_{\boldsymbol{\sigma}}\exp(-H_{N}(\boldsymbol{\sigma}))m_{N}^{2}(\boldsymbol{\sigma})}{\sum_{\boldsymbol{\sigma}}\exp(- H_{N}(\boldsymbol{\sigma}))}\bigg)\nonumber\\
&=\sum_{m=1}^{M}\bigg(\frac{N}{2}m_{N}^{2}(\boldsymbol{\sigma}^{(m)})- \frac{N}{2}\langle m_{N}^{2}(\boldsymbol{\sigma})\rangle_{BG}\bigg).
\end{align*}
These derivatives are equal to zero as the following equalities hold
\begin{equation}\label{risultati.verosimiglianza}
\begin{cases}
\langle m_{N}(\boldsymbol{\sigma})\rangle_{BG}=\dfrac{1}{M}\sum\limits_{m=1}^{M}m_{N}(\boldsymbol{\sigma}^{(m)})
\\\\
\langle m_{N}^{2}(\boldsymbol{\sigma})\rangle_{BG} =\dfrac{1}{M}\sum\limits_{m=1}^{M}m_{N}^{2}(\boldsymbol{\sigma}^{(m)}).
\end{cases}
\end{equation}
Thus the function $L(J,h)$ reaches its maximum when the average value and the variance of the magnetization are calculated from the data according to (\ref{risultati.verosimiglianza}). Once we have compute these values, appling the procedure explained above to solve the inverse problem we obtain that the maximum likelihood estimators for $J$ and $h$ are given respectively by (\ref{J.naive}) and (\ref{h.naive}) where the average value and the variance of the magnetization are computed from the data according to (\ref{risultati.verosimiglianza}). 

\section{Inverse problem for the multi-species mean-field model}
The inverse problem for the multi-species mean field model is handled in a similar way.
Suppose to know the average value and the correlations of the magnetizations of a model defined by Hamiltonian
\begin{equation*}
H_{N}(\boldsymbol{\sigma})=-N\Big(\frac{1}{2}\sum\limits_{l, s=1}^{n}\alpha_{l}\alpha_{s}J_{ls}m_{l}(\boldsymbol{\sigma})m_{s}(\boldsymbol{\sigma})+\sum\limits_{l=1}^{n}\alpha_{l}h_{l}m_{l}(\boldsymbol{\sigma})\Big)
\end{equation*}
and distribution 
\begin{equation}\label{misura.BG.multi.replica}
P_{N,\mathbf{J},\mathbf{h}}\{\boldsymbol{\sigma}\}=\frac{\exp(-H_{N}(\boldsymbol{\sigma}))}{Z_{N}(\mathbf{J},\mathbf{h})}\prod\limits_{i=1}^{N}d\rho(\sigma_{i})
\end{equation}
where $\rho$ is given by (\ref{ro.curie.replica}). We recall that by the definition of the measure $\rho$ the distribution (\ref{misura.BG.multi.replica}) can be write as:
\begin{equation}
P_{N,\mathbf{J},\mathbf{h}}\{\boldsymbol{\sigma}\}=\dfrac{\exp(-H_{N}(\boldsymbol{\sigma}))}{\sum\limits_{\boldsymbol{\sigma}\in\Omega_{N}}\exp(-H_{N}(\boldsymbol{\sigma}))}.
\end{equation}
If the function $f$ defined in (\ref{funzionale.pressione.multi}) has a unique maximum point $\boldsymbol{\mu}=(\mu_{1},\dots,\mu_{n})$ of maximal type, by theorem \ref{teo.multi.1} the following identities hold:
\begin{equation}\label{identita.multi}
\lim_{N\rightarrow\infty}\langle m_{l}(\boldsymbol{\sigma})\rangle_{BG}=\mu_{l}\quad l=1,\dots,n.
\end{equation}
Differentiating the identities (\ref{identita.multi}) with respect to $h_{s}$, $s=1,\dots,n$ we obtain
\begin{equation*}
 \lim_{N\rightarrow\infty}\frac{\partial }{\partial h_{s}}\langle m_{l}(\boldsymbol{\sigma})\rangle_{BG}=\chi_{ls}\quad l,s=1,\dots,n
\end{equation*}
where $\chi_{ls}$ are the elements of susceptibility matrix. In particular
\begin{align*}
\chi_{ls}=\frac{\partial\mu_{l}}{\partial h_{s}}&=\frac{\partial}{\partial h_{s}}\Big(\tanh\Big(h_{l}+\sum\limits_{p=1}^{n}\alpha_{p}J_{lp}\mu_{p}\Big)\Big)\nonumber\\
&=\Big(1-\tanh^{2}\Big(h_{l}+\sum\limits_{p=1}^{n}\alpha_{p}J_{lp}\mu_{p}\Big)\Big)\Big(\delta_{ls}+\sum\limits_{p=1}^{n}\alpha_{p}J_{lp}\frac{\partial\mu_{p}}{\partial h_{s}}\Big)\nonumber\\
&=(1-\mu_{l}^{2})\Big(\delta_{ls}+\sum\limits_{p=1}^{n}\alpha_{p}J_{lp}\chi_{ps}\Big)
\end{align*}
where $\delta_{ls}$ denots the delta of Dirac picked in $l=s$. Thus the susceptibility matrix can be write as:
\begin{equation}\label{chi.matrice}
\boldsymbol{\chi}=\mathbf{P}(\mathbf{I}+\mathbf{J}\mathbf{D}_{\boldsymbol{\alpha}}\mathbf{D}_{\boldsymbol{\alpha}}\boldsymbol{\chi})
\end{equation}\\
 where $\mathbf{P}=diag\{1-\mu_{1}^{2},\dots,1-\mu_{n}^{2}\}$, $\mathbf{D}_{\boldsymbol{\alpha}}=diag\{\sqrt{\alpha_{1}},\dots,\sqrt{\alpha_{n}}\}$, $\mathbf{I}$ is the identity matrix and $\mathbf{J}$ is the reduce interaction matrix.
Moreover, for each $l,s=1,\dots,n$
\begin{align}\label{chi.multi.2}
\frac{\partial}{\partial h_{s}}\langle &m_{l}(\boldsymbol{\sigma})\rangle_{BG}\nonumber\\ 
&=\frac{\partial}{\partial h_{s}}\bigg(\dfrac{\sum_{\boldsymbol{\sigma}\in\Omega_{N}}m_{l}(\boldsymbol{\sigma})e^{-H_{N}(\boldsymbol{\sigma})}}{\sum_{\boldsymbol{\sigma}\in\Omega_{N}}e^{-H_{N}(\boldsymbol{\sigma})}}\bigg)\nonumber\\\nonumber\\
&=N_{s}\dfrac{\sum_{\boldsymbol{\sigma}\in\Omega_{N}}m_{l}(\boldsymbol{\sigma})m_{s}(\boldsymbol{\sigma})e^{-H_{N}(\boldsymbol{\sigma})}}{\sum_{\boldsymbol{\sigma}\in\Omega_{N}}e^{-H_{N}(\boldsymbol{\sigma})}}\nonumber\\
&\quad-N_{s}\bigg(\dfrac{\sum_{\boldsymbol{\sigma}\in\Omega_{N}}m_{l}(\boldsymbol{\sigma})e^{-H_{N}(\boldsymbol{\sigma})}}{\sum_{\boldsymbol{\sigma}\in\Omega_{N}}e^{-H_{N}(\boldsymbol{\sigma})}}\bigg)\bigg(\dfrac{\sum_{\boldsymbol{\sigma}\in\Omega_{N}}m_{s}(\boldsymbol{\sigma})e^{-H_{N}(\boldsymbol{\sigma})}}{\sum_{\boldsymbol{\sigma}\in\Omega_{N}}e^{-H_{N}(\boldsymbol{\sigma})}}\bigg)\nonumber\\\nonumber\\
&=N_{s}\Big(\langle m_{l}(\boldsymbol{\sigma})m_{s}(\boldsymbol{\sigma})\rangle_{BG}-\langle m_{l}(\boldsymbol{\sigma})\rangle_{BG}\langle m_{s}(\boldsymbol{\sigma})\rangle_{BG}\Big).
\end{align}\\
Computing the elements of $\boldsymbol{\chi}$ according to (\ref{chi.multi.2}) by (\ref{chi.matrice}) we get an expression of the reduced intaraction matrix $\mathbf{J}$ related to the average value and the correlations of the magnetizations:
\begin{equation}\label{J.inverso.multi}
 \mathbf{J}=(\mathbf{P}^{-1}-\boldsymbol{\chi}^{-1})\mathbf{D}_{\boldsymbol{\alpha}}^{-1}\mathbf{D}_{\boldsymbol{\alpha}}^{-1}.
\end{equation}
Once the matrix $\mathbf{J}$ is determinated, the elements of the vector $\mathbf{h}$ are obtained inverting the mean field equations (\ref{campomedio.multi})
\begin{equation}\label{h.inverso.multi}
h_{l}=\tanh^{-1}(\mu_{l})-\sum\limits_{s=1}^{n}\;\alpha_{s}J_{ls}\mu_{s}\quad l=1,\dots,n.
\end{equation}

In the previous section we have seen that the inverse problem together with the maximum likelihood estimation avoids to measure the parameters of a Boltzmann-Gibbs's distribution from a suitable sample of empirical data. 

Consider a sample of $M$ configurations of spins $\boldsymbol{\sigma}^{(1)},\dots,\boldsymbol{\sigma}^{(M)}$ indipendent and identically distribuited according to the Boltzmann-Gibbs measure (\ref{misura.BG.multi.replica}). 
The maximum likelihood function related to the sample is
\begin{align}\label{verosimiglianza.multi}
L(\mathbf{J},\mathbf{h})&=P_{N,\mathbf{J},\mathbf{h}}\Big\{\boldsymbol{\sigma}^{(1)},\dots,\boldsymbol{\sigma}^{(M)}\Big\}\nonumber\\
&=\prod_{m=1}^{M}P_{N,\mathbf{J},\mathbf{h}}\Big\{\boldsymbol{\sigma}^{(m)}\Big\}\nonumber\\
&=\prod_{m=1}^{M}\frac{\exp\Big(-H_{N}(\boldsymbol{\sigma}^{(m)})\Big)}{Z_{N}(\mathbf{J},\mathbf{h})}.
\end{align}
Differentiating the logarithm of the likelihood function (\ref{verosimiglianza.multi})
\begin{equation*}
\ln L(\mathbf{J},\mathbf{h})=\sum_{m=1}^{M}\bigg(-H_{N}(\boldsymbol{\sigma}^{(m)})-\ln Z_{N}(\mathbf{J},\mathbf{h})\bigg)
\end{equation*}
with respect to $h_{l}$ and $J_{ls}$, $l,s=1,\dots,n$ we obtain
\begin{align*}
\frac{\partial\ln L(\mathbf{J},\mathbf{h})}{\partial h_{l}}&=N_{l}\sum_{m=1}^{M}\bigg(m_{l}(\boldsymbol{\sigma}^{(m)})-\dfrac{\sum_{\boldsymbol{\sigma}}\exp(-H_{N}(\boldsymbol{\sigma}))m_{l}(\boldsymbol{\sigma})}{Z_{N}(\mathbf{J},\mathbf{h})}\bigg)\\
&=N_{l}\sum_{m=1}^{M}\bigg(m_{l}(\boldsymbol{\sigma}^{(m)})-\langle m_{l}(\boldsymbol{\sigma})\rangle_{BG}\bigg)\\\\
\frac{\partial\ln L(\mathbf{J},\mathbf{h})}{\partial J_{ls}}&=\frac{N\alpha_{l}\alpha_{s}}{2}\sum_{m=1}^{M}\bigg(m_{l}(\boldsymbol{\sigma}^{(m)})m_{s}(\boldsymbol{\sigma}^{(m)})\\
&\;\;\qquad\qquad\qquad\qquad-\dfrac{\sum_{\boldsymbol{\sigma}}\exp(-H_{N}(\boldsymbol{\sigma}))m_{l}(\boldsymbol{\sigma})m_{s}(\boldsymbol{\sigma})}{Z_{N}(\mathbf{J},\mathbf{h})}\bigg)\\
&=\frac{N\alpha_{l}\alpha_{s}}{2}\sum_{m=1}^{M}\bigg(m_{l}(\boldsymbol{\sigma}^{(m)})m_{s}(\boldsymbol{\sigma}^{(m)})-\langle m_{l}(\boldsymbol{\sigma})m_{s}(\boldsymbol{\sigma})\rangle_{BG}\bigg).
\end{align*}\newpage
These derivatives are equal to zero as the following equalities hold\\
\begin{align}\label{ultimo.cap.5}
\begin{cases}
\langle m_{l}(\boldsymbol{\sigma})\rangle_{BG}=\dfrac{1}{M}\sum\limits_{m=1}^{M}m_{l}(\boldsymbol{\sigma}^{(m)})\quad l=1,\dots,n\\\\
\langle m_{l}(\boldsymbol{\sigma})m_{s}(\boldsymbol{\sigma})\rangle_{BG}=\dfrac{1}{M}\sum\limits_{m=1}^{M}m_{l}(\boldsymbol{\sigma}^{(m)})m_{s}(\boldsymbol{\sigma}^{(m)})\quad l,s=1,\dots,n.
\end{cases}
\end{align}\\
Now applying the inverse problem we obtain that the maximum likelihood estimator for the reduce interaction matrix $\mathbf{J}$ and the vector field $\mathbf{h}$ are given respectively by (\ref{J.inverso.multi}) and (\ref{h.inverso.multi}) where the average value and the correlations of the magnetizations are compute from the empirical data according to (\ref{ultimo.cap.5}).
\clearpage{\pagestyle{empty}\cleardoublepage}

\chapter*{Acknowledgements}
\addcontentsline{toc}{chapter}{Acknowledgements} 
I thank A. Parmeggiani for his important help. I also thank P. Contucci, F. Unguendoli, C. Giardina, C. Giberti, C.Vernia, R. Burioni, E. Agliari, S. Loreti and A. Bianchi for useful discussions.
\clearpage{\pagestyle{empty}\cleardoublepage}
\clearpage{\pagestyle{empty}\cleardoublepage}
\nocite{*}
\addcontentsline{toc}{chapter}{Bibliography}
\rhead[\fancyplain{}{\bfseries
Bibliography}]{\fancyplain{}{\bfseries\thepage}}
\lhead[\fancyplain{}{\bfseries\thepage}]{\fancyplain{}{\bfseries
Bibliography}} 
\bibliography{bibliodott}{}
\bibliographystyle{amsalpha}

\end{document}